\documentclass[letterpaper,11pt]{article}
\usepackage{authblk}
\usepackage[margin=1in]{geometry}
\usepackage{makecell}
\usepackage{hyperref}
\usepackage{booktabs}
\usepackage{enumerate}
\usepackage{newunicodechar}
\usepackage{silence}
\newunicodechar{♢}{\tikz \node[inner sep=1.5,draw,diamond] {};}
\newunicodechar{☆}{\tikz \node[inner sep=1,draw,star,star point ratio=2] {};}
\newunicodechar{△}{\triangle}
\newunicodechar{⬜}{\kern 0.5pt\tikz \node[inner sep=1.7,draw,regular polygon,regular polygon sides=4] {};\kern 0.5pt}
\newunicodechar{◯}{\tikz[baseline=-3pt] \node[inner sep=1.7,draw,cloud,cloud puffs=4,cloud puff arc=190] {};}

\newunicodechar{⊥}{\bot}
\newunicodechar{•}{\item}
\newunicodechar{✓}{\checkmark}
\newunicodechar{✗}{\xmark}\def\xmark{\ding{55}}%
\newunicodechar{…}{\dots}
\newunicodechar{≔}{\coloneqq}
\newunicodechar{⁻}{^-}
\newunicodechar{⁺}{^+}
\newunicodechar{₋}{_-}
\newunicodechar{₊}{_+}
\newunicodechar{ℓ}{\ell}
\newunicodechar{•}{\item}
\newunicodechar{…}{\dots}
\newunicodechar{≔}{\coloneqq}
\newif\ifnormopen\normopenfalse
\newunicodechar{‖}{\ifnormopen\rVert\normopenfalse\else\rVert\normopentrue\fi}
\newunicodechar{≤}{\leq}
\newunicodechar{≥}{\geq}
\newunicodechar{≰}{\nleq}
\newunicodechar{≱}{\ngeq}
\newunicodechar{⊕}{\oplus}
\newunicodechar{⊗}{\otimes}
\newunicodechar{≠}{\neq}
\newunicodechar{¬}{\neg}
\newunicodechar{≡}{\equiv}
\newunicodechar{₀}{_0}
\newunicodechar{₁}{_1}
\newunicodechar{₂}{_2}
\newunicodechar{₃}{_3}
\newunicodechar{₄}{_4}
\newunicodechar{₅}{_5}
\newunicodechar{₆}{_6}
\newunicodechar{₇}{_7}
\newunicodechar{₈}{_8}
\newunicodechar{₉}{_9}
\newunicodechar{ₚ}{_p}
\newunicodechar{ₙ}{_n}
\newunicodechar{ₐ}{_a}
\newunicodechar{ₑ}{_e}
\newunicodechar{ₕ}{_h}
\newunicodechar{ₖ}{_k}
\newunicodechar{ₗ}{_l}
\newunicodechar{ₘ}{_m}
\newunicodechar{ₛ}{_s}
\newunicodechar{ₜ}{_t}
\newunicodechar{ₓ}{_x}
\newunicodechar{⁰}{^0}
\newunicodechar{¹}{^1}
\newunicodechar{²}{^2}
\newunicodechar{³}{^3}
\newunicodechar{⁴}{^4}
\newunicodechar{⁵}{^5}
\newunicodechar{⁶}{^6}
\newunicodechar{⁷}{^7}
\newunicodechar{⁸}{^8}
\newunicodechar{⁹}{^9}
\newunicodechar{ⁿ}{^n}
\newunicodechar{∈}{\in}
\newunicodechar{∉}{\notin}
\newunicodechar{⊂}{\subset}
\newunicodechar{⊃}{\supset}
\newunicodechar{⊆}{\subseteq}
\newunicodechar{⊇}{\supseteq}
\newunicodechar{⊄}{\nsubset}
\newunicodechar{⊅}{\nsupset}
\newunicodechar{⊈}{\nsubseteq}
\newunicodechar{⊉}{\nsupseteq}
\newunicodechar{∪}{\cup}
\newunicodechar{∩}{\cap}
\newunicodechar{∀}{\forall}
\newunicodechar{∃}{\exists}
\newunicodechar{∄}{\nexists}
\newunicodechar{∨}{\vee}
\newunicodechar{∧}{\wedge}
\newunicodechar{⊼}{\bar{\wedge}}
\newunicodechar{⊽}{\bar{\vee}}
\newunicodechar{⌊}{\lfloor}
\newunicodechar{⌋}{\rfloor}
\newunicodechar{⌈}{\lceil}
\newunicodechar{⌉}{\rceil}
\newunicodechar{·}{\cdot}
\newunicodechar{∘}{\circ}
\newunicodechar{×}{\times}
\newunicodechar{↑}{\uparrow}
\newunicodechar{↓}{\downarrow}
\newunicodechar{→}{\rightarrow}
\newunicodechar{←}{\leftarrow}
\newunicodechar{⇒}{\Rightarrow}
\newunicodechar{⇐}{\Leftarrow}
\newunicodechar{↔}{\leftrightarrow}
\newunicodechar{⇔}{\Leftrightarrow}
\newunicodechar{↦}{\mapsto}
\newunicodechar{∅}{\emptyset}
\newunicodechar{∞}{\infty}
\newunicodechar{≅}{\cong}
\newunicodechar{≈}{\approx}
\newunicodechar{ℓ}{\ell}
\newunicodechar{↪}{\hookrightarrow}
\newunicodechar{⟨}{\langle}
\newunicodechar{⟩}{\rangle}

\newunicodechar{ℂ}{\mathbb{C}}
\newunicodechar{ℍ}{\mathbb{H}}
\newunicodechar{ℕ}{\mathbb{N}}
\newunicodechar{ℙ}{\mathbb{P}}
\newunicodechar{ℚ}{\mathbb{Q}}
\newunicodechar{ℝ}{\mathbb{R}}
\newunicodechar{ℤ}{\mathbb{Z}}

\newunicodechar{α}{\alpha}
\newunicodechar{β}{\beta}
\newunicodechar{γ}{\gamma}
\newunicodechar{Γ}{\Gamma}
\newunicodechar{δ}{\delta}
\newunicodechar{Δ}{\Delta}
\newunicodechar{ε}{\varepsilon}
\newunicodechar{ζ}{\zeta}
\newunicodechar{η}{\eta}
\newunicodechar{θ}{\theta}
\newunicodechar{Θ}{\Theta}
\newunicodechar{ι}{\iota}
\newunicodechar{κ}{\kappa}
\newunicodechar{λ}{\lambda}
\newunicodechar{Λ}{\Lambda}
\newunicodechar{μ}{\mu}
\newunicodechar{ν}{\nu}
\newunicodechar{ξ}{\xi}
\newunicodechar{Ξ}{\Xi}
\newunicodechar{π}{\pi}
\newunicodechar{Π}{\Pi}
\newunicodechar{ρ}{\rho}
\newunicodechar{σ}{\sigma}
\newunicodechar{Σ}{\Sigma}
\newunicodechar{τ}{\tau}
\newunicodechar{υ}{\upsilon}
\newunicodechar{ϒ}{\Upsilon}
\newunicodechar{φ}{\varphi}
\newunicodechar{ϕ}{\phi}
\newunicodechar{Φ}{\Phi}
\newunicodechar{χ}{\chi}
\newunicodechar{ψ}{\psi}
\newunicodechar{Ψ}{\Psi}
\newunicodechar{ω}{\omega}
\newunicodechar{Ω}{\Omega}

\newunicodechar{ℋ}{\mathcal{H}}

\usepackage[english]{babel}
\usepackage{xcolor}
\usepackage[vlined,linesnumbered]{algorithm2e} %
\usepackage{amsmath, amsthm, amssymb, mathtools, thmtools}
\usepackage[textsize=footnotesize]{todonotes}
\usepackage{hyperref}
\usepackage[capitalise]{cleveref}

\crefname{enumi}{property}{properties}

\newcommand{\calS}{\mathcal{S}}
\newcommand{\calT}{\mathcal{T}}
\newcommand{\calU}{\mathcal{U}}
\newcommand{\U}{\calU}
\newcommand{\calX}{\mathcal{X}}

\newcommand{\F}{\mathbb{F}}

\newcommand{\OO}{\mathcal{O}}
\newcommand{\tOO}{\tilde{\OO}}
\newcommand{\oo}{o}

\newcommand{\OPT}{\text{\upshape\textsc{opt}}}

\newcommand{\Prp}[1]{\Pr\left[#1\right]}
\newcommand{\Ep}[1]{\mathbb{E}\left[#1\right]}

\newcommand{\Var}[1]{\operatorname{Var}\left[#1\right]}

\def\D{\mathcal{D}}

\renewcommand{\epsilon}{\varepsilon}

\DeclareMathOperator{\poly}{poly}
\DeclareMathOperator{\polylog}{polylog}

\DeclareMathOperator{\row}{row}
\DeclareMathOperator{\column}{column}
\DeclareMathOperator{\rowg}{rGroup}
\DeclareMathOperator{\columng}{cGroup}
\DeclareMathOperator{\rowidx}{rowIndex}

\DeclarePairedDelimiter{\set}{\{}{\}}
\DeclarePairedDelimiter{\indicator}{[}{]}
\def\size#1{|#1|}
\DeclarePairedDelimiter{\paren}{(}{)}
\DeclarePairedDelimiter{\ceil}{\lceil}{\rceil}
\DeclarePairedDelimiter{\floor}{\lfloor}{\rfloor}

\newtheorem{theorem}{Theorem}
\newtheorem{lemma}[theorem]{Lemma}
\newtheorem{corollary}[theorem]{Corollary}
\newtheorem{observation}[theorem]{Observation}

\theoremstyle{definition}
\newtheorem{definition}[theorem]{Definition}

\SetFuncSty{textsc}
\SetKwProg{proc}{Procedure}{}{}
\DontPrintSemicolon

\def\keySum{\textsf{keySum}\xspace}
\def\Count{\textsf{count}\xspace}

\usepackage{tikz}
\usetikzlibrary{calc,arrows.meta,shadows}

\newif\iffinal
\finaltrue

\title{Stuffed IBLTs: Optimal Linear Multiset Sketches} %

\author[1]{Jonas Klausen}
\author[2]{Rasmus Pagh}
\author[3]{Stefan Walzer}

\affil[1]{Max Planck Institute for Informatics}
\affil[2]{University of Copenhagen}
\affil[3]{Karlsruhe Institute of Technology}

\date{\today}

\begin{document}
\pagenumbering{gobble}
\maketitle

\begin{abstract}
    A \emph{linear sketch} is a randomized linear mapping of a vector $v$ to a lower dimensional sketch vector, designed to preserve relevant information about $v$.
    We consider sketches of vectors $v ∈ ℤ^u$ (for $u ∈ ℕ$), designed for exact recovery of $v$ from its sketch.
    Concretely, our \emph{Stuffed IBLT} is a linear sketch configured with a capacity $n ∈ ℕ$ and a multiplicity limit $L ∈ ℕ$ and will recover $v$ with high probability whenever $‖v‖₀ ≤ n$ and $‖v‖_∞ ≤ L$.
    The sketch can be maintained efficiently under unrestricted updates to $v$, i.e., $v$ is not subject to any constraints in between decoding requests. 
    This makes the sketch useful for streaming algorithms and for solving the (multi)set reconciliation problem.

    For any positive constants $c$, $\epsilon$, and for large enough $n$ and $u \geq n^{1+\Omega(1)}$, the space usage of a Stuffed IBLT is within a factor $1+\epsilon$ from the information-theoretic optimum while allowing updates in constant time, and decoding in time $\OO(n)$ with failure probability $n^{-c}$.
    This improves the space/time/error probability trade-off over all prior constructions with similar functionality, including the Invertible Bloom Lookup Table (IBLT).
    The performance of the Stuffed IBLT is essentially the best we could hope for, up to the dependence on $c$ and $\epsilon$.
    We make the dependence on these parameters explicit, and further show a lower bound demonstrating that the dependence on $c$ is optimal within the class of peeling-based approaches.
    Our improvement comes from a careful combination of Walzer's spatial coupling technique (SODA~'21), the purity heuristic of Houen, Pagh, and Walzer~(SOSA~'23), and backyarding (Belazzougui, Kucherov, and Walzer, ESA~'24; Fleischhacker, Green Larsen, Obremski, and Simkin, ICALP~'24), allowing us to eliminate bottlenecks of past approaches.
    Unlike most prior work we do \emph{not} assume access to fully random hash functions.
\end{abstract}

\newpage
\pagenumbering{arabic}

\section{Introduction}

Let $\U$ be a universe of elements, which are called \emph{keys}. Consider the space $ℤ^{\U}$ of all functions from $\U$ to $ℤ$, or equivalently, the set of all \emph{signed multisets} of keys from $\U$, where each $x ∈ \U$ has a possibly negative multiplicity.

A \emph{linear signed multiset sketch}, just \emph{sketch} from now on, stores a lossy representation $S(v)$ of $v ∈ ℤ^\U$ that supports updates to $v$ and a decoding operation. Decoding is meant to recover $v$, but may fail. The function $S$ is a linear function, typically of the form $S : ℤ^{\U} → G^m$ where $G$ is a finite group and $m$ is much smaller than $|\U|$. It may be chosen at random and involve hash functions.

For a given \emph{capacity} $n$ and a given \emph{multiplicity limit} $L$, this paper shows how to construct a sketch from which $v$ can be recovered with high probability whenever $‖v‖₀ ≤ n$ (i.e.\ $v$ contains at most $n$ non-zeroes) and $‖v‖_∞ ≤ L$.
Linearity of $S$ allows for the sketches $S(v₁)$ and $S(v₂)$ of $v₁,v₂ ∈ ℤ^\U$ to be added to obtain a sketch $S(v₁)+S(v₂) = S(v₁+v₂)$ of $v₁+v₂$.
To update a sketch $S(v)$ when $f ∈ ℤ$ (signed) copies of $x ∈ \U$ are added to $v$ we compute $S(v+f·e_x) = S(v) + f·S(e_x)$ where $e_x$ is the unit vector for $x$. Note that the update is fast if $S(e_x)$ is sparse. 
The restrictions $n$ and~$L$ on capacity and multiplicity do \emph{not} apply in between decoding requests.

For a fixed prime $p$ we consider computation on a word RAM with word size $w ≥ \log₂ p$ that can perform $\F_p$ field operations (arithmetic modulo $p$) in $\OO(1)$ time.
Our main result is a sketch for multisets with almost optimal time, space usage, and error probability:

\begin{restatable}{theorem}{maintheorem} \label{thm:main}
  Let $c ≥ 1$ be arbitrary. Then for any $n,L ∈ ℕ$ and prime $p$ with $\max\{2L+1,n\} ≤ p$ and $n ≥ n₀(c)$ large enough there exists a linear signed multiset sketch for keys from $\U = \F_p$ with the following properties:
  \begin{itemize}
      \item Decoding recovers $v ∈ ℤ^\U$ with probability $1-\OO(n^{-c})$ when $‖v‖₀ ≤ n$ and $‖v‖_∞ ≤ L$.
      \item Update time $\OO(c)$ and expected decoding time $\OO(cn)$.
      \item Space usage $(1+e^{-Ω(c)})n ⌈\log₂|\U|+\log₂(2L+1)⌉$ bits.
  \end{itemize}
  If \(n \leq \size{\U}^{1-\Omega(1)}\) then, for any fixed \(\epsilon < 1\), the space usage can be reduced to \((1+\OO(\epsilon))\OPT\) bits, where $\OPT$ refers to the information-theoretical minimum, increasing the update time to \(\OO(c/\epsilon^3)\) and the decoding time to \(\OO(cn/\epsilon^3)\).
\end{restatable}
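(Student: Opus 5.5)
We will build the sketch in three layers, following the ingredients named in the abstract, and then verify each of the three bullets of \cref{thm:main}.

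\textbf{Construction.} Each cell of the main table holds a pair $(\Count, \keySum) ∈ \F_p^2$. A key $x$ with multiplicity $f$ contributes $(f, f·x)$ to each cell it hashes to. A cell whose contents $(a,b)$ satisfy $a ≠ 0$ is tentatively decoded as key $b/a$ with multiplicity $a$. We accept this decoding only if it passes the purity check: the candidate key $b/a$ must actually hash to this cell. We need $p ≥ 2L+1$ so that multiplicities in $[-L,L]$ are represented faithfully, and $p ≥ n$ so that key universe and field coincide.

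The hashing uses Walzer's spatially coupled layout. Cells are arranged in a line of $\OO(n)$ cells divided into windows. Each key picks a random window position and then $k = Θ(c)$ cells inside a band of consecutive groups. Spatial coupling gives a peeling threshold close to~$1$, so we can take $m = (1+e^{-Ω(c)})n$ cells. A small backyard sketch of $\OO(n^{1-δ})$ cells, itself a plain IBLT with stronger parameters, absorbs the keys that the main table fails to peel. To avoid assuming fully random hashing, we realize the hash functions by the standard split-and-share / $\OO(c)$-wise independent constructions. We argue that the peeling analysis only depends on small local structures, namely stopping sets of size $\OO(c)$, whose probabilities are controlled by limited independence.

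\textbf{Correctness.} The plan proceeds in four steps.
\begin{enumerate}
\item Bound the probability that the purity check accepts an impure cell. A cell with at least two keys yields a pseudo-key that is essentially uniform in $\F_p$. The candidate then hashes to that cell with probability $\OO(k/m)$. Over the $\OO(n)$ peeling steps, false positives must be rare enough; we handle this by using $\Θ(c)$ independent verification hash values, or by making the probability per check $n^{-Ω(1)}$.
\item Show, via the spatial-coupling wave analysis, that peeling the main table leaves at most $n^{1-δ}$ keys with probability $1-n^{-c}$, using concentration for the wave propagation.
\item Show that whatever remains is fully recovered by the backyard.
\item Show that the only failure mode of the backyard is a small stopping set. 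A union bound over stopping sets of size $Θ(c)$ gives probability $n^{-c}$, and this is exactly where $k = Θ(c)$ is needed.
\end{enumerate}

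\textbf{Time and space.} An update touches $k + \OO(1)$ cells, giving time $\OO(c)$. Peeling processes each cell a constant number of times per incident key, giving expected decoding time $\OO(cn)$. The space is $m$ cells of $\lceil \log_2 p + \log_2(2L+1) \rceil$ bits, which yields the stated bound.

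\textbf{Compressed variant.} For the claim with $n ≤ |\U|^{1-Ω(1)}$, we apply quotienting. A random permutation of $\F_p$, for example a $\Θ(c)$-round Feistel-style or affine-then-split map, sends $x$ to $(h(x), q(x))$. The prefix $h(x)$ determines the bucket and $q(x)$ is stored. Each cell then stores only $\log_2(|\U|/n) + \OO(\log(1/\epsilon))$ bits of key information, which is $\approx \log\binom{|\U|}{n}/n$. To keep the overhead at $(1+\OO(\epsilon))\OPT$, the table needs load $1-\epsilon$ with slack $\epsilon$. Spatial coupling with window width $\Θ(1/\epsilon)$ and $\Θ(1/\epsilon^2)$ groups per window achieves this, at a cost factor of $1/\epsilon^3$ in cells touched per key, which matches the stated time bounds.

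\textbf{Main obstacle.} The hardest step is combining the purity check with quotienting. Once only a quotient is stored, an impure cell's pseudo-key must still be rejected with high probability while spending only $\OO(\epsilon)$ extra bits per cell. My first attempt would be to make the check a local consistency test, requiring the decoded quotient to be consistent with the cell position, and to charge the residual false-positive probability to a constant-depth verification cascade during peeling.
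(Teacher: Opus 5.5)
Your Step~1 is where the argument breaks, and no sharper analysis can rescue it.

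\textbf{Why single-table purity checking fails.} A cell $j$ holding $(a,b)$ is indistinguishable from one holding $a$ copies of $b/a$. So any purity test that accepts pure cells also accepts this one whenever $j\in H_{b/a}$, which happens with probability about $c/n$. During peeling, $\Theta(m)$ cells pass through a state with exactly two keys. A single table therefore sees a constant expected number of false positives (anomalies of size $3$ in the sense of \cref{def:anomaly}), and so fails with constant probability, not $n^{-c}$. Extra ``verification hash values'' help only if they are stored in the cells, and reaching $n^{-c}$ that way needs $\Theta(c\log n)$ checksum bits per cell. That destroys the $(1+e^{-\Omega(c)})$ space factor. A per-check rate of $n^{-\Omega(1)}$ is what you already have, and it is far from enough over $\Theta(cn)$ checks.

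\textbf{The missing idea: tolerate anomalies instead of excluding them} (\cref{sec:coupled-oracle,sec:split-coupled}).
\begin{itemize}
\item A hash $h_1$ splits keys into $\sqrt n$ Coupled-Sketches, so an anomaly corrupts only one substructure and costs at most $2n_0$ wrong or missing keys.
\item The same hash lets \textsc{isPure} reject foreign candidates in $\mathcal O(1)$ time, avoiding the $\mathcal O(c^2n)$ decoding cost of a single table.
\item Negative association of the badness indicators (\cref{lem:NA-applied}) bounds the number of bad substructures by $\mathcal O(\log n)$.
\item The $n^{-c}$ guarantee then comes from an outer layer of independently hashed Heuristic-Sketches, routed by $h_3$, which is never used for purity. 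So the purity hashes need only $\tilde{\mathcal O}(1/n)$ reliability and $\mathcal O(1)$ time each. Demanding $n^{-c}$-reliable purity hashes in one table would push updates to $\mathcal O(c^2)$ with the available hash families.
\item The backyard must store \emph{all} keys and be decoded after subtracting the main structure's output, wrongly reported keys included. This is why its multiplicity limit is $2L$. ``Absorbing the unpeeled keys'' is not enough once false positives occur.
\end{itemize}

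\textbf{Further problems.} Several other steps also do not go through as written.

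First, spatially coupled peeling is not governed by stopping sets of size $\mathcal O(c)$ alone. Walzer's threshold rests on the global parallel-peeling wave. Its concentration (McDiarmid, \cref{sec:importing-spatial-coupling}) needs \emph{independent} hyperedges, so the analysis you are importing does not survive $\mathcal O(c)$-wise independence. The paper uses functions fully independent on the relevant key set, at $\tilde{\mathcal O}(N^{1.1})$ bits for capacity $N$. This is affordable only because one set of functions is shared by all $n^{1-\epsilon}$ Split-Sketches of capacity $\approx n^{\epsilon}$. Each has quality $(c+1)/\epsilon$ and fails with probability $\mathcal O(n^{-c-1})$, so a union bound still gives $\mathcal O(n^{-c})$. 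Likewise, wave concentration only shows that at most $\delta n$ vertices survive constantly many rounds, not $n^{1-\delta}$. Since \cref{thm:threshold} gives failure $\mathcal O(1/n)$, your Step~2 claim is unsupported.

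Second, storing \textsf{count} in $\F_p$ costs $\lceil\log_2 p\rceil$ bits, not $\log_2(2L+1)$. Meeting the stated space forces \textsf{count} to wrap modulo $2L+1$, so the pseudo-key is computed from a wrapped count. This is why \cref{lem:probability-for-anomaly} needs its overflow and underflow cases.

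Third, the ``main obstacle'' you name does not arise in the paper.
\begin{itemize}
\item Quotienting is only the outermost partition, into $n^{1-\epsilon}$ buckets of $\approx n^{\epsilon}$ keys.
\item Each bucket is a self-contained Split-Sketch over the reduced universe, running the same heuristic on reduced keys.
\item The $\mathcal O(n^{31/32})$ keys in overfull buckets go to the backyard.
\item The price is $\epsilon\log_2 n$ extra bits per cell, an $\mathcal O(\epsilon/\alpha)$ fraction of $\OPT$ when $|\calU| = n^{1+\alpha}$.
\end{itemize}
Your $\mathcal O(\log(1/\epsilon))$-bit claim requires buckets of only $\mathrm{poly}(1/\epsilon)$ keys. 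With such buckets, a constant fraction would overflow or fail. That leaves $\Theta(n)$ keys for a backyard meant to hold $\mathcal O(n^{1-\delta})$.

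Finally, the $\epsilon^{-3}$ in the time bounds does not come from touching $\epsilon^{-3}$ cells per key. It comes from evaluating the backyard's Christiani--Pagh--Thorup-based hash functions, which use space $n_b^{1+\epsilon/8}$.
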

Note that working with keys in $\F_p$ is mainly a convenience; any set of keys that we can efficiently map bijectively to/from a subset of $\F_p$ can be handled.
Theorem~\ref{thm:main} shows that it is possible to approach information-theoretical barriers for storing multisets in a linear sketch while having very efficient update and decoding operations.

\paragraph{Is \cref{thm:main} optimal?}
There are ways in which our construction is clearly best possible: We can get $\OO(1)$ update time and $\OO(n)$ expected decoding time paired with error probability $\OO(n^{-c})$ and space $(1+ε)\OPT$ for any $c = \OO(1)$ and any $ε = Ω(1)$. Looking more closely, we also have reason to believe that the the dependence on $c$ (when it is not $\OO(1)$) is optimal. (Note that we make no such claim regarding $ε$.)
Our evidence is that no \emph{peeling-based} approach can be better, where peeling is the main mechanism powering all known linear multiset sketches with decoding time $\tOO(n)$. We will postpone a precise definition of peeling-based sketches to \cref{sec:lower-bound}. The following is an informal version of what we have proven.
\begin{theorem}[Informal Version of \cref{thm:lower-bounds}] \label{thm:lower-bounds-informal}
    Every peeling-based multiset sketch has a parameter $c$ such that, under some reasonable assumptions
    \begin{itemize}
        • expected update times are $Ω(c)$ and expected decoding times $Ω(cn)$
        • failure probability is $n^{-\OO(c)}$
        • the sketch requires $(1+e^{-\OO(c)})n$ cells, where each cell stores an element of $\U$ and a multiplicity.
    \end{itemize}
\end{theorem}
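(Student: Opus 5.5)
We prove the formal version, \cref{thm:lower-bounds}, once \cref{sec:lower-bound} fixes what ``peeling-based'' means. The model we expect: every key $x$ is hashed to a set of $k(x)$ cells, drawn from a distribution independent of the input. Each cell stores an aggregate of (key, multiplicity) pairs. Decoding repeatedly finds a \emph{pure} cell, i.e.\ one in which exactly one key with nonzero multiplicity remains, recovers that key, and subtracts it from all of its cells. Decoding succeeds only if this process eventually empties the sketch. The plan is to define $c$ as (a constant multiple of) the expected number of cells touched per key, $c \coloneqq \Ep{k(x)}$, and then derive all three bullets from this single parameter.

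\textbf{Time bounds.} These are the routine part. An update of key $x$ must modify every cell in its hash set, so the expected update time is $Ω(\Ep{k(x)}) = Ω(c)$. When decoding a set of $n$ keys, each recovered key has to be subtracted from all its cells. Summing over the $n$ keys and using linearity of expectation gives expected decoding time $Ω(cn)$. This requires the ``reasonable assumption'' that touching a cell costs $Ω(1)$ time.

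\textbf{Failure probability.} We exhibit a bad event of probability $n^{-\OO(c)}$ under which peeling cannot succeed. The natural candidate is two distinct keys $x \ne y$ with identical hash sets: they are never distinguished, so no cell is ever pure for either of them. Say the sketch has $m = \OO(n)$ cells. If a key has a hash set of size $k$ drawn from a distribution over $\binom{[m]}{k}$, the collision probability for a pair is at least $\sum_S \Pr[S]^2$. The difficulty is that a fixed hash set may have exponentially small probability. So instead we work with a weaker bad event that is easier to force: some key $x$ all of whose $k(x)$ cells are also hit by other keys that are themselves stuck. The simplest version to analyze is a small stopping set, e.g.\ two keys whose hash sets are contained in each other's neighborhoods. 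A cleaner route may be a direct lower bound: with $n$ keys in $m=\Theta(n)$ cells, each cell is hit by $\Theta(c)$ keys on average. Then for a fixed key $x$, each of its cells independently-ish has constant probability of containing another key. Hence with probability $e^{-\OO(c)}$ \emph{no} cell of $x$ is initially pure, and one can then argue that some pair $x,y$ shares all their cells with probability $\ge n^{-2}\cdot m^{-\OO(c)}$. Summing over the $\binom n2$ pairs gives $n^{-\OO(c)}$. We expect this step to be the main obstacle, since $k(x)$ may vary across keys and the hash sets may be correlated (e.g.\ spatially coupled). We would first handle it by restricting to keys whose $k(x) \le 2c$ (by Markov, at least half of them) and applying Jensen or Cauchy--Schwarz to the hash-set distribution to lower-bound the collision probability.

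\textbf{Space.} For peeling to succeed, at the end every cell must become empty, and every key must have been pure in some cell at the time it was peeled. We show that with $m \le (1+e^{-Cc})n$ cells, failure is likely. A cell that is never pure is ``wasted'' with respect to recovering a key. In expectation, a constant fraction of cells receive zero keys, an $e^{-\Theta(c)}$-type fraction by a Poisson estimate, and only cells with at least one key can be the pure cell of some key. Each key needs its own distinct pure cell at its peeling time, so the number of nonempty cells must be at least $n$. This forces $m(1-e^{-\Theta(c)}) \ge n$, i.e.\ $m \ge (1+e^{-\OO(c)})n$; concentration via a bounded-differences inequality upgrades this from expectation to a high-probability statement.
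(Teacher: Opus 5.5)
Your outline matches the paper's proof of \cref{thm:lower-bounds}: $c=\mathbb{E}[|H|]$, update cost at least $|H_x|$, the bad event $H_x=H_y$, and the count ``each peeling step empties a previously non-empty cell, so $n+E\le m$''. The decoding-time bullet, however, has a genuine gap. Linearity of expectation over the $n$ keys is valid only if decoding always peels all of them. Every IPMS has $\mathrm{err}>0$ (by your own second bullet), and a failing run can stop after very few updates. Success may also be negatively correlated with $\sum_x|H_x|$. So from $\mathbb{E}[\mathrm{cost}]\ge\mathbb{E}[\mathbf{1}_{\mathrm{success}}\sum_x|H_x|]$ you cannot conclude $\Omega(cn)$ when $|H|$ is heavy-tailed, because the mass of $\mathbb{E}[\sum_x|H_x|]=nc$ may sit on failing runs. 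The paper closes this with the extra assumptions $\mathrm{err}\le\frac12$ and $\operatorname{Var}(|H|)\le nc^2/16$. Chebyshev then gives $\sum_x|H_x|\ge nc/2$ with probability at least $\frac34$. Hence with probability at least $\frac14$ decoding both succeeds and performs updates costing at least $nc/2$, so the expected cost is at least $nc/8$.

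In the failure-probability part, only the route you mention last works, and it is the paper's. By Markov, $\Pr[|H|\le 2c]\ge\frac12$, and there are $N\le 2m^{2c}$ such sets. Since $H_x,H_y$ are i.i.d., Cauchy--Schwarz gives $\Pr[H_x=H_y]\ge\sum_{|S|\le 2c}\Pr[H=S]^2\ge 1/(4N)\ge n^{-\OO(c)}$ for $m$ polynomial in $n$ (the paper assumes $m\le n^2$). Tiny individual probabilities are therefore no obstacle. The detour before it fails: ``no cell of $x$ is initially pure'' does not imply failure. Also, summing probabilities over $\binom n2$ pairs gives an upper bound on the union, not a lower bound; one pair suffices anyway.

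In the space part, a Poisson estimate presumes uniform loads. For general $\D$ (e.g.\ spatially coupled) use convexity: with $p_i=\Pr[i\in H]$ and $\sum_ip_i=c$, you get $\mathbb{E}[E]=\sum_i(1-p_i)^n\ge m(1-c/m)^n\ge ne^{-2c}$. Bounded differences would additionally need a bound on $\max|H|$ and $ne^{-2c}\gg\max|H|\sqrt n$. The paper instead uses negative association of the emptiness indicators, so a Chernoff bound applies once $\mathbb{E}[E]=\omega(1)$. That is where its assumption $c=o(\log n)$ enters, and together with $\mathrm{err}\le\frac12$ it yields $m-n\ge\mathbb{E}[E]/2$. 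You also do not address the per-cell size. The paper obtains it from injectivity of $\pi$ on pure-cell contents, which forces $|G|\ge|\U|\cdot 2L$.
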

We remark that some reasonable approaches are “mostly” peeling-based but side-step our definition in minor ways. In this sense \cref{thm:lower-bounds-informal} is not particularly robust. We still suspect that it points to a barrier that cannot be surpassed without fundamentally new ideas.

\paragraph{Applications.}
Sketches of sparse signed multisets have several applications, we mention a few. 
In \emph{straggler identification} \cite{eppstein2010straggler}, a gatekeeper can maintain a sketch of employee entries and exits and recover the small set of employees remaining at closing time; sketches from multiple gates can be combined by linearity. 
In \emph{set reconciliation}, two parties holding similar vectors $a,b\in\mathbb Z^{\U}$ can exchange sketches, compute $S(a-b)$, and recover the differences with communication proportional to the number of discrepancies rather than proportional to $|\U|$ \cite{minsky2003set,goodrich2011invertible,eppstein2011setreconciliation,yang2024practicalRateless}. 
Similar ideas enable efficient comparison of large but highly similar genomic datasets represented as multisets of $k$-mers \cite{ShibuyaBelazzouguiKucherov2022}. 
Multiset sketches also arise in cryptography, where they have been used as building blocks for advanced cryptographic primitives and encrypted compression schemes that exploit sparsity while preserving confidentiality~\cite{dodis2004fuzzy,fleischhacker2023invertible}.
Building on a cryptographic secure aggregation primitive, multiset sketches have been used in distributed and federated analytics with secure aggregation; for example, Google used sketches to collect information about new vocabulary in its GBoard keyboard while preserving privacy\footnote{\url{https://research.google/blog/improving-gboard-language-models-via-private-federated-analytics/}}.

\subsection{Previous Work}\label{sec:priorwork}

\begin{table}[htb]
    \centering
    \caption{
    Comparison of (multi)set sketches where $n$ is the capacity $n$ and $c$ a parameter. The failure probability applies to decoding and is zero for the deterministic approaches.
    Running times assume a word RAM with constant time operations on finite fields.
    For simplicity space is measured in “cells” that hold field or ring elements, each representing an element of~$\U$, as well as possibly a multiplicity and possibly a checksum.
    Notes indicate whether the sketch is meant for sets (\textsc{s}), key-value pairs (\textsc{kv}), or multisets~(\textsc{m}), whether fully random hash functions are assumed (\textsc{h}), and whether cells use additional space for checksums~(\textsc{c}).
    }
    
    \label{tab:comparison}
    \begin{tabular}{rllccc>{\scshape}l@{\ }>{\scshape}c@{\ }>{\scshape}c}
    \toprule
    name
    & reference
    & cells
    & \makecell{update\\cost}
    & decoding
    & \makecell{failure\\rate}
    & \multicolumn{3}{c}{notes}
    \\
    \midrule
        characteristic poly.
        & \cite{minsky2003set}
        & $n$
        & $n$
        & $n^3$
        & $0$
        & s
        \\
        pinsketch
        & \cite{dodis2004fuzzy}
        & $n$
        & $n$
        & $n^2$
        & $0$
        & s
        \\
        randomized $k$-set
        & \cite{ganguly2007randomised}
        & $\OO(cn \log n)$
        & $c\log n$
        & $n$
        & $n^{-\Omega(c)}$
        & m && c
        \\
        IBF
        & \cite{eppstein2010straggler}
        & $\OO(cn)$
        & $c$
        & $n$
        & $2^{-\Omega(c)}$
        & m & h & c
        \\
        IBLT
        & \cite{goodrich2011invertible}
        & $\OO(cn)$
        & $c$
        & $n$
        & $n^{-\Omega(c)}$
        & kv & h & c
        \\
        SRS
        & \cite{DBLP:journals/tcs/BarkayPS15}
        & $2n$
        & $\polylog(n)$
        & $n \polylog(n)$
        & 0
        & m
        \\
        simple set sketch
        & \cite{baek2023simple}
        & $\OO(n)$
        & $3$
        & $n$
        & $\OO(n^{-1})$
        & s & h
        \\
        IBLT with a stash
        & \cite{belazzougui2024better}
        & $\OO(n)$
        & $c+3$
        & $n$
        & $2^{-\Omega(c)}$
        & s & h
        \\
        stacked IBLT
        & \cite{fleischhacker2023invertible}
        & $\OO(n)$
        & $c + \log\log n$
        & $n\log n$
        & $2^{-\Omega(c)}$
        & kv && c
        \\
        stuffed IBLT
        & {\bf new}
        & $(1+e^{-\Omega(c)})n$
        & $c$
        & $cn$
        & $n^{-\Omega(c)}$
        & m
        \\
        \bottomrule
    \end{tabular}
\end{table}

Table~\ref{tab:comparison} gives a cronological overview of the history of set and multiset sketches.
The first space-efficient linear sketches that can be used to reconstruct a sparse vector were based on algebraic techniques~\cite{minsky2003set,dodis2004fuzzy}.
These sketches only support \emph{sets} in the sense that the sketch of a frequency vector \(v\not\in\{0,1\}^\U\) does not in general contain enough information to recover $v$.
These sketches are deterministic, very space efficient, but also slow: 
The time to update is linear in the capacity, and moreover the decoding time obtained in~\cite{minsky2003set,dodis2004fuzzy} is, respectively, $n^3$ and~$n^2$.

The first sketch with fast updates was proposed by Ganguly~\cite{ganguly2007randomised}, furthermore supporting \emph{multisets}, i.e., decoding of sparse vectors $v ∈ ℤ^{\U}$.
This construction used \emph{hashing} to define a sparse sketch matrix, introducing a failure rate (depending on choice of parameters).
Eppstein and Goodrich~\cite{eppstein2010straggler} improved time and space usage when the failure rate is a small constant --- by independent repetition one can extend this to any desired failure rate.
However, they rely on an idealized assumption of fully random hash functions.
Goodrich and Mitzenmacher~\cite{goodrich2011invertible} built on similar ideas to improve the failure rate, with a sketch named the Invertible Bloom Lookup Table (IBLT).
They also explicitly emphasize the ability of their sketch to handle \emph{key-value pairs}, i.e., each key has a value associated with it.
While the IBLT from \cite{goodrich2011invertible} is not a linear multiset sketch (inserting a key with several distinct values makes the key unrecoverable) a ``fault tolerant'' version they also introduce could be used as one.

Houen, Pagh, and Walzer~\cite{baek2023simple} described a variant of the IBLT that avoids space overhead due to ``checksums'', but rather relies on the implicit weak checksum given by the hash values of keys. 
Like IBFs and IBLTs it is assumed that hash functions are fully random.
Unlike those data structures, linearity is over the field $\mathbb{F}₂$ so multisets are not supported (the multiplicity limit is $L=1$).

More recently, Belazzougui, Kucherov, and Walzer~\cite{belazzougui2024better} showed how to decrease the error probability of IBLTs with only a small increase in space usage by utilizing a small ``stash'' that allows the most likely failures to be corrected.
Fleischhacker, Green Larsen, Obremski, and Simkin~\cite{fleischhacker2023invertible} showed how to avoid the need for fully random hash functions that were assumed in previous IBF/IBLT constructions, at only a small cost in time complexity. However, their construction has a relatively large constant factor space overhead.

\paragraph{The multiplicity limit.}
We note that none of our predecessors explicitly mention a multiplicity limit $L$. Sometimes this limit is implicitly $1$ \cite{goodrich2011invertible,fleischhacker2023invertible}. Sometimes it is implicitly $∞$, meaning certain integer data types are assumed to be large enough to not overflow \cite{ganguly2006deterministic,eppstein2010straggler}. Two other papers \cite{baek2023simple,belazzougui2024better} construct sketches of $v ∈ \F₂^{\U}$ rather than sketches of $v ∈ ℤ^{\U}$, so there are only two possible multiplicities to begin with (two copies of the same key cancel out).

\paragraph{Relation to sparse recovery.}
Another conceptually related area is the literature on \emph{sparse recovery} and more generally \emph{compressed sensing}~\cite{gilbert2010sparse}. Like in our setting, these methods deal with linear sketches of vectors and the objective is to recover a sparse vector.
There are, however, critical differences that make our multiset recovery problem different (and more tractable).
First, sparse recovery is usually formulated over real-valued vectors and linear measurements, whereas our sketches represent signed multisets over a discrete universe (typically a finite field).
Second, a central goal in sparse recovery is often to output a good sparse \emph{approximation} to a vector that need not itself be sparse, with guarantees that are robust to measurement noise.
In contrast, we consider exact reconstruction in a noise-free discrete setting; this narrower task is what allows us to approach the information-theoretic space bound.

\subsection{Technical Overview}
\label{sec:contribution}
Our construction carefully combines many techniques that have been described in the literature.
Here we give a brief overview of each technique in isolation.

\begin{description}
    •[Basic Setup.]
    The randomised approaches \cite{eppstein2010straggler,ganguly2007randomised,goodrich2011invertible,baek2023simple,belazzougui2024better,fleischhacker2023invertible}
    generally follow a similar pattern. They use an array of $m ≥ n$ cells, initialised with zeroes, and each key $x ∈ \U$ is associated with a small set $H_x$ of cells via hash functions. %
    Any updates relating to $x$ lead to a specific value being added to each cell in $H_x$. If several keys contribute to the same cell, then the stored sum will not be informative on its own, but if a cell is \emph{pure}, i.e.\ only one key $x$ has contributed (with multiplicity in $[-L,L]$), then $x$ and its multiplicity can be inferred from the cell's content. Decoding a sketch relies on detecting such pure cells.
    •[Cells store \keySum and \Count.]
    Each cell of the sketch stores two numbers, \Count and \keySum. Adding $f$ copies of $x ∈ \U$ to a cell increments \Count by $f$ and \keySum by $f·x$. Note that multiplicities aggregate as desired, i.e.\ adding $f₁$ copies of $x$ and then $f₂$ copies of $x$ is the same as adding $f₁+f₂$ copies of $x$. A pure cell can be decoded by computing $f = \Count$ and $x = \keySum/\Count$. This method is used in \cite{eppstein2010straggler}, except that inserting only one copy at a time is intended.
    •[Purity Heuristic without Checksums.]
    Assuming \keySum and \Count are elements of the same field and $\Count ≠ 0$ it is always possible to compute \emph{some} candidate key $x = \keySum/\Count$ such that \Count copies of $x$ correspond to the cell content we observe.
    It is therefore usually impossible to decide whether a cell is pure when only looking at the cell's content. Most previous constructions addressed this problem by including a \emph{checksum} in each cell, either a sum of hash values \cite{eppstein2010straggler,goodrich2011invertible,fleischhacker2023invertible} or the sum of squares of keys \cite{ganguly2007randomised}. This comes at the cost of space and (arguably) elegance.

    An alternative purity heuristic uses the cell's index $i$ and checks whether $i ∈ H_x$, i.e.\ whether the candidate key $x$ actually hashes to the cell in question \cite{baek2023simple,belazzougui2024better}.
    We manage to leverage this heuristic while largely sidestepping the headache of analysing in detail what happens when it fails, as is done in \cite{baek2023simple,belazzougui2024better}.
    In turn, this is what allows us to avoid the space overhead associated with checksums.

    •[Decoding by peeling.]
    During decoding, whenever we identify a key and its multiplicity from a pure cell, we remove it from the sketch in hopes of rendering additional cells pure, which can then also be decoded. This iterative decoding procedure is called \emph{peeling} and was anticipated in \cite{eppstein2010straggler}, but the required mathematical toolkit to demonstrate its effectiveness is first invoked in \cite{goodrich2011invertible}.
    The underlying theory is most commonly stated in hypergraph language, with vertices corresponding to cells and hyperedges corresponding to the sets $H_x$ of cells associated with each key $x$.
    
    It is for instance known that a $(c=3)$-uniform hypergraph with $n$ uniformly random hyperedges is peelable with high probability when it has at least $1.23n$ vertices. For larger values of $c$ the threshold is higher, and generally it is $\Omega(cn)$ (see~\cite{walzer21} for details).
    •[Spatial Coupling.]
    Surprisingly, the peeling threshold can be improved when~$H_x$ contains $c$ random positions within close distance of one another rather than $c$ positions chosen uniformly at random, as shown by Walzer~\cite{walzer21}.
    The underlying technique of spatial coupling originated from coding theory \cite{KRU2011coupling} and has already been successfully applied to construct more space-efficient Bloom filters \cite{GL2022fusefilter}.

    •[Backyarding.]
    Decode sometimes recovers \emph{most} keys contained in a sketch before getting stuck. Maintaining a second sketch can then help to recover what is left \cite{fleischhacker2023invertible,belazzougui2024better}.
    Such a \emph{backyard} sketch can use more space and stronger hash functions relative to its smaller capacity without compromising overall performance.

    •[Simulating Full Randomness using Split \& Share.] Some multiset sketches \cite{eppstein2010straggler,goodrich2011invertible,baek2023simple,belazzougui2024better}, though not all \cite{ganguly2007randomised,fleischhacker2023invertible}, rely on the simple uniform hashing assumption. We simulate full randomness using an explicit construction from \cite{DBLP:conf/stoc/ChristianiPT15} and use the split-and-share trick \cite{dietz2009splittingtrick} to reduce the effect on our memory consumption to a lower order term.
    •[Quotienting.]
    It is possible to save space using a trick dating back to Cleary \cite{cleary84compact}: Assume we partition the universe into parts $\U₁,…,\U_{2^b}$ according to the $b$ most-significant bits of each key's binary representation. If we then use one sketch per part, each designed for a universe of size $|\U|/2^b$, we save $b$ bits per cell.
    However, this only works if the input contains roughly the same number of keys from each part. This can be ensured by shuffling the input with a suitable invertible hash function, called a \emph{quotient hash function}. A complication in our case is that the bits that can be saved by the Cleary trick are also used by the purity heuristic.
\end{description}

\subsection{Outline}
\begin{itemize}
    • We begin with a high level overview of our construction, explaining in what order the techniques listed in \cref{sec:design} are combined, what difficulties arise, and how these are addressed.
    • In \cref{sec:preliminaries}, apart from introducing some terminology, we state lemmas regarding families of hash functions that will be repeatedly used throughout our construction.
    • \Cref{sec:datastructure} contains our main contribution: Starting with a basic sketch design based on \cite{walzer21} in \cref{sec:coupled-iblt}, and building several layers of data structures around it in \cref{sec:coupled-oracle,sec:split-coupled,sec:quotient-sketch,sec:backyard} we finally prove \cref{thm:main} in \cref{sec:maintheorem}.
    • In \cref{sec:lower-bound}, we formally define and discuss the class of \emph{idealised peeling-based multiset sketches}. Lower bounds regarding this class of approaches yields \cref{thm:lower-bounds}, the formal version of \cref{thm:lower-bounds-informal}.
\end{itemize}
In the appendix we unpack results from three existing papers to adapt them to our specific needs.
\begin{itemize}
    • In \cref{sec:importing-spatial-coupling} we strengthen \cite[Theorem 1]{walzer21} in two minor ways.
    • In \cref{sec:hashing-appendix,sec:quotienthashing} we adapt families of hash function from \cite{DBLP:conf/latin/DemaineHPP06,DBLP:conf/stoc/ChristianiPT15}.
\end{itemize}

\section{Overview of our Construction} \label{sec:design}

Our construction for \cref{thm:main} uses a complicated multi-level architecture combining many techniques from previous work (as listed in \cref{sec:contribution}).

Our starting point is an idealised peeling-based sketch that has three issues. We modify it in a sequence of four steps, each addressing one issue (the fourth step fixing an issue introduced by the previous steps).
This gives us a sequence of five sketches as listed in \cref{fig:overview} (not counting the auxiliary backyard structure), ending with the Stuffed IBLT. We now give a high-level overview.

\paragraph{Starting Point: Spatial Coupling.}
In \cite{walzer21}, we are given a distribution $\D$ on subsets of $[m]$ such that a hypergraph with $n$ hyperedges sampled independently from $\D$ is likely peelable, even when the number $m$ of vertices only slightly exceeds the number $n$ of hyperedges. Concretely, for hyperedge size $c$, we only need $m ≥ (1 + e^{-Θ(c)})n$. This immediately implies a sketch (called “Oracle” in \cref{fig:overview}) with a correspondingly low cell count that should, in principle, lend itself to being decoded by peeling.

Specifically, the Oracle-Sketch consists of a collection of \(m\) cells (corresponding to the vertices of the hypergraph), and each key is associated with a random hyperedge sampled from \(\D\) and stored in the corresponding cells.
The contents of the Oracle-Sketch can then be recovered iff the hypergraph is peelable.

The Oracle-Sketch has three problems: It relies on a purity oracle to identify the peelable cells, its failure probability is \(\OO(1/n)\), and can't be decreased, and, finally, the sketch needs access to a fully random hash function, which dominates the overall space usage.

\paragraph{Step 1: Detecting Purity.}
When inspecting a cell, we can almost always infer a candidate key $x = \keySum / \Count$ that would account for the cell's content on its own.
If the cell in question is actually one of the cells associated with $x$ -- an event with prior probability $\OO(1/n)$ -- we have strong reason to believe that the cell really is pure.
This is the fundamental principle behind the \emph{Heuristic-Sketch}, which improves upon the Oracle-Sketch by getting rid of the purity oracle.

A purity heuristic relying on this check can suffer from false positives. The authors of \cite{baek2023simple,belazzougui2024better} link these false positives to bad structures, called \emph{anomalies} (see \cref{def:anomaly}), of which there are only constantly many in expectation.
We use similar arguments, although they are made slightly more technical by the fact that the field $\F_p$ we use for $\keySum$ and the group $ℤ_{2L+1}$ used for \(\Count\) have different sizes.
This leads to several cases that our analysis must handle (see \cref{sec:risk-of-anomalies}), but the asymptotic probability of an anomaly occurring remains unchanged.

In contrast to \cite{baek2023simple,belazzougui2024better} we do not try to recover from errors caused by anomalies.
Instead we upper bound the damage in the following sense:
The Heuristic-Sketch introduces a layer of partitioning, where each key is stored in one of several sub-structures \(T_1, \dots, T_{\sqrt{n}}\).
The \(T_i\)'s are decoded separately such that each anomaly only corrupts the output from one sub-structure \(T_i\).

The Heuristic-Sketch will typically fail to recover a small number of \emph{leftover} keys, which introduces the need for the Backyard described in step 4 below.
To show that the number of leftover keys is concentrated around its (small) expectation, we exploit that the failure events of the substructures (caused by anomalies or non-peelability) are \emph{negatively associated}.

Sketches that fail to completely recover the stored set are said to have non-zero \emph{error tolerance}, which we formally define in \cref{sec:preliminaries}.

\paragraph{Step 2: Reducing the Error Probability.}
One bottleneck of the Heuristic-Sketch is that the hash function used for distributing keys across the sub-substructures fails with probability \(\OO(1/n)\).
Boosting the error probability would lead to slower evaluation, which is incompatible with the fast purity heuristic.

We therefore introduce a second layer of partitioning via a new hash function that fails only with probability $\OO(1/n^c)$.
That is, the \emph{Split-Sketch} consists of several smaller Heuristic-Sketches, with each key stored in exactly one Heuristic-Sketch.
This second, slower hash function is crucially \emph{not} used for the purity tests.

Note that while the probability of each sub-sketch failing is too large, the probability that “too many” fail simultaneously can be bounded by $\OO(n^{-c})$ using Chernoff-like arguments.

\paragraph{Step 3: Quotienting and Sharing Hash Functions to Reduce Space.}
The Split-Sketch described in step 2 still relies on simulating full randomness on sets of $n$ keys, which takes up more space than the sketch itself.
But now that the error probability of each sketch is only $\OO(n^{-c})$, we can rely on union bounds over many structures to show that the \emph{same} hash function will likely work for all of them simultaneously. We hence use yet another layer of partitioning into smaller Split-Sketches that share a single set of hash functions.

Note that the hash function controlling the partitioning need not be fully random but only strong enough to ensure load balancing.
By using the \emph{quotient} function of \cite{DBLP:conf/latin/DemaineHPP06} for this purpose, we can further save (up to) \(\log_2 n\) bits per cell, by letting each cell store keys from the smaller universe \([\calU/n]\).

The partitioning given by the quotient function leads to some sub-structures being overloaded, and thus the keys stored in these sketches are unrecoverable.
Even then, we can give a sublinear bound on the number of leftover keys, at \(\OO(n^{1-\epsilon})\) for some small \(\epsilon\).

If not for the leftover keys, the resulting \emph{Quotient-Sketch} would have achieved all of our goals.

\paragraph{Step 4: Backyarding to Recover the Leftover Keys.}
Since we can bound the number of leftover keys by $\OO(n^{1-\epsilon})$, we can use a backyard sketch that uses a factor of $\OO(c)$ more cells than its capacity and fully random hashing on its key set without compromising our overall memory budget. A simple variant of a sketch from \cite{goodrich2011invertible} is strong enough to serve as a backyard in the Stuffed IBLT.

\begin{figure}[htbp]
  \def\perfTable#1#2#3{
    \begin{tabular}{ccc}
        capacity & error tol. & fail prob.\\
        #1 & #2 & #3
    \end{tabular}
}
\tikzstyle{sketch}=[draw,rounded corners,fill=yellow!10,align=center,drop shadow]
\tikzstyle{auxiliary}=[draw,rounded corners,fill=gray!20,align=center,drop shadow]
\def\sketchName#1#2#3{
    \textbf{#1} (Sec.\,\ref{#2}, Lem.\,\ref{#3})
}
\def\sketchNameThm#1#2#3{
    \textbf{#1} (Sec.\,\ref{#2}, Thm.\,\ref{#3})
}
\tikzstyle{overlay}=[fill=white,inner sep=0]
\def\arrowBatch#1#2#3{
    \foreach \i in {-2,-1,0,1,2}{
        \path (#1.south) ++ (\i*0.5,0) coordinate (from);
        \path (#2.north) ++ (\i*0.5,0) coordinate (to);
        \draw[->] (from) -- (to);
    }
    \path ($(#1.south)!0.5!(#2.north)$) node[overlay,anchor=base] {#3};
}
\begin{tikzpicture}[>=stealth,>={Latex[length=2mm,width=1.5mm]},line width=0.7pt]
    \node[sketch] (result) {
        \sketchNameThm{Stuffed IBLT}{sec:maintheorem}{thm:main}\\
        \perfTable{$n$}{0}{$\OO(n^{-c})$}
    };
    \node[sketch,below=of result] (quotient) {
        \sketchName{Quotient}{sec:quotient-sketch}{quotientsketch}\\
        \perfTable{$n$}{$\OO(n^{1-ε/4})$}{$\OO(n^{-c})$}
    };
    \node[sketch,below=of quotient] (split) {
        \sketchName{Split}{sec:split-coupled}{splitsketch}\\
        \perfTable{$n₂ ≈ \sqrt{n}$}{$\tOO(n₂^{2/3})$}{$\OO(n₂^{-c})$}
    };
    \node[sketch,below=of split] (heuristic) {
        \sketchName{Heuristic}{sec:coupled-oracle}{heuristicsketch}\\
        \perfTable{$n₁ ≈ n₂^{2/3}$}{$\tOO(\sqrt{n₁})$}{$\tOO(n₁^{-1})$}
    };
    \node[sketch,below=of heuristic] (oracle) {
        \sketchName{Oracle}{sec:coupled-iblt}{oraclesketch}\\
        \perfTable{$n₀ = n₁^{1/2}$}{$0$}{$\tOO(n₀^{-1})$}
    };
    \node[sketch,below right=of result] (backyard) {
        \sketchName{Backyard}{sec:backyard}{thm:backyard}\\
        \perfTable{\!\!\!$n_b ≈ n^{1-ε/4}$\!\!\!}{$0$}{$\OO(n^{-c})$}
    };
    \node[sketch,below=of backyard] (iblt) {
        \sketchName{Patched IBLT}{sec:backyard}{backyard-inner}\\
        \perfTable{$n_b$}{$0$}{$\OO(n^{-1})$}
    };
    \node[auxiliary,left=of heuristic,anchor=south,rotate=90,minimum width=7.3cm, inner sep=10] (hashes) {
        $n₂^{1.1}$-independent hash functions
    };
    \node[auxiliary,right=2cm of heuristic,inner sep=10] (ispure) {
        \textsc{isPure}-heuristic
    };
    \arrowBatch{quotient}{split}{$n^{1/2}$-fold partitioning}
    \arrowBatch{split}{heuristic}{$n₂^{1/3}$-fold partitioning}
    \arrowBatch{heuristic}{oracle}{$n₁^{1/2}$-fold partitioning}
    \arrowBatch{backyard}{iblt}{$c$-fold redundancy}
    \draw[->] (result) -| node[above left] {redundant storage} (backyard);
    \draw[->] (result) -- node[overlay] {main storage} (quotient);
    \draw[->] (quotient) -| node[above] {instantiates} (hashes);
    \foreach \s in {split,heuristic,oracle}{
        \draw[->] (\s) -- node[above] {uses} (hashes.south |- \s);
    }
    \draw[->] (heuristic) -- node[above] {provides} (ispure);
    \draw[->] (oracle) -| node[above left] {uses as oracle} (ispure);
\end{tikzpicture}
  \caption{Overview of our structures. With the exception of the backyard, sketches of capacity $N$ use \((1+e^{-\Omega(c)})N\) cells, support updates in \(\OO(c)\) time, and decoding in \(\OO(nc)\) time.}
  \label{fig:overview}
\end{figure}

\section{Preliminaries} \label{sec:preliminaries}
Each sketch we present is parameterized by two positive integer parameters \(n\) and \(c\), as well as a non-negative real \(\epsilon\):
The capacity \(n\) indicates the number of distinct keys that may be stored in the sketch at the time of decoding.
The quality parameter \(c\) roughly translates to the number of locations each key is mapped to in the sketch.
The use of more hash functions both allows us to make the sketch more compact and, for the sketches with tunable error probability, to decrease the probability of incorrectly retrieving the stored keys down to \(\OO(n^{-c})\).
At the same time, \(c\) affects the running time with updates taking \(\OO(c)\) time and decoding expected \(\OO(cn)\) time.
Some of our arguments only apply for sufficiently large \(n\), notably \cref{thm:threshold}, with the threshold being a function of \(c\).
Parameter \(c\) must thus be set independently of \(n\).

The \emph{stretch} parameter \(\epsilon\) controls the space overhead used by our sketches, and must likewise be set independently of \(n\).
A smaller \(\epsilon\) ensures lower space consumption at the cost of slower updates.

Throughout this section we use \(\calS \coloneqq \set{x \in \calU \colon f_x \neq 0}\) to denote the set of up to \(n\) keys stored in the sketch at the time of decoding, and the notation \(\tOO\) to hide factors polylogarithmic in \(n\) as well as factors depending only on \(c\).
In other words, \(\tOO(g(n)) \coloneq \OO(g(n)) \cdot (\log n)^{\OO(1)} \cdot f(c)\).
We will not, however, employ this notation when discussing running time, emphasizing that the time spent performing update and decode-operations only scale linearly with \(c\).

\paragraph{Error Tolerance and Failure Probability.}
In \cref{sec:coupled-oracle} onwards we will employ numerous (smaller) sketches as subroutines, and they may fail to decode correctly.
If a sketch of capacity \(n\) retrieves \(n\) keys and is still not empty, this is a clear sign that decoding failed.
In general, however, a sketch will not be able to verify its own output and will return up to \(n\) keys when failing -- an unknown number of which will belong to \(\calS\).

For sketches building on substructures, it will often not be possible to recover \(\calS\) exactly.
Let \(X\) be the set of key which are reported ``wrongfully'': that is, keys whose multiplicity in \(\calS\) differ from the multiplicity reported when decoding the sketch (either of which may be zero).
A sketch is said to have \emph{error tolerance} \(t\) and failure propability \(p\) if \(\Prp{\size{X} > t} \leq p\).
We say that the sketch is decoded correctly as long as \(\size{X} \leq t\).

Crucially, our final sketch has an error tolerance of zero, i.e.\ when it does not fail then it recovers \(\calS\) exactly.

\subsection{Hash Functions} \label{sec:hashfunctions}
In this short section we introduce the hash functions that we employ throughout the paper, and discuss their properties.
For details on their construction, we refer to \cref{sec:hashing-appendix,sec:quotienthashing}.

\begin{restatable}{lemma}{hashfunctionstwo} \label{hashfunctions2}
  For any \(\epsilon < 1\) and positive integers \(m\) and \(k\) such that \(k \leq m \leq \calU\), and \(r \leq m^{\OO(1)}\) there exists a family of hash functions \(h : \calU \to [r]\) with the following properties:
  \begin{enumerate}
  \item The space usage of \(h\) is \(\OO(km^{\epsilon}  \epsilon^{-3} \log m + \log\size{\calU})\) bits.
  \item Constructing and evaluating \(h\) takes time \(\OO(\epsilon^{-3})\).
  \item Any fixed set of \(\ell \leq m\) keys is hashed \(k\)-independently with probability \(1 - \OO(\ell^2/m^{3})\). \label{hashfunctions2:independence}

  \item Any iteratively defined set (see \cref{iterativelyDefinedDef} below) of \(\ell \leq k\) keys is hashed independently with probability \(1 - \OO(\ell^2/m^3)\). \label{hashfunctions2:iterativelydefined}
  \end{enumerate}
  We refer to \(m\) as the \emph{capacity} of the function and to \(k\) as its \emph{independence}.
\end{restatable}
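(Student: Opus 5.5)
We would build $h$ from the construction of Christiani, Pagh and Thorup~\cite{DBLP:conf/stoc/ChristianiPT15}, following Siegel's and Thorup's approach to highly independent hashing. Their family evaluates in constant time and hashes any fixed set of keys independently with high probability. Concretely, we choose a constant number $d = \OO(1/\epsilon)$ of first-level functions $g_1,\dots,g_d \colon \calU \to [m^{\epsilon}]$, each drawn from an $\OO(1)$-independent polynomial family over $\F_p$; each such function takes $\OO(\log\size{\calU})$ bits. We then combine them with tables $T_1,\dots,T_d$ of $m^{\epsilon}$ entries each, where every entry is a random element of a $k$-independent polynomial family into $[r]$ of degree $k-1$, or alternatively a random value of $\OO(\log m)$ bits. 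The output is $h(x) = \sum_i T_i[g_i(x)]$ in the appropriate group.

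Space and time are then routine. The tables contribute $d\,m^{\epsilon}$ entries. To get evaluation time independent of $k$, we will not store degree-$k$ polynomials per entry. Instead we use the Christiani--Pagh--Thorup expander-based layering with $k m^{\epsilon}$ random words, each of $\OO(\log m)$ bits (possible since $r \le m^{\OO(1)}$). The $\epsilon^{-3}$ factor will come from their parameter choice: $\OO(\epsilon^{-1})$ levels, each with $\OO(\epsilon^{-2})$ probes. This gives Property~1 and the evaluation time in Property~2. Construction time $\OO(\epsilon^{-3})$ is understood with lazy or pre-filled random words, as in the source.

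For Property~\ref{hashfunctions2:independence}, we call a set $\{x_1,\dots,x_\ell\}$ \emph{peelable} for the bipartite graph of keys versus table cells $\{(i,g_i(x))\}$ if every nonempty subset of at most $k$ keys contains a key owning a cell unique to it within the subset. Peelability implies that $h$ restricted to the set is $k$-independent, by the standard argument that the unique cell's random entry masks that key's value. The failure probability of peelability is then bounded by counting small obstructions. The dominant obstruction is two keys colliding in all $d$ first-level functions, which has probability at most $\binom{\ell}{2} m^{-\epsilon d}$. With $d \ge 3/\epsilon$ this is $\OO(\ell^2/m^3)$. Larger obstructions of size $s$ contribute terms like $\ell^s m^{-\epsilon d s/2}$, which are dominated by the $s=2$ term when $\ell \le m$ and $d$ is a large enough multiple of $1/\epsilon$.

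The main obstacle is Property~\ref{hashfunctions2:iterativelydefined}. Iteratively defined sets, in the sense of \cref{iterativelyDefinedDef}, are sets where the choice of the $j$th key may depend on the hash values of the earlier keys, so the fixed-set union bound does not apply directly. We would handle it by revealing $g_1,\dots,g_d$ key by key and exposing only the table entries needed for each key. We then argue via a martingale/principle-of-deferred-decisions argument that, for the $j$th key, the probability that it fails to obtain a fresh unique cell, or creates an obstruction with the earlier keys, is at most $\OO(j/m^3)$ regardless of the history. This holds because the first-level functions are chosen independently of the table entries, so conditioning on revealed hash values only gives information about already-used cells. Summing over $j \le \ell \le k$ yields $\OO(\ell^2/m^3)$. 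The restriction $\ell \le k$ ensures that all revealed entries lie within a $k$-independent regime. If this decoupling fails for the CPT layering, the fallback is to use the simpler Siegel-style two-level scheme only for this property and pay the same space bound.
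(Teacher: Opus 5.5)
There is a genuine gap in Property~3, and your Property~4 argument inherits it.

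Your obstruction count is for the plain two-level scheme $h(x)=\sum_i T_i[g_i(x)]$, with $d=\OO(1/\epsilon)$ tables of $m^{\epsilon}$ cells holding $\OO(\log m)$-bit random entries (the only variant with fast evaluation). That scheme cannot be $k$-independent once $k>dm^{\epsilon}$. Every peeled key consumes a cell of its own, so no set of more than $dm^{\epsilon}$ keys is peelable. Indeed, given the $g_i$, the $k$ outputs are a linear image of only $dm^{\epsilon}$ random entries, so $\Pr[h(x_1)=\dots=h(x_k)=0]\ge r^{-dm^{\epsilon}}>r^{-k}$. Yet the lemma allows $k$ up to $m$, and the paper instantiates it with $k$ linear in the capacity.

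Several of your estimates fail for the same reason:
\begin{itemize}
\item Your bound $\ell^s m^{-\epsilon ds/2}$ for obstructions of size $s$ drops a factor of order $s^{ds/2}$, which comes from $s$ keys sharing at most $s/2$ cells per table. The corrected bound is trivial once $s$ is of order $m^{\epsilon}$.
\item Even that computation needs the $g_i$ to be roughly $s$-wise independent on the set. Yours are only $\OO(1)$-independent, so obstructions with more than $\OO(1)$ keys are not controlled at all.
\item In the adaptive argument, the chance that the $j$th key finds all $d$ of its cells already used is about $(j/m^{\epsilon})^{d}=j^{3/\epsilon}/m^{3}$ (for $d=3/\epsilon$), not $\OO(j/m^{3})$. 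It is bounded below by a constant once $j\approx m^{\epsilon}$.
\item Revealed outputs do leak information about the $g_i$ (equal outputs hint at shared cells), unless you first condition on an event that makes the outputs exactly uniform.
\end{itemize}
Switching to the Christiani--Pagh--Thorup layering does not repair this. Your peelability analysis is not an analysis of that structure, and their construction applied directly to $\calU$ has space growing like $k\size{\calU}^{1/t}$, not $\OO(km^{\epsilon}\epsilon^{-3}\log m)$.

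The missing idea is universe reduction, followed by a function that is $k$-independent on its whole, small, domain. The paper takes $h=g\circ p$:
\begin{itemize}
\item $p\colon\calU\to[m^3]$ is a random linear map, using $\OO(\log\size{\calU})$ bits and constant time.
\item $g\colon[m^3]\to[r]$ is the Christiani--Pagh--Thorup function with $u=m^3$ and $t=\lceil 3/\epsilon\rceil$, so $u^{1/t}\le m^{\epsilon}$ yields the stated space and $\OO(\epsilon^{-3})$ time.
\end{itemize}
Because the domain is polynomial in $m$, $g$ is $k$-independent on all of $[m^3]$ with probability $1-m^{-3}$. This is one global event, so no obstruction counting is needed.

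Property~3 is then just the birthday bound $\ell^2/m^3$ for $p$ to be injective on the fixed set.

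For Property~4, condition on $g$ being $k$-independent. As long as no $p$-collision has occurred among the first $i\le k$ keys, the values $h(x_1),\dots,h(x_i)$ are independent and uniform whatever $p$ is. They therefore carry no information about $p$, and the adversary could have generated them with internal coins. So the adaptive collision probability equals the offline one, $\OO(\ell^2/m^3)$.

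Your observation that $d$ independent pairwise-independent maps into $[m^{\epsilon}]$ make a fixed pair collide with probability $m^{-\epsilon d}=m^{-3}$ is exactly such a reduction. Use it, or a single linear map into $[m^3]$, purely as a reduction, then apply the layered construction as a black box on the reduced domain. Do not attach size-$m^{\epsilon}$ tables directly to the $g_i$. A single linear map also avoids your extra $\epsilon^{-1}$ factor on the $\log\size{\calU}$ space term.
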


Intuitively, the capacity \(m\) should be thought of as the primary parameter of the family, and will coincide with the capacity of the sketches described in the following sections.
Independence \(k\) mirrors the classic notion of \(k\)-independence, except that we incur a small additive error for each set that we claim independence on.
The space parameter \(\epsilon\) will be set to a constant value in the majority of our applications, with the exception of The Backyard (\cref{sec:backyard}).

Iteratively defined sets are the sets of keys that can be defined as a function of a limited number of observed hash values, formalized as follows:
\begin{restatable}[Iteratively Defined Sets]{definition}{iterativelydefinedDef} \label{iterativelyDefinedDef}
  A random set of keys \(\set{x_1, \dots, x_\ell} \subset \calU\) is said to be iteratively defined with respect to hash function \(h\) if, for each \(i \in [\ell]\),
the key \(x_i\) is determined as a function of prior keys \(x_1, \dots, x_{i-1}\) and their associated hash values \(h(x_1), \dots, h(x_{i-1})\).  
\end{restatable}

For the sketches with tunable failure probability, we will need \cref{hashfunctions2}~(\ref{hashfunctions2:independence}) to hold with higher probability \(1 - \OO(n^{-c})\).
This is achieved by instantiating \(\OO(c)\) hash functions from \cref{hashfunctions2}, and using their sum as the hash value.
The probability that all \(\OO(c)\) functions fail to be independent on the set of keys in question is then bounded by \(\OO(n^{-c})\).

Throughout the construction of our sketch, we need to show that the number of keys assigned to specific sub-structures is concentrated around its mean, even when the assignments are done by a hash function of limited independence.
We employ the following result due to \cite{DBLP:journals/siamdm/SchmidtSS95}.
\begin{lemma}[{\cite[Theorem 5.I.b]{DBLP:journals/siamdm/SchmidtSS95}}] \label{concentration}
  Let \(X\) be the sum of \(k\)-wise independent random variables taking values in \([0, 1]\), and define \(\mu \coloneq \Ep{X}\).
  For any \(0 \leq \delta \leq 1\) s.t. \(\delta^2\mu \leq k\),
  \[
  \Prp{X \geq (1+\delta)\mu} \leq \exp(- \floor{\delta^2\mu/3}) \, .
  \]
\end{lemma}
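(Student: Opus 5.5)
The statement is \cref{concentration}, a Chernoff-type tail bound for sums of $k$-wise independent $[0,1]$-valued variables, cited from Schmidt, Siegel and Srinivasan. The plan is the standard moment method: use only moments of order at most $k$, since their values are the same as in the fully independent case.

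\textbf{Reduction to a moment bound.} Write $X=\sum_i X_i$ with $\mu=\Ep{X}$, and fix an even integer $t\le k$, to be chosen as roughly $\lfloor\delta^2\mu/3\rfloor$ rounded to an even number. By Markov's inequality applied to $(X-\mu)^t$,
\[
\Prp{X\ge(1+\delta)\mu}\le \frac{\Ep{(X-\mu)^t}}{(\delta\mu)^t}.
\]
The hypothesis $\delta^2\mu\le k$ is exactly what guarantees $t\le k$.

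\textbf{Comparison with the fully independent case.} Expand $(X-\mu)^t=\bigl(\sum_i (X_i-p_i)\bigr)^t$. Each monomial involves at most $t\le k$ distinct indices. By $k$-wise independence, every such term has the same expectation as it would if the $X_i$ were fully independent with the same marginals. Hence $\Ep{(X-\mu)^t}$ equals the $t$-th central moment of an independent sum.

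Alternatively, and this is closer to the cited proof, one can work with the elementary symmetric polynomial $S_j(X_1,\dots,X_n)$ for $j\le k$. Its expectation is the same as under full independence and is at most $\binom{n}{j}(\mu/n)^j\le \mu^j/j!$. On the event $X\ge(1+\delta)\mu$ we have $S_j\ge\binom{(1+\delta)\mu}{j}$; this holds for $0/1$ variables, and for $[0,1]$-valued ones it follows by a convexity and majorization argument. Markov's inequality then gives
\[
\Prp{X\ge(1+\delta)\mu}\le \frac{\mu^j/j!}{\binom{(1+\delta)\mu}{j}}.
\]

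\textbf{Choice of $j$ and the main obstacle.} Take $j=\lceil\delta\mu/(1+\delta)\rceil$, capped at $k$. The remaining step is an estimate of this ratio: the product $\prod_{i<j}\mu/((1+\delta)\mu-i)$ must be compared with $e^{-\delta^2\mu/3}$ using $\ln(1+x)\ge x-x^2/2$ for $\delta\le1$. I expect this calculation to be the main obstacle. It requires tracking the floors and the case where $j$ is capped at $k$. In the capped case $\delta^2\mu\le k$ is used again to show that $j=k$ still yields $\exp(-\lfloor\delta^2\mu/3\rfloor)$.

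Since the paper only imports the lemma, citing \cite[Theorem 5.I.b]{DBLP:journals/siamdm/SchmidtSS95} suffices. The sketch above explains why the bound holds.
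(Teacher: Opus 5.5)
The paper gives no proof of this lemma; it only cites \cite[Theorem 5.I.b]{DBLP:journals/siamdm/SchmidtSS95}, so your closing remark matches the paper. The sketch you offer as justification, however, has a genuine gap in the route you commit to: the symmetric-polynomial route cannot reach the claimed bound under the hypothesis $\delta^2\mu\le k$. You must take $j\le k$, since beyond that $\Ep{S_j}$ is no longer controlled. For $i\ge 0$ every factor of your product satisfies $\mu/((1+\delta)\mu-i)\ge 1/(1+\delta)$. So the best this route can give is at least $(1+\delta)^{-j}\ge(1+\delta)^{-k}\ge e^{-\delta k}$. The hypothesis allows $k=\ceil{\delta^2\mu}$, and then this is $e^{-\Theta(\delta^3\mu)}$, not $e^{-\delta^2\mu/3}$. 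For example, with $\delta=0.1$, $\mu=10^4$, $k=100$, you get at best about $e^{-9.5}$, while the lemma claims $e^{-33}$. So the ``capped case'' is not a matter of tracking floors; Markov on $S_j$ cannot close it at all. This method needs $k$ of order $\delta\mu$. With $j=\ceil{\delta\mu}$, the point where the factors cross $1$, it recovers the full Chernoff bound. Even your uncapped choice $j=\ceil{\delta\mu/(1+\delta)}$ stops too early: at $\delta=1$ it yields only about $e^{-0.28\mu}$.

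A threshold of order $\delta^2\mu$ (in the cited theorem, $\ceil{e^{-1/3}\delta^2\mu}$) comes from the even-central-moment method, i.e.\ your first route. You leave that route exactly where the real work begins. Reducing $\Ep{(X-\mu)^t}$ to the fully independent case is the easy step. What is missing is a sharp bound on that central moment for independent $[0,1]$-valued summands. It should have the shape $\Ep{(X-\mu)^t}\le C\,(t\mu/e)^{t/2}\exp\bigl(\OO(t^{3/2}/\sqrt{\mu})\bigr)$ for even $t$, essentially the Gaussian moment $(t-1)!!\,\mu^{t/2}$ plus a correction for non-Gaussianity. Markov then gives roughly $\bigl(t/(e\delta^2\mu)\bigr)^{t/2}$ up to lower-order factors. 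Choosing an even $t\approx e^{-1/3}\delta^2\mu$, which is at most $k$ by hypothesis, gives an exponent of at most about $-0.37\,\delta^2\mu$ for all $\delta\le 1$. After handling $C$ and rounding, this yields $\exp(-\floor{\delta^2\mu/3})$. Cruder moment estimates, e.g.\ of the form $(t\mu)^{t/2}$ without the $1/e$, only give $e^{-c\delta^2\mu}$ with $c<1/3$. So the precise constant in the central-moment estimate is the actual content of the lemma, and your proposal does not supply it.
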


Finally, we need the following \emph{quotient} hash function, which allows us to recover \(x\) from \(h(x)\). For a proof, see \cref{sec:quotienthashing}.
\begin{restatable}[variant of {\cite[Theorem 5]{DBLP:conf/latin/DemaineHPP06}}]{theorem}{quotientfunctions} \label{quotientfunctions}
  For positive integers \(m\), \(c'\) and \(b\) such that \(m^{31/32} < b < m\) there exists a family of invertible hash functions \(h \colon \calU \to [b] \times [\size{\calU}/b]\) with the following properties:
  \begin{enumerate}
  \item Evaluating \(h\) and \(h^{-1}\) both takes time \(\OO(c')\).
  \item \(h\) and \(h^{-1}\) use \(\OO(m^{3/4} \log \size{\calU})\) bits of space.
  \item For any set \(S \subseteq \calU\) of \(m\) keys, and any \(\delta < 1\), the following holds with probability \(1 - \OO(n^{-c'})\):

    For \(i \in [b]\), let \(S_i = \set{x \in S \colon h_1(x) = i}\) be the keys mapped to bucket \(i\), and let \(B = \set{i \in [b] \colon \size{S_i} > (1+\delta)m/b}\) be the buckets assigned more than \((1+\delta)m/b\) keys.
    Then
    \[\sum_{i \in B} \size{S_i} \leq m \cdot \exp(-\delta^2m/(3b)) + \OO(m^{31/32}) \, .\]
  \end{enumerate}
\end{restatable}

\section{The Data Structure} \label{sec:datastructure}

In this section we prove \cref{thm:main}. As outlined in \cref{sec:design} and \cref{fig:overview}, this involves constructing a sequence of five auxiliary sketches presented in \crefrange{sec:coupled-iblt}{sec:backyard}, culminating in the final result sketch in \cref{sec:maintheorem}.

\subsection{A Sketch with a Purity Oracle} \label{sec:coupled-iblt}

We first describe the Oracle-Sketch from \cref{fig:overview}, which is an \emph{idealised} peeling-based multiset sketch that builds on the technique of spatial coupling \cite{walzer21}. It assumes access to a purity oracle, which will be replaced with a purity heuristic in \cref{sec:coupled-oracle}.
We invoke the following strengthened version of \cite[Theorem 1.1]{walzer21}, where \(n\) corresponds to the capacity of the Oracle-Sketch, and \(c\) is its quality parameter.
We prove \cref{thm:threshold} in \cref{sec:importing-spatial-coupling}.
\begin{restatable}{theorem}{threshold} \label{thm:threshold}
    For every $c ∈ ℕ$ and $n ≥ n₀(c)$ large enough, there exists $m ∈ ℕ$ and a distribution $\D$ on subsets of $[m]$ of size at most $c$ such that
    \begin{enumerate}
        \item $m = (1+e^{-Ω(c)})n$,
        \item \label{it:failure-to-peel} a hypergraph with vertex set $[m]$ and $n$ hyperedges sampled uniformly from $\D$ is peelable with probability $1-\OO(1/n)$,
        \item \label{it:form-of-hash-set} $H \sim \D$ arises as $H = \{h₀ + h_i \mid i ∈ [c]\}$ from independent random variables $h₀,h₁,…,h_c$ where, for some $w ∈ [m]$, $h₀$ is uniformly distributed in $[m-w]$ and $h₁,…,h_c$ are uniformly distributed in $\{0,…,w-1\}$, and
        \item \label{it:balanced} for $H \sim \D$ and all $i ∈ [m]$ we have $\Pr[i ∈ H] ≤ \frac{c}{n}$.
    \end{enumerate}
\end{restatable}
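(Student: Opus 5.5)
The plan is to start from the hypergraph distribution of \cite[Theorem 1]{walzer21} and adapt it in two ways: convert its peelability guarantee from ``with high probability'' into the explicit $1-\OO(1/n)$ bound of \cref{it:failure-to-peel}, and make the marginal cell load satisfy \cref{it:balanced}. Walzer's construction is already spatially coupled. There are $n$ hyperedges on $m$ vertices, a window of width $w$ (roughly $m/\sigma$ for a constant $\sigma=\sigma(c)$), and each hyperedge picks a uniform start $h_0$ and then $c$ independent offsets $h_1,\dots,h_c$ inside the window. That is exactly the form required by \cref{it:form-of-hash-set}. Walzer shows that for edge density $n/m$ below a threshold $1-e^{-\Theta(c)}$ the hypergraph is peelable w.h.p. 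Choosing $m=(1+e^{-\Omega(c)})n$ at that density then gives the first item directly.

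For the failure probability, I would split the argument into the standard two phases. In the \emph{bulk phase}, the wave-propagation argument (density evolution with concentration via bounded differences or Azuma) shows that with probability $1-\exp(-n^{\Omega(1)})$ peeling removes all but $\delta n$ edges, for any small constant $\delta$. In the \emph{final phase}, we must show that the remaining hypergraph has no nonempty $2$-core of size at most $\delta n$. A first-moment count does this: bound the expected number of vertex sets $V'$ of size $s\le \delta n$ that span at least $\lceil c|V'|/2\rceil$-type dense edge sets. Because of the coupling, the probability that a given edge lands inside $V'$ is at most about $(|V'|/w)^{c-1}\cdot(\text{local factor})$. The smallest cores (two edges sharing all their vertices, $s$ constant) then contribute $\OO(n^{2}\cdot n^{-1}\cdot w^{-(c-1)\cdot\ldots})$, which for $c\ge 2$ and $w=\Theta(n)$ comes out as $\OO(1/n)$. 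Larger $s$ give a geometrically decaying sum. I expect this small-core calculation to be the main obstacle. The subtlety is that edges are confined to windows, so within a window the edges are denser than $n/m$ by a factor $\sigma$. The counting therefore has to be done per window, with constants depending on $c$ absorbed into $n_0(c)$.

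For \cref{it:balanced}: under \cref{it:form-of-hash-set}, $\Pr[i\in H]\le \sum_{j=1}^c\Pr[h_0+h_j=i]$, and for each $j$ the sum $h_0+h_j$ of a uniform variable on $[m-w]$ and an independent uniform variable on $\{0,\dots,w-1\}$ has point probabilities at most $1/(m-w)$. This gives $\Pr[i\in H]\le c/(m-w)$. Walzer's construction takes $m-w$ only slightly larger than $n$. To get exactly $c/n$, I would pad: the claim follows as long as $m-w\ge n$, i.e.\ we choose $m$ so that $m\ge n+w$ while still $m=(1+e^{-\Omega(c)})n$. This requires the window $w$ to be at most $e^{-\Omega(c)}n$, i.e.\ taking the coupling parameter $\sigma$ large compared to $e^{\Theta(c)}$. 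Walzer's threshold result permits this, since the threshold approaches its limit as $\sigma\to\infty$ and any fixed sufficiently large $\sigma=\sigma(c)$ suffices. Finally, sampling \emph{at most} $c$ distinct vertices (duplicates among the $h_j$ are allowed) only helps peelability, since a hyperedge with repeated offsets is a subset of a genuine one.
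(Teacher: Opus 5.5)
Your route is the paper's. You take Walzer's spatially coupled distribution, which already has the form of item~\ref{it:form-of-hash-set}. You get the first item from the orientability threshold being $1-e^{-\Theta(c)}$; the paper cites \cite{DBLP:journals/mst/FotakisPSS05} for this and runs Walzer's theorem at density $t=2t_c^*-1$. You get the $\OO(1/n)$ bound by splitting failure into a bulk phase and small obstructions. The bulk phase is handled exactly as you suggest, by McDiarmid's inequality for the number of vertices surviving $R$ rounds of parallel peeling, with bounded differences $c^R$. The one difference is that the paper does not redo the small-obstruction count. It observes that the proof of \cite[Lemma 4.2]{walzer21} already gives $\OO(1/n)$, so what you call the main obstacle is a citation. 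Your derivation of item~\ref{it:balanced} from $\Pr[i\in H]\le c/(m-w)$ and $m-w\ge n$ fills in something the paper leaves implicit.

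Since you propose to redo that count yourself, note three errors in the sketch.

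First, your double-edge estimate $n^2\cdot n^{-1}\cdot w^{-(c-1)}$ with $w=\Theta(n)$ is $\Theta(n^{2-c})$. That is $\OO(1/n)$ only for $c\ge 3$; at $c=2$ it is $\Theta(1)$. In fact item~\ref{it:failure-to-peel} fails for $c\le 2$ whenever $m=\OO(n)$: singleton edges collide (for $c=1$), or a cycle appears (for $c=2$), with probability $\Omega(1)$. So the statement can only be proved for $c\ge 3$, which is also the range in which the paper invokes Walzer.

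Second, your counting condition is inverted. A nonempty 2-core with vertex set $V'$ and edge set $E'$ satisfies $2|V'|\le\sum_{e\in E'}|e|\le c|E'|$, i.e.\ $|E'|\ge 2|V'|/c$. The union bound must range over all such configurations in the original hypergraph, whose 2-core lies inside what survives the bulk phase. Relatedly, coupling does not make edges denser by a factor $\sigma$ inside a window; vertex degrees stay around $cn/m$. What grows, by roughly $\sigma^{c-1}$, is the probability that a single edge lands inside a given small set, and that constant goes into $n_0(c)$.

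Third, your justification for the remark on duplicate offsets is wrong. Shrinking edges can destroy peelability: $\{1,2,3\},\{1,2,4\}$ is peelable, while $\{1,2\},\{1,2\}$ is not. The valid comparison is with the multiset edge $(h_0+h_1,\dots,h_0+h_c)$. Merging a repeated cell lowers that cell's degree without removing any cell, so every 2-core of the merged hypergraph is also a 2-core of the multiset one. This suffices if the imported result is for independently sampled offsets counted with multiplicity.
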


Our sketch uses the same structure as the hypergraph in \cref{thm:threshold} with $m = (1+e^{-Ω(c)})n$ cells corresponding to the vertices of the hypergraph and stored keys corresponding to hyperedges.
Any $x ∈ \U$ is thus associated with a set of cells $H_x \sim \D$,
and contributes to all cells of \(H_x\).

To realise the distribution \(\D\) described in \cref{thm:threshold} (\ref{it:form-of-hash-set}) we need $c+1$ hash functions that behave like fully random functions on \(\calS\), the set of at most $n$ keys present during decoding.%
\footnote{Since our sketch is history independent, we need not worry about keys that were previously present but have since been removed.
Their hash values no longer affect the state of the sketch.}
By applying \cref{hashfunctions2} (with both capacity \(m\) and independence \(k\) equal to \(n\)) such hash functions can be evaluated in time $\OO(1)$ each, and take up $\OO(cn^{1.1} \log n + c\log\size{\calU})$ bits of space in total.
The probability that any of the functions fail to hash \(\calS\) independently is bounded by $\OO(cn^{-1})$ by \cref{hashfunctions2}~(\ref{hashfunctions2:independence}).
When the hash functions are independent on \(\calS\), the probability that the keys aren't peelable is bounded by \(\OO(1/n)\) by \cref{thm:threshold}~(\ref{it:failure-to-peel}).
Decoding the sketch using \textsc{decodeIPMS} from \cref{alg:ipms} will thus succeed with probability at least \(1 - \OO(cn^{-1})\).
This algorithm uses a purity oracle, which is invoked at most $m+cn ≤ \OO(cn)$ times: Once for each cell initially, and again once whenever a cell is updated.
If the sketch is still non-empty after \(\OO(cn)\) such invocations peeling has failed and we terminate the decoding procedure.

We have yet to explain what is stored in a cell. We use two numbers $\keySum ∈ \U = \F_p$ and $\Count ∈ \{-L,…,L\}$. Addition in $\Count$ should wrap-around much like in $ℤ_{2L+1}$, but using a different ground set. Deviating from standard semantics, we require that “$\bmod (2L+1)$” yields a result in $\{-L,…,L\}$.\footnote{This is relevant because we are later reinterpreting $\Count$-values as elements of $\F_p$, and $-1$ and $2L$ are different elements of $\F_p$ unless $2L+1=p$.}
We add $f∈ℤ$ copies of $x ∈ \U$ to a cell by adding $f$ to $\Count$ and $f·x$ to $\keySum$.
If a cell stores $x ∈ \U$ with multiplicity $f ∈ \{-L,…,L\} \setminus \{0\}$ and we know it to be pure, then we can compute $f = \Count$ and $x = \keySum/\Count$.
Note that $\Count$ is different from zero (as an element of $\F_p$) by the assumption that $p ≥ 2L+1$ which we make for \cref{thm:main}. To summarise:

\begin{lemma}[Oracle-Sketch] \label{oraclesketch}
  In the setting of \cref{thm:main}, the Oracle-Sketch is a sketch containing \((1+e^{-\Omega(c)})n\) cells. Updates take \(\OO(c)\) time. Decoding requires \(\OO(cn)\) time and \(\OO(cn)\) queries to an oracle that determines if a cell is pure. Decoding retrieves \(\calS\) with probability \(1 - \tOO(n^{-1})\).

  The hash functions of the Oracle-Sketch take up \(\tOO(n^{1.1} + \log\size{\calU})\) space.
\end{lemma}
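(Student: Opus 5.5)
The lemma essentially collects the observations made just before it, so the proof assembles them in order: space, update time, decoding correctness, decoding time, and hash-function space. We instantiate \cref{thm:threshold} with the quality parameter $c$ and capacity $n\ge n_0(c)$. This yields $m=(1+e^{-\Omega(c)})n$ and a distribution $\D$ of the form in \cref{thm:threshold}~(\ref{it:form-of-hash-set}). The sketch is an array of $m$ cells, each holding a pair $(\keySum,\Count)\in\F_p\times\{-L,\dots,L\}$, which gives the cell count.

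We realise $H_x$ with $c+1$ hash functions from \cref{hashfunctions2}, each with capacity and independence $n$ and constant $\epsilon$ (say $\epsilon=0.1$). The function $h_0$ has range $[m-w]$ and $h_1,\dots,h_c$ have range $\{0,\dots,w-1\}$; both ranges are at most $m=n^{\OO(1)}$, as the lemma requires. Each function uses $\OO(n^{1.1}\log n+\log|\U|)$ bits, so $c+1$ of them use $\tOO(n^{1.1}+\log|\U|)$ bits, since factors depending only on $c$ are absorbed into $\tOO$. An update of $f$ copies of $x$ evaluates $c+1$ hash functions in $\OO(1)$ time each. It then adds $f$ to $\Count$ (wrapping modulo $2L+1$ into $\{-L,\dots,L\}$) and $f\cdot x$ to $\keySum$ in at most $c$ cells, for $\OO(c)$ time in total.

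For correctness, we condition on the event $E$ that all $c+1$ functions are fully independent on $\calS$, where $|\calS|\le n$. By \cref{hashfunctions2}~(\ref{hashfunctions2:independence}) with $\ell\le n=m_{\mathrm{cap}}$, each function fails on $\calS$ with probability $\OO(n^2/n^3)=\OO(1/n)$. A union bound over the $c+1$ functions gives $\Pr[\neg E]=\OO(c/n)$. Under $E$, the sets $H_x$ for $x\in\calS$ are i.i.d.\ samples from $\D$. A hypergraph with fewer than $n$ edges is a subhypergraph of one with $n$ edges, and peelability is monotone under deleting edges. Hence \cref{thm:threshold}~(\ref{it:failure-to-peel}) implies the hypergraph is peelable with probability $1-\OO(1/n)$.

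Peeling with a purity oracle then recovers $\calS$ exactly. The reason is that a pure cell holds $f=\Count\neq 0$ in $\F_p$, because $p\ge 2L+1$ and $f\in\{-L,\dots,L\}\setminus\{0\}$. Thus $x=\keySum/\Count$ is well defined, and subtracting $f$ copies of $x$ from the cells in $H_x$ removes that edge. The overall failure probability is therefore $\OO(c/n)=\tOO(n^{-1})$.

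For decoding time we use \textsc{decodeIPMS}. It queries the oracle once per cell initially and once per cell modification. Each peeled key modifies at most $c$ cells, and at most $n$ keys are peeled before the sketch either empties or we abort after $m+cn=\OO(cn)$ queries. Each peeling step costs $\OO(c)$, giving $\OO(cn)$ time and $\OO(cn)$ oracle queries.

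The only delicate points are bookkeeping. The first is checking that the ranges of $h_0,\dots,h_c$ meet the requirement $r\le m^{\OO(1)}$ of \cref{hashfunctions2}. The second is the monotonicity argument for $|\calS|<n$. Neither should be a real obstacle, so the proof is a direct assembly of \cref{thm:threshold} and \cref{hashfunctions2}.
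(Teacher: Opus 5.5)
Your proposal is correct and matches the paper's argument. Like the paper, you instantiate \cref{thm:threshold}, realise $\D$ with $c+1$ functions from \cref{hashfunctions2} of capacity and independence $n$, union-bound the $\OO(c/n)$ hashing failure with the $\OO(1/n)$ non-peelability, bound the oracle calls by $m+cn$, and use $p \ge 2L+1$ to decode pure cells. Your explicit checks of the range condition and of monotonicity of peelability when $|\calS|<n$ are small details the paper leaves implicit.
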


\subsection{A Sketch with a Purity Heuristic} \label{sec:coupled-oracle}
The goal of the Heuristic-Sketch is to improve upon the Oracle-Sketch by swapping the purity oracle for the following purity \emph{heuristic}, which attempts to verify whether the cell \(i\) is pure.

\begin{algorithm}

  \SetKwFunction{isPure}{isPure}
  \proc{\isPure{$i$}}{
    $(\keySum,\Count) ← T[i]$ \tcp{access cell $i$}
    \lIf{$\Count = 0$}{\Return false}
    $x ← \keySum / \Count$\;
    \Return $i ∈ H_x$\;
  }
\end{algorithm}
The heuristic will always recognize a pure cell, but may mistakenly report that a cell containing several keys is pure.
The following definition captures the problem cases for the heuristic.
\begin{definition} \label{def:anomaly}
    An \emph{anomaly} of size $ℓ ≥ 3$ is given by a cell $i ∈ [m]$ and a set of keys $\{x₁,…,x_{ℓ-1}\}$ stored in the sketch with multiplicities $f₁,…,f_{ℓ-1}$ such that the following holds.
    \begin{enumerate}
        • The sum $f_ℓ \coloneqq f₁ + … + f_{ℓ-1} \bmod (2L+1)$ is not zero.
        • With $x_ℓ \coloneqq (f₁x₁+…f_{ℓ-1}x_{ℓ-1})/f_ℓ \in \F_p$ we have that every $x ∈ \{x₁,…,x_ℓ\}$ satisfies $i ∈ H_x$.
    \end{enumerate}
\end{definition}
Anomalies are exactly what fools this heuristic: Sets of several keys all placed in the same cell that make \((\keySum, \Count)\) look like \(f_\ell\) copies of key \(x_\ell\), with \(i \in H_{x_\ell}\). 
\begin{observation} \label{lem:heuristic}
    In the absense of anomalies, the purity heuristic can serve as an (unfailing) purity oracle.
\end{observation}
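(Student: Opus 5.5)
The observation is a one-sided statement: \textsc{isPure} never rejects a pure cell, and, absent anomalies, it never accepts an impure one. I would prove the two directions separately, working directly from the cell semantics of the Oracle-Sketch and \cref{def:anomaly}.

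\emph{Completeness (pure cells are accepted).} Suppose cell $i$ is pure, meaning exactly one key $x$ with nonzero multiplicity $f ∈ \{-L,…,L\}\setminus\{0\}$ contributes to it. Then $\Count = f ≠ 0$ and $\keySum = f·x$. Since $p ≥ 2L+1$, $f$ is nonzero as an element of $\F_p$, so the heuristic computes exactly $x = \keySum/\Count$. Because $x$ contributes to cell $i$, we have $i ∈ H_x$, and the heuristic returns true. This direction needs no assumption on anomalies.

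\emph{Soundness (no false positives without anomalies).} Take the contrapositive: assume \textsc{isPure}$(i)$ returns true while cell $i$ is not pure, and construct an anomaly. First treat the empty cell. If no key contributes, then $\Count = 0$ and the heuristic returns false. Otherwise let $x₁,…,x_{ℓ-1}$ with $ℓ-1 ≥ 2$ be the keys currently stored whose hash sets contain $i$, with multiplicities $f₁,…,f_{ℓ-1}$. Then $\Count = f₁+…+f_{ℓ-1} \bmod (2L+1)$, and this is nonzero because the heuristic did not return false; this is the first anomaly condition, with $f_ℓ \coloneqq \Count$. The candidate $x_ℓ = \keySum/\Count = (f₁x₁+…+f_{ℓ-1}x_{ℓ-1})/f_ℓ$ satisfies $i ∈ H_{x_ℓ}$, since that is exactly what the heuristic tested. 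Each $x_j$ with $j<ℓ$ satisfies $i ∈ H_{x_j}$ by the choice of keys. So all the conditions of \cref{def:anomaly} hold with size $ℓ ≥ 3$, which is the contradiction we want.

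\emph{Main subtlety.} The delicate point is the meaning of "stored in the sketch" during peeling. The cell contents change as keys are subtracted, so the keys contributing to cell $i$ at query time are those not yet peeled. Peeled keys must be removed correctly, which holds inductively as long as every earlier positive answer was correct. I would therefore phrase the argument as an induction over the oracle calls of \textsc{decodeIPMS}. Assuming all previous answers were correct, the current cell content equals the sum over the keys of $\calS$ that remain and hash to $i$, and these form a subset of $\calS$. The argument above then applies, provided "anomaly" is read as ranging over subsets of the stored key set. One more point: $x_ℓ$ need not differ from the $x_j$, and the definition does not require it.
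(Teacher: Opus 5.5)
Your proposal is correct and matches the paper's reasoning. The paper gives no separate proof; it simply remarks after \cref{def:anomaly} that anomalies are exactly the configurations making a cell's content look like $f_\ell$ copies of some $x_\ell$ with $i \in H_{x_\ell}$, which is precisely your soundness direction, while your completeness argument (via $p \geq 2L+1$) and your induction over peeling steps, with anomalies ranging over subsets of $\calS$ as in the proof of \cref{lem:probability-for-anomaly}, spell out what the paper leaves implicit.
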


\paragraph{Coupled-Sketch.}
Let us use the name \emph{Coupled-Sketch} to refer to a variant of the Oracle-Sketch where we replace the oracle with the \textsc{isPure}-heuristic.

The Coupled-Sketch faces two issues: The probability that an anomaly occurs is large, and \textsc{isPure} takes \(\OO(c)\) time to evaluate which leads to a decoding time of \(\OO(c^2 n)\).
The \emph{Heuristic-Sketch} remedies these two issues by partitioning its cells into several substructures, isolated from one another.
Hence an anomaly appearing in one structure will not prevent the decoding of the other structures.
We allow a small number of structures to fail in this way, which ensures that all but a sublinear number of keys will be decoded correctly.
Further, the partitioning of the sketch allows us to implement \textsc{isPure} in expected constant time.

\paragraph{Heuristic-Sketch.}
For any quality parameter \(c \geq 1\), the Heuristic-Sketch of capacity \(n\)
consists of \(\sqrt{n}\) Coupled-Sketches \(T_1, \dots, T_{\sqrt{n}}\), each of capacity \(n_0 \coloneqq \sqrt{n} + \sqrt{3\ln n} \cdot n^{1/4} \leq (1+\oo(1))\cdot \sqrt{n}\) and quality \(c\).
The sketch also involves a hash function \(h_1 \colon \calU \to [\sqrt{n}]\) which partitions \(\calU\) across the Coupled-Sketches:
To add \(f\) copies of key \(x\) to the sketch, we add \(f\) copies of \(x\) to \(T_{h_1(x)}\).

Instead of supplying each Coupled-Sketch with its own set of hash functions for computing the positions \(H_x\) associated with key \(x\), we instantiate a single set of \(c+1\) functions (as described in \cref{sec:coupled-iblt}), denoted \(h_2\), which is shared across \(T_1, \dots, T_{\sqrt{n}}\).
Each function in \(h_2\) is instantiated through \cref{hashfunctions2} with capacity \(n\) and independence \(\OO(cn)\).
Thus \(h_2\) is still evaluated in \(\OO(c)\) time, and takes up \(c\cdot \OO(cn^{1.1} \log n + c\log \size{\calU}) \leq \tOO(n^{1.1} + \log\size{\calU})\) space.

We construct \(h_1\) from \cref{hashfunctions2} with the same parameters, \(m=n\) and \(k=\OO(cn)\).
Thus \(h_1\) can be evaluated in constant time, and updates to the structure still take \(\OO(c)\) time.

\subsubsection{Decoding the Heuristic-Sketch} \label{sec:decoding-heuristic-sketch}
Like the Oracle-Sketch, decoding of the Heuristic-Sketch is allowed to fail with probability \(\tOO(1/n)\).
Unlike the Oracle-Sketch, the Heuristic-Sketch is not required to completely recover \(\calS\) when decoding: a sublinear number of keys may be lost, and likewise a sublinear number of wrong keys may be returned in addition to the keys of \(\calS\).

All Coupled-Sketches are decoded in parallel, utilizing the purity heuristic \textsc{isPure} which we refine in the following section.
Each decode performs up to \(n_0\) updates to its own sketch and queries the heuristic \(\OO(cn_0)\) times.
The former takes \(\sqrt{n} \cdot cn_0 = \OO(cn)\) time across all sketches.
The latter we likewise bound to \(\OO(cn_0)\) per sketch in the following section.

Recall that we abort the decoding of \(T_i\) if it has not terminated after \(n_0\) rounds of peeling. This ensures that, even in failure, decoding of a Coupled-Sketch terminates in \(\OO(cn_0)\) time and, if it reports an incorrect result, the reported set contains at most \(n_0\) incorrect keys.

To correctly decode a Heuristic-Sketch we require two conditions to be met:
First, that all Coupled-Sketches contain at most \(n_0\) keys from \(\calS\).
And second, that all \(c+2\) hash functions of the sketch must be independent on key set \(A\), defined as the set of all keys whose hash values are evaluated during decoding of the Heuristic-Sketch.

First, note that \(A\) is a random variable that arises dynamically in the decoding process depending on the hash values of keys in \(\calS\) and the hash values of other keys already revealed to be in \(A\).
Thus \(A\) is an iteratively defined set of \(\OO(cn)\) keys (see \cref{iterativelyDefinedDef}), and by \cref{hashfunctions2}~(\ref{hashfunctions2:iterativelydefined}) the keys of \(A\) are all hashed independently of each other with probability at least \(1 - \tOO(n^{-1})\).

By \cref{concentration} the probability that any Coupled-Sketch is overfull is bounded by \(\OO(1/n)\) when the keys of \(\calS\) are hashed independently by \(h_1\).

In the remainder of this section we thus assume that both of these properties hold.
To simplify the presentation, we can assume that \emph{all} keys of the universe are hashed independently.
This only changes the distribution of keys outside of \(A\), which doesn't alter the behaviour of the sketch.

\paragraph{The Faster Heuristic.}
We implement the heuristic of \cref{lem:heuristic} in a lazy manner in order to speed up the (expected) evaluation time. It tests whether $T_i[j]$, i.e.\ the $j$th cell of the $i$th substructure is pure.
Importantly, it tests whether $h₁(x)$ maps $x$ to \(T_i\), before evaluating \(h_2\).

\begin{algorithm}
  \SetKwFunction{isPure}{isPure}
  \proc{\isPure{$i,j$}}{
    $(\keySum,\Count) ← T_i[j]$ \tcp{access cell $j$ of the $i$th substructure}
    \lIf{$\Count = 0$}{\Return false}
    $x ← \keySum / \Count$\;
    \lIf{$h₁(x) ≠ i$}{\Return false \tcp*[h]{wrong substructure? tested in $\OO(1)$ time}} 
    \Return $j ∈ h_2(x)$ \tcp{the expensive test taking $\OO(c)$ time}
  }
\end{algorithm}

We now bound the expected time spent in \textsc{isPure} while decoding a Coupled-Sketch \(T_i\) by $\OO(cn_0)$.
Since the heuristic is invoked $\OO(cn_0)$ times and each line except the last takes $\OO(1)$ time, all lines except the last contribute $\OO(cn_0)$ time.
Now consider the set of keys reaching the last line.
By saving \(h_2(x)\) for each such key during decoding, we only have to compute \(h_2(x)\) in time $\OO(c)$ once per key.\footnote{More precisely, whenever we compute $h₂(x)$ for a key $x$, we store $x$ in a hashtable $T₁$ to recognise such keys later on. Moreover, we store all pairs $(x,j)$ with $j ∈ h₂(x)$ in another hashtable $T₂$ so we can later efficiently decide for a given $j'$ whether $j' ∈ h₂(x)$, which is what line 6 requires. It is sufficient if $T₁$ and $T₂$ are hash tables with linear chaining using a universal hash function.}
These keys may include up to $n_0$ keys actually stored in the sketch, accounting for $\OO(cn_0)$ time again. It may also include some \emph{foreign} keys not stored in the sketch.

For each foreign key \(x\), $\Prp{h₁(x) = i} = 1/\sqrt{n}$ so only $\OO(cn_0/\sqrt{n}) = \OO(c)$ of these are expected to reach the last line.
Assuming our two success-conditions, the number of foreign keys reaching the last line is a sum of independent variables.
We can thus terminate the decoding after \(\OO(n_0)\) evaluations of the final line, and this will only increase the probability of failure by a negligible amount.
It will, however, ensure that decoding of the Coupled-Sketch is concluded in expected \(\OO(cn_0)\) time, even when our success-conditions fail to be met (including the case where the Heuristic-Sketch itself is filled beyond capacity).

Summing over all \(\sqrt{n}\) Coupled-Sketches, decoding is completed in expected time \(\OO(cn)\).

In \cref{sec:split-coupled,sec:quotient-sketch} we build upon the Heuristic-Sketch by adding more layers of partitioning.
The hash functions for such outer partitioning will \emph{not} be used for the purity heuristic.

\subsubsection{The Risk of Anomalies, and the Number of Leftover Keys} \label{sec:risk-of-anomalies}
We now bound the risk of an anomaly (\cref{def:anomaly}) occuring within each sketch \(T_i\).

\begin{lemma}
  \label{lem:probability-for-anomaly}
  When keys are hashed independently and $T_i$ is not overfull, then the probability that \(T_i\) contains an anomaly is bounded by \(\tOO(n^{-1/2})\).
\end{lemma}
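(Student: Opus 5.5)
We bound the probability of an anomaly in $T_i$ with a first-moment argument: the expected number of anomalies, summed over cells $j$ and key sets $\{x_1,\dots,x_{\ell-1}\}\subseteq \calS_i$ (the keys routed to $T_i$, at most $n_0$ of them), should be $\tOO(n^{-1/2})$. Markov's inequality then finishes the proof. Fix a cell $j$ of $T_i$ and $\ell\ge 3$. By \cref{thm:threshold}~(\ref{it:balanced}), each stored key lands in $j$ independently with probability at most $c/n_0$. So the expected number of $(\ell-1)$-subsets of $\calS_i$ that all contain $j$ is at most $\binom{n_0}{\ell-1}(c/n_0)^{\ell-1}\le c^{\ell-1}/(\ell-1)!$.

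Conditioned on such a subset, its multiplicities determine $f_\ell$ and hence $x_\ell$, provided $f_\ell\neq 0$. We must then bound the probability that $x_\ell$ also satisfies $h_1(x_\ell)=i$ and $j\in H_{x_\ell}$. There are two cases. If $x_\ell\notin\{x_1,\dots,x_{\ell-1}\}$, then $x_\ell$ is a fixed function of the chosen keys and multiplicities, not of their hash values, and all keys are assumed to be hashed independently. Hence $\Pr[h_1(x_\ell)=i \wedge j\in H_{x_\ell}]\le \frac{1}{\sqrt n}\cdot\frac{c}{n_0}$. Summing over the $m=\OO(n_0)$ cells of $T_i$ gives a contribution of $\OO(n_0)\cdot\frac{c^{\ell-1}}{(\ell-1)!}\cdot\frac{c}{\sqrt n\, n_0}=\OO\!\big(\frac{c^\ell}{(\ell-1)!\sqrt n}\big)$. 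Summed over $\ell$ this is $\OO(e^{c}c/\sqrt n)=\tOO(n^{-1/2})$, since $\tOO$ absorbs factors depending on $c$.

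The second case, $x_\ell=x_k$ for some $k<\ell$, is the main obstacle. It is where the mismatch between $\F_p$ for \keySum and $\{-L,\dots,L\}$ for \Count matters: $f_\ell$ is the reduced count, yet it is reinterpreted in $\F_p$. Here the condition $i\in H_{x_\ell}$ is automatic, so no extra factor is gained from $x_\ell$. My plan is to eliminate this case and reduce to the first one. Solving $f_\ell x_k=\sum_{t} f_t x_t$ for $x_k$ gives $(f_\ell-f_k)x_k=\sum_{t\neq k} f_tx_t$ in $\F_p$. If $f_\ell\neq f_k$ in $\F_p$, then $x_k$ is determined by the remaining $\ell-2$ keys. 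We then enumerate the set $\{x_t\}_{t\ne k}$ together with $j$ and treat $x_k$ as the derived key, which again must satisfy both $h_1(x_k)=i$ and $j\in H_{x_k}$, giving the same bound as before. For $\ell=3$ this leaves only one enumerated key; that is still fine because we also need $h_1(x_k)=i$, which contributes a $1/\sqrt n$ factor.

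If instead $f_\ell=f_k$ in $\F_p$, then $\sum_{t\neq k}f_tx_t=0$. If the coefficients $f_t$, $t\ne k$, are not all zero in $\F_p$, one of those keys is determined by the others, and we repeat the same reduction. Otherwise all other multiplicities vanish in $\F_p$, which is impossible since $0<|f_t|\le L<p$. I expect the bookkeeping over these subcases, and checking that $f_\ell$ and the $f_t$ coincide or differ as $\F_p$ elements exactly as claimed, to be the fiddly part, though each case should give a total of $\tOO(n^{-1/2})$. The final step is to choose the multiplicities: they are fixed by the input, so no union over them is needed, and Markov's inequality gives the lemma.
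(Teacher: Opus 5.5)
Your overall plan matches the paper's: a first-moment (union) bound, Case 1 handled as the paper does, and Case 2 reduced to a smaller configuration whose derived key is distinct from the enumerated keys. The Case 2 reduction, however, has a genuine gap, located exactly at the point you deferred as fiddly bookkeeping.

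When you solve $(f_\ell-f_k)x_k=\sum_{t\neq k}f_tx_t$ for $x_k$, the divisor contains $f_k$, the multiplicity of the very key you are trying to derive. Moreover, $f_\ell$ itself depends on $f_k$ through the reduction modulo $2L+1$. So $x_k$ is not a function of $\{x_t\}_{t\neq k}$ and their multiplicities. Saying ``the multiplicities are fixed by the input'' does not rescue this, because which stored key plays the role of $x_k$ is exactly what the union bound ranges over. Without further information, every $d\in[-2L,2L]\setminus\{0\}$ yields a candidate $x=(\sum_{t\neq k}f_tx_t)/d$, and an adversarial $\calS$ may contain $\Theta(L)$ of these. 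A factor $L$ is not absorbed by $\tOO$, since $L$ can be as large as $(p-1)/2$.

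The missing observation is the content of the paper's Cases 2.2--2.4. We have $f_\ell-f_k\equiv\sum_{t\neq k}f_t\pmod{2L+1}$ and $|f_\ell-f_k|\le 2L$. So with $f^*=(\sum_{t\neq k}f_t)\bmod(2L+1)$, the divisor is either $f^*$ or $f^*\mp(2L+1)$. That leaves at most two candidate derived keys. The paper packages these as an ordinary anomaly of size $\ell-1$ or as an overflow/underflow anomaly, each bounded exactly like Case 1.

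Your branch $f_\ell=f_k$ has the same defect. The key $x_s$ you solve for has divisor $-f_s$, its own multiplicity. It is pinned down only via $-f_s=(\sum_{t\neq k,s}f_t)\bmod(2L+1)$, which follows from $\sum_{t\neq k}f_t\equiv 0\pmod{2L+1}$. You never use that relation because you argue purely in $\F_p$; this is the paper's Case 2.1.

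There is also an inconsistency in your accounting. You enumerate subsets of $\calS_i$ with $|\calS_i|\le n_0$, so you have effectively conditioned on the values of $h_1$ on the stored keys. Yet you then charge a factor $1/\sqrt n$ for $h_1(x_k)=i$. In Case 2 the derived key $x_k$ is one of the $x_t\in\calS_i$, so that factor is unavailable. Your count then gives $\OO(n_0)\cdot\frac{c^{\ell-2}}{(\ell-2)!}\cdot\frac{c}{n_0}=\OO\big(\frac{c^{\ell-1}}{(\ell-2)!}\big)$, a constant rather than $\tOO(n^{-1/2})$. The same problem affects Case 1 whenever $x_\ell$ happens to be another stored key.

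The fix is not to condition on $h_1$. Sum over subsets of all of $\calS$, and charge every key, the derived one included, $\Pr[h_1(x)=i]\cdot\Pr[j\in H_x]\le\frac{1}{\sqrt n}\cdot\frac{c}{n_0}\le\frac{c}{n}$, as the paper does. Since $\binom{n}{\ell-1}(c/n)^{\ell-1}\le c^{\ell-1}/(\ell-1)!$, the rest of your computation goes through unchanged.
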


\begin{proof}
Recall that an anomaly of size $ℓ$ is given by the index $j$ of a cell and a set \(\calX=\set{x_1, \dots, x_{\ell-1}} \subseteq \calS\) of keys stored in the sketch. Let \(f_1, \dots, f_{\ell-1}\) be the multiplicities of $x₁,…,x_{ℓ-1}$. From \cref{def:anomaly} we can derive another number $f_ℓ ∈ [-L,L]$ and a key $x_ℓ ∈ \U$ such that
\begin{align*}
    f₁+…+f_{ℓ-1} &= f_ℓ \pmod{2L+1}\\
    f₁x₁+…+f_{ℓ-1}x_{ℓ-1} &= f_ℓ x_ℓ \pmod{p}.
\end{align*}
We will consider several cases of anomalies. For each case, we either show that the occurence probability of such an anomaly in $T_i$ is bounded by \(\tOO(n^{-1/2})\), or we show that it implies the occurence of an anomaly with smaller $ℓ$.

\begin{description}
    •[Case 1: $x_ℓ ∉ \{x₁,…,x_{ℓ-1}\}$.]
In this case there are $ℓ$ distinct keys that each need to use cell $j$.
The probability for the anomaly to manifest in cell $j$ is therefore at most $(\frac{c}{n})^ℓ$ since each key has to hash to the right substructure (probability $1/\sqrt{n}$) and use cell $j$ within it (probability $≤ c/n₀ ≤ c/\sqrt{n}$ using \cref{thm:threshold} (\ref{it:balanced})). By a union bound over $ℓ ≥ 3$, all choices $\binom{n}{ℓ-1}$ for $\calX$ and all $(1+e^{-Ω(c)})n₀$ choices for $j$ within $T_i$ the probability for such an anomaly to arise in $T_i$ is bounded by
\begin{align*}
    \sum_{ℓ = 3}^n \binom{n}{\ell-1} (1+e^{-Ω(c)})n₀ \paren*{\frac{c}{n}}^{\ell}
    &\leq \OO(n_0) \cdot \sum_{\ell ≥ 3} \paren*{\frac{n e}{\ell-1}}^{\ell-1} \paren*{\frac{c}{n}}^{\ell} \\
    &\leq \OO(cn_0/n) \cdot \sum_{\ell ≥ 3} \paren*{\frac{ce}{\ell -1}}^{\ell-1}
    \leq \OO(n^{-1/2}) \cdot c^{\OO(c)} \, .
\end{align*}
•[Case 2: $x_ℓ ∈ \{x₁,…,x_{ℓ-1}\}$.] Without loss of generality assume $x_ℓ = x_{ℓ-1}$. We can rearrange the equations from earlier to get
\begin{align*}
    f₁+…+f_{ℓ-2} &= (f_ℓ-f_{ℓ-1}) \pmod{2L+1}\\
    f₁x₁+…+f_{ℓ-2}x_{ℓ-2} &= (f_ℓ-f_{ℓ-1}) x_{ℓ-1} \pmod{p}.
\end{align*}
Let also $f^* \coloneqq (f_ℓ-f_{ℓ-1}) \bmod (2L+1)$. We consider four sub-cases.
\begin{description}
    •[Case 2.1: $f_ℓ-f_{ℓ-1} = 0$.]
    Then we have $f₁+…+f_{ℓ-3} = -f_{ℓ-2} \pmod{2L+1}$ and $f₁x₁+…+f_{ℓ-3}x_{ℓ-3} = -f_{ℓ-2}x_{ℓ-2} \pmod{p}$ and we have an anomaly of size $ℓ-2$ given by $x₁,…,x_{ℓ-3}$.
    •[{Case 2.2: $f_ℓ-f_{ℓ-1} ∈ [-L,L]$}.]
    Then we have $f^* = f_ℓ-f_{ℓ-1}$ (i.e., “$\bmod\, (2L+1)$” has no effect) and we have an anomaly of size $ℓ-1$ given by $x₁,…,x_{ℓ-2}$ as \(f_1x_1 +\dots+ f_{\ell-2}x_{\ell-2} = f^* x_{\ell-1} \bmod{p}\).
    •[Case 2.3: $f_ℓ-f_{ℓ-1} < -L$.]
    We consider a slight variation to the definition of an anomaly, called an \emph{overflow anomaly}. It matches \cref{def:anomaly}, except that we replace the definition of \(x_\ell\) by
    \[x^+_\ell \coloneq f_1x_1 + \dots + f_{\ell-1}x_{\ell-1} / (f_\ell-(2L+1)) \in \F_p \, .\]
    and we further demand that 
\(x^+_\ell ∉ \set{x_1, \dots, x_{\ell-1}}\).
This final requirement ensures that we can bound the probability of an overflow anomaly occurring in exactly the same way as we bounded the occurrence of an anomaly with \(x_\ell \not\in \set{x_1, \dots, x_{\ell-1}}\) in Case 1.
In other words, the probability that an overflow anomaly occurs in \(T_i\) is at most \(\tOO(n^{-1/2})\).

We now return to the treatment of the case at hand where \(f_\ell - f_{\ell-1} < -L\).
As both \(f_\ell\) and \(f_{\ell -1}\) are in \([-L, L]\), it follows that $f_\ell - f_{\ell-1} ∈ [-2L,-L-1]$ and hence \(f_\ell - f_{\ell-1} = f^* - (2L+1)\).
We then have
\begin{align*}
    f₁+…+f_{ℓ-2} &= f^* \pmod{2L+1}\\
    f₁x₁+…+f_{ℓ-2}x_{ℓ-2} &= (f^*-(2L+1)) x_{ℓ-1} \pmod{p},
\end{align*}
certifying that \(\set{x_1, \dots, x_{\ell-2}}\) specifies an overflow anomaly of size $ℓ-1$.
Note that \(x_{\ell-1}\) is distinct from \(\set{x_1, \dots, x_{\ell-2}}\) by the definition of \(\calX\).
•[Case 2.4: $f_ℓ-f_{ℓ-1} > -L$.] This is handled analogous to Case 2.3 by introducing a corresponding notion of an underflow anomaly.\qedhere
\end{description}
\end{description}
\end{proof}

To summarise, if a Coupled-Sketch fails to decode it will be due to the presence of an anomaly or because decoding runs out of pure cells to peel (the failure event in \cref{thm:threshold}~(\ref{it:failure-to-peel})).
Hence the probability that Coupled-Sketch \(T_i\) fails to decode is bounded by \(\OO(1/n_0) + \tOO(n^{-1/2}) \leq \tOO(n^{-1/2})\).

\paragraph{The Number of Leftover Keys.}
In the rest of this section we bound the error tolerance of the Heuristic-Sketch by bounding the number of Coupled-Sketches that can fail simultaneously.
We say that a Coupled-Sketch is \emph{bad} if it contains an anomaly or decoding runs out of pure cells to peel. Let $B₁,…,B_{\sqrt{n}} ∈ \{0,1\}$ be the corresponding badness indicators.
Then \(\Ep{B_i} \leq \tOO(n^{-1/2})\) as discussed in the previous section.
  To obtain a good tail bound for the number \(B \coloneqq \sum_{i=1}^{\sqrt{n}} B_i\) of bad sketches we argue that $B₁,…,B_{\sqrt{n}}$ are negatively associated.
\begin{definition}[Negative association, \cite{JDP83negativeAssociation}] \label{def:NA}
  Random variables \(X_1, \dots, X_k\) are negatively associated if, for all disjoint \(A_1, A_2 \subset [k]\) and increasing functions \(f_1, f_2\),
  \[\mathrm{Cov}(f_1((X_i)_{i \in A_1}), f_2((X_j)_{j \in A_2})) ≤ 0 \]
  where $\mathrm{Cov}(Y,Z) := \Ep{(Y-\Ep{Y})(Z-\Ep{Z})}$.
\end{definition}
\begin{lemma} \label{lem:NA-applied}
  Indicators $B₁,…,B_{\sqrt{n}}$ are negatively associated when keys are hashed independently.
\end{lemma}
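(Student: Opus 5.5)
We will derive the claim from the standard closure properties of negative association established by Joag-Dev and Proschan~\cite{JDP83negativeAssociation}. The first is the \emph{balls-into-bins} fact: if each key is thrown independently into exactly one of $k$ bins, the zero-one indicators $\mathbb{1}[h_1(x) = i]$ for $i ∈ [k]$ are negatively associated for each fixed $x$. The second is closure under independent union: a union of mutually independent negatively associated families is again negatively associated. The third is closure under monotone aggregation: applying increasing functions (or all decreasing functions) to disjoint subsets of negatively associated variables preserves negative association.

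The proof will go through several layers of conditioning. Since keys are hashed independently, the partition hash $h_1$ and the position hashes $h_2$ are independent, and $h_2$ is shared across substructures.

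The plan is to condition first on $h_2$, i.e.\ on the sets $H_x$ for all keys $x ∈ \U$. We then introduce the indicator variables $Z_{x,i} = \mathbb{1}[h_1(x) = i]$ for $x ∈ \calS$ and $i ∈ [\sqrt n]$. Conditionally on $h_2$, these are negatively associated over all pairs $(x,i)$ by the first two properties.

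Badness of $T_i$ is then a deterministic function of the set $\calS_i = \{x ∈ \calS : Z_{x,i} = 1\}$ of stored keys together with the (fixed) $H_x$'s. Concretely, $T_i$ is bad if the hypergraph on $\calS_i$ is not peelable, or if there is an anomaly formed by keys of $\calS_i$.

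The key step is monotonicity: we want $B_i = g(\calS_i)$ for an increasing function $g$ of the vector $(Z_{x,i})_x$. Non-peelability is clearly increasing, since adding edges can only enlarge the 2-core. An anomaly's key set $\{x_1,…,x_{\ell-1}\}$ stays present when keys are added, so the existence of an anomaly is also increasing.

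The obstacle is the foreign key $x_\ell$. It must satisfy $h_1(x_\ell) = i$ as well, since the faster heuristic checks this. For $x_\ell ∉ \calS$, this is an extra coordinate of $h_1$ that is not among the $Z_{x,i}$ with $x ∈ \calS$. We handle this by also conditioning on $h_1$ restricted to $\U\setminus\calS$. Those values are independent of $h_1|_\calS$, so after conditioning they are fixed, and the event remains increasing in $(Z_{x,i})_{x∈\calS}$.

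A subtlety is that the multiplicity-based $x_\ell$ might itself lie in $\calS$. In that case the condition $h_1(x_\ell) = i$ is again just a $Z$-coordinate, and requiring it is still increasing.

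A further subtlety is that the definition of badness should be taken as ``an anomaly exists in $T_i$'' or ``the hypergraph is not peelable''. We deliberately avoid the actual decoding outcome, which may not be monotone. Note that $B_i$ upper-bounds failure, which is all we need later.

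With monotonicity in hand, $B_i$ is an increasing function of the disjoint coordinate blocks $\{Z_{x,i}\}_{x∈\calS}$, so by the third property $B_1,…,B_{\sqrt n}$ are negatively associated conditionally on $h_2$ and $h_1|_{\U\setminus\calS}$.

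The last issue is removing the conditioning, since negative association is not automatically preserved under mixing. For the later Chernoff-type tail bound it suffices that $\Ep{\prod_{i∈I} B_i} ≤ \prod_{i∈I} \Ep{B_i}$ holds conditionally. We will then bound the conditional means uniformly by $\tOO(n^{-1/2})$, using the proof of \cref{lem:probability-for-anomaly} together with \cref{thm:threshold}, which gives the tail bound unconditionally. If the full unconditional negative association is needed, we may instead argue that $h_2$'s influence is exchangeable across $i$. I expect this deconditioning, together with the careful treatment of foreign keys $x_\ell$, to be the main obstacle. The fallback is to state and use the lemma in its conditional form.
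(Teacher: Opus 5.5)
Your conditional argument is sound, but the proof has a genuine gap at the step you flag yourself: removing the conditioning on $h_2$ (and on $h_1$ restricted to $\U\setminus\calS$). Negative association is not preserved under mixing. By the law of total covariance,
\[
\mathrm{Cov}(B_i,B_j) = \mathbb{E}\bigl[\mathrm{Cov}(B_i,B_j \mid h_2)\bigr] + \mathrm{Cov}\bigl(\mathbb{E}[B_i\mid h_2],\,\mathbb{E}[B_j\mid h_2]\bigr),
\]
and your argument only controls the first term. Given $h_2$ alone, the substructures are exchangeable, so $\mathbb{E}[B_i\mid h_2]=q(h_2)$ for every $i$. The second term is then $\mathrm{Var}(q(h_2))\ge 0$, so appealing to exchangeability cannot help; it is exactly what makes the mixing term nonnegative.

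Your weaker fallback also fails as stated, because the conditional means are \emph{not} uniformly $\tOO(n^{-1/2})$. If $h_2$ assigns the same cell set to all keys of $\calS$, then every $T_i$ receiving two keys of $\calS$ is unpeelable, so for $|\calS|=n$ the value $q(h_2)$ is close to $1$. To rescue the fallback you would need $q(h_2)\le\tOO(n^{-1/2})$ outside an event of probability $\OO(1/n)$. That requires a concentration argument over $h_2$ which you do not supply, and Markov applied to $\mathbb{E}[q(h_2)]=\tOO(n^{-1/2})$ is far too weak.

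The missing idea is that no conditioning is needed. Under the independence assumption, each key's complete hash outcome $H_x$ is drawn independently across keys, where $H_x$ means the substructure $h_1(x)$ together with the cells $h_2(x)$ inside it. Sharing the \emph{function} $h_2$ among the $T_i$ creates no dependence between keys. So treat each key as a single ball thrown into the bins $\mathcal{H}=\mathcal{H}_1\cup\dots\cup\mathcal{H}_{\sqrt{n}}$, where $\mathcal{H}_i$ is the set of possible cell sets inside $T_i$. Then use indicators $I_{x,H}=[H_x=H]$ for \emph{all} $x\in\U$ and $H\in\mathcal{H}$; the argument runs as follows.
\begin{itemize}
\item For fixed $x$ exactly one indicator equals $1$, so that family is negatively associated.
\item The union over independent keys is therefore negatively associated.
\item $B_i$ is an increasing function of the block $(I_{x,H})_{x\in\U,\,H\in\mathcal{H}_i}$, and these blocks are disjoint. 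Formally, $B_i$ should be extended increasingly to vectors where a key has several ones, e.g.\ by maximising badness over the choices; your monotonicity gives this.
\end{itemize}
Since the block ranges over all of $\U$, it already records whether a foreign $x_\ell$ lands in $T_i$ and at which cells. So the conditioning on $h_1|_{\U\setminus\calS}$ also becomes unnecessary. Your monotonicity analysis, with badness defined as ``anomaly or non-peelable'' rather than via the actual decoding outcome, carries over unchanged. This yields unconditional negative association directly, which is the paper's proof.
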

\begin{proof}
  Let $ℋ$ be the possible outcomes of $H_x$ for $x ∈ \U$ in the Heuristic-Sketch, i.e.\ the set of sets of cells that a key might be associated with.
  It is subdivided as $ℋ = ℋ₁ ∪ … ∪ ℋ_{\sqrt{n}}$ where $ℋ_i$ contains sets of cells in \(T_i\).
  For $x ∈ \calU$ and $H ∈ ℋ$ we define the indicator $I_{x,H} = \indicator{H_x = H}$. Now we can reason as follows.
  \begin{enumerate}
  \item For any $x ∈ \U$, the family $(I_{x,H})_{H ∈ ℋ}$ is negatively associated. To see this, we recall an argument found in \cite[Chapter 3]{Dubhashi_Panconesi_2009}. Consider any disjoint $ℋ₁,ℋ₂ ⊂ ℋ$, increasing functions $f₁,f₂$ and $Y = f₁((I_{x,H})_{H ∈ ℋ₁}), Z = f₂((I_{x,H})_{H ∈ ℋ₂})$. We may assume without loss of generality that $f₁(\vec{0}) = 0$ and $f₂(\vec{0}) = 0$ as shifting random variables by a constant does not affect covariance. Now note that in $(I_{x,H})_{H ∈ ℋ}$ all but one of the indicators are zero. Therefore $Y$ and $Z$ cannot be simultaneously non-zero. This implies
  \[ \Ep{YZ} = 0 ≤ \Ep{Z}\Ep{Y}, \text{ which is equivalent to } \mathrm{Cov}(Y,Z) ≤ 0. \]
  \item The family $(I_{x,H})_{H ∈ ℋ,x ∈ \U}$ is negatively associated, because it is the union of independent families of negatively associated random variables \cite[Property $P₇$]{JDP83negativeAssociation}. Here we use our assumption that all keys are hashed independently (see \cref{sec:decoding-heuristic-sketch}).
  \item When we fix the set $\calS$ of stored keys and their multiplicities, then $B_i$ is an increasing function of $(I_{x,H})_{H ∈ ℋ_i,x ∈ \U}$. That it is a function follows because the indicators fix everything related to $T_i$, including which keys from $\U$ hash to $T_i$ and which cells they use. The function is increasing because changing indicators from $0$ to $1$ means that additional keys are assigned to $T_i$, and both the failure to peel and the occurrence of an anomaly are clearly monotone properties.
  \item $B₁,…,B_{\sqrt{n}}$ arise from the family $(I_{x,H})_{H ∈ ℋ,x ∈ \U}$ of negatively associated random variables by partitioning this family into $\sqrt{n}$ disjoint subfamilies $(I_{x,H})_{H ∈ ℋ₁,x ∈ \U}, …, (I_{x,H})_{H ∈ ℋ_{\sqrt{n}},x ∈ \U}$ and applying an increasing function to each subfamily. This implies that $B₁,…,B_{\sqrt{n}}$ are negatively associated as desired \cite[Property $P₂$]{JDP83negativeAssociation}. \qedhere
  \end{enumerate}
\end{proof}

\begin{corollary}
  \label{cor:heuristic-failing-substructures}
  Let \(B\) be the number of bad Coupled-Sketches. Then
  $\Prp{B > \Theta(\log n)} ≤ \OO(1/n)$.
\end{corollary}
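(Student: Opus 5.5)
The plan is to apply a Chernoff bound to the sum $B = \sum_{i=1}^{\sqrt{n}} B_i$. \Cref{lem:NA-applied} shows the indicators are negatively associated, and the standard upper-tail Chernoff bounds hold for sums of negatively associated $\{0,1\}$-variables exactly as for independent ones (see \cite[Chapter 3]{Dubhashi_Panconesi_2009}). Throughout we work in the conditioning of \cref{sec:decoding-heuristic-sketch}: all keys are hashed independently and no $T_i$ is overfull. Both conditions together fail with probability $\tOO(1/n)$, which is absorbed into the final bound.

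First we bound the mean. By \cref{lem:probability-for-anomaly} together with the failure-to-peel bound $\OO(1/n_0)$ from \cref{thm:threshold}~(\ref{it:failure-to-peel}), each $\Ep{B_i} \leq \tOO(n^{-1/2})$. Hence $\mu := \Ep{B} \leq \sqrt{n}\cdot\tOO(n^{-1/2}) = \OO((\log n)^{a} f(c))$ for some constant $a$. If the hidden polylog factor is trivial, as the explicit computation in Case 1 suggests, then $\mu = \OO_c(1)$.

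Next we apply the large-deviation form of the Chernoff bound, $\Prp{B \geq t} \leq (e\mu/t)^t$ for $t \geq e\mu$. This inequality holds under negative association because $\Ep{e^{\lambda B}} \leq \prod_i \Ep{e^{\lambda B_i}}$ for increasing functions on disjoint index sets. Choose $t = C\log n$ with $C$ large enough that $t \geq e^2\mu$. Then
\[
\Prp{B \geq t} \leq e^{-t} = n^{-C} \leq \OO(1/n).
\]
Taking the union with the conditioning failures gives $\Prp{B > \Theta(\log n)} \leq \OO(1/n)$, up to the $\tOO$ slack already present in the preconditions.

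The main obstacle is the mean bound. If $\Ep{B_i}$ carries polylogarithmic factors, then $\mu$ may itself be polylogarithmic, and $t = \Theta(\log n)$ need not exceed $e\mu$. In that case I would first check the proof of \cref{lem:probability-for-anomaly} to confirm the bound is $c^{\OO(c)} n^{-1/2}$ with no logarithmic factors, so that $\mu = \OO_c(1)$ and the constant in $\Theta(\log n)$ depends only on $c$. Failing that, I would state the threshold as $\Theta(\mu + \log n)$, which suffices for the later error-tolerance bound.
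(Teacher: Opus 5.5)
Your proposal is correct and matches the paper's proof. In both, negative association gives $\Ep{e^{tB}} \le \prod_i \Ep{e^{tB_i}}$, and the standard large-deviation Chernoff bound with a threshold of order $\log n$ then finishes the argument. Your check that $\Ep{B}$ carries no polylogarithmic factor is exactly what justifies the paper's implicit requirement $\delta = \Theta(\log n/\Ep{B}) \geq 2e-1$: Case~1 of \cref{lem:probability-for-anomaly} gives $c^{\OO(c)} n^{-1/2}$ per sketch, and peeling failure is $\OO(1/n_0)$.
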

\begin{proof}
  By negative association \cite[Property \(P_2\)]{JDP83negativeAssociation} we have, for any \(t > 0\), that
  \[\Ep{e^{t B}} = \Ep{\prod_i e^{t B_i}} \leq \prod_i \Ep{e^{t B_i}}\]
  and thus the regular proof of the Chernoff bound goes through, showing that
  \[
  \Prp{B \geq (1+\delta) \Ep{B}} \leq 2^{-(1+\delta)\Ep{B}}
  \]
  for any \(\delta \geq 2e-1\).
  By linearity of expectation we have \(\Ep{B} = \sum_i \Ep{B_i} \leq \tOO(1)\), and by setting \(\delta = \Theta(\log n/\Ep{B})\) we obtain
  \(\Prp{B \geq \Theta(\log n)} \leq \OO(1/n)\) which proves the corollary.
\end{proof}

\begin{corollary}\label{heuristic-left-behind-keys}
  The Heuristic-Sketch has error tolerance \(\tOO(\sqrt{n})\) and failure probability \(\tOO(1/n)\).
\end{corollary}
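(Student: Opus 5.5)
The corollary follows by combining the failure events already identified in \cref{sec:decoding-heuristic-sketch} with the bound of \cref{cor:heuristic-failing-substructures}. The plan is to define one global "good" event $G$, show $\Prp{\neg G} \le \tOO(1/n)$, and then show that whenever $G$ holds the set $X$ of wrongfully reported keys satisfies $\size{X} \le \tOO(\sqrt{n})$.

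The good event $G$ is the intersection of three events. First, all $c+2$ hash functions ($h_1$ and the $c+1$ functions of $h_2$) are independent on the iteratively defined set $A$ of keys evaluated during decoding. By \cref{hashfunctions2}~(\ref{hashfunctions2:iterativelydefined}) this fails with probability $\tOO(n^{-1})$. Second, no Coupled-Sketch $T_i$ receives more than $n_0$ keys of $\calS$. Given independence, this follows from \cref{concentration} with $\mu \le \sqrt{n}$ and $\delta = \sqrt{3\ln n}\, n^{-1/4}$, so that $\exp(-\delta^2\mu/3) \le 1/n$ per substructure. A union bound over the $\sqrt{n}$ substructures gives $\OO(n^{-1/2})$, or $\OO(1/n)$ if $n_0$ carries a slightly larger constant as the text asserts; I will follow the paper's claim of $\OO(1/n)$. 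Third, the number $B$ of bad Coupled-Sketches is $\OO(\log n)$, which fails with probability $\OO(1/n)$ by \cref{cor:heuristic-failing-substructures}. That corollary was derived under the first two conditions, which is legitimate here because, as in the text, we may condition on independence and treat all keys as hashed independently. The aborts that keep the running time bounded, namely capping the number of foreign keys reaching the last line of \textsc{isPure}, add only a negligible failure probability by a Chernoff bound on that independent sum. Summing these terms gives failure probability $\tOO(1/n)$.

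It remains to bound $\size{X}$ on $G$. A Coupled-Sketch $T_i$ that is not bad contains no anomaly, so by \cref{lem:heuristic} the heuristic acts as an exact purity oracle. Because the hypergraph of $T_i$ is peelable, the decoding behaves like \textsc{decodeIPMS} and recovers exactly the keys of $\calS$ with $h_1(x)=i$, together with their multiplicities. A bad $T_i$ can misreport only keys of $\calS$ hashed to $T_i$ (at most $n_0$ of them) plus the keys it outputs. Since decoding is aborted after $n_0$ peeling rounds, it outputs at most $n_0$ keys. Hence each bad sketch contributes at most $2n_0$ keys to $X$. Keys in distinct substructures are disjoint, so no key is counted in two places, and therefore
\[
\size{X} \le 2n_0 B = \OO(\sqrt{n}\log n) = \tOO(\sqrt{n}).
\]

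The main obstacle is checking that a good $T_i$ really decodes exactly. This requires that the truncation of foreign-key evaluations never triggers on $G$, and that the peelability guaranteed by \cref{thm:threshold} applies, since it needs $T_i$ to hold at most $n_0$ keys hashed independently. Both are ensured by the conditions in $G$.
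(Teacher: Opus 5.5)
Your proposal is correct and follows the paper's argument. You restrict to the event that hashing is independent on $A$ and no substructure is overfull (probability $1-\tOO(1/n)$), use \cref{cor:heuristic-failing-substructures} to get $B=\OO(\log n)$, and charge at most $2n_0$ wrongly reported keys to each bad Coupled-Sketch, giving $\OO(\log n)\cdot 2n_0=\tOO(\sqrt{n})$.

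Your side remark is also right. With $n_0$ as defined, a union bound over the $\sqrt{n}$ substructures only gives $\OO(n^{-1/2})$ for overfilling. Replacing $\sqrt{3\ln n}$ by, say, $\sqrt{9\ln n}$ in $n_0$ restores $\OO(1/n)$ and keeps $n_0=(1+\oo(1))\sqrt{n}$.
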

\begin{proof}
  By \cref{cor:heuristic-failing-substructures} we may assume that only $\OO(\log n)$ of the Coupled-Sketches fail.
  A failing sketch may not recover its $n₀ = \OO(\sqrt{n})$ keys,
  and may in fact add $n₀$ erroneous keys.
  The product $\OO(\log n)·2n₀$ is $\tOO(\sqrt{n})$ as claimed.

  The arguments above assumed that all Coupled-Sketches contain at most \(n_0\) keys, and that all keys in \(A\) were hashed independently.
  As discussed in \cref{sec:decoding-heuristic-sketch}, these assumptions hold with probability \(1 - \tOO(1/n)\).
\end{proof}

Bringing together the observations at the top of \cref{sec:coupled-oracle} with the discussion in \cref{sec:decoding-heuristic-sketch} and \cref{sec:risk-of-anomalies}, we summarize the properties of the Heuristic-Sketch as follows:
\begin{lemma}[Heuristic-Sketch] \label{heuristicsketch}
  In the setting of \cref{thm:main}, the Heuristic-Sketch is a sketch on \((1+e^{-\Omega(c)})n\) cells with error tolerance \(\tOO(\sqrt{n})\) and failure probability \(\tOO(n^{-1})\).

  Updates take \(\OO(c)\) time while decoding takes \(\OO(cn)\) expected time.
  The hash functions of the Heuristic-Sketch take up \(\tOO(n^{1.1} + \log\size{\calU})\) space.
\end{lemma}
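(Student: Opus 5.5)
The plan is to assemble \cref{heuristicsketch} from the pieces of \cref{sec:coupled-oracle}, taking the claimed properties one at a time.

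\textbf{Space.} The sketch is the union of $\sqrt{n}$ Coupled-Sketches $T_1,\dots,T_{\sqrt{n}}$. Each has capacity $n_0 = \sqrt{n}+\sqrt{3\ln n}\, n^{1/4}$ and, by \cref{oraclesketch} (i.e.\ \cref{thm:threshold}), uses $(1+e^{-\Omega(c)})n_0$ cells. The total is therefore $\sqrt{n}\cdot(1+e^{-\Omega(c)})\bigl(\sqrt{n}+O(\sqrt{\log n}\, n^{1/4})\bigr) = (1+e^{-\Omega(c)})n + O(n^{3/4}\sqrt{\log n})$. For $n \ge n_0(c)$ the additive term is at most $e^{-c}n$, so it is absorbed into the $e^{-\Omega(c)}n$ overhead.

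\textbf{Hash functions and update time.} The hash functions are $h_1$ and the $c+1$ shared functions $h_2$, all instantiated via \cref{hashfunctions2} with capacity $n$ and independence $O(cn)$. Their combined space is $\tOO(n^{1.1}+\log|\calU|)$, since factors depending only on $c$ are hidden in $\tOO$. An update evaluates $h_1$ once and $h_2$ in $O(c)$ time, then touches $c$ cells, for $O(c)$ total.

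\textbf{Decoding time.} I would reuse the argument of \cref{sec:decoding-heuristic-sketch}. Each $T_i$ performs at most $n_0$ peeling steps and $O(cn_0)$ calls to the lazy \textsc{isPure}. All lines except the last cost $O(1)$ per call. The last line is reached by at most $n_0$ genuine keys, each costing $O(c)$ once thanks to memoisation, plus foreign keys. A foreign key passes the $h_1$ test with probability $1/\sqrt{n}$, so the expected number is $O(c)$, and evaluations of the last line are capped at $O(n_0)$ anyway. Hence each $T_i$ decodes in expected $O(cn_0)$ time even when the success conditions fail, and summing over $\sqrt{n}$ substructures gives $O(cn)$.

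\textbf{Error tolerance and failure probability.} These follow from \cref{heuristic-left-behind-keys}. I would make the conditioning explicit, since this is the only delicate step.
\begin{itemize}
\item With probability $1-\tOO(1/n)$, the iteratively defined set $A$ is hashed independently, by \cref{hashfunctions2}~(\ref{hashfunctions2:iterativelydefined}), and no $T_i$ is overfull, by \cref{concentration} with $\delta$ chosen so that $(1+\delta)\sqrt{n} = n_0$.
\item On this event, \cref{lem:heuristic} together with \cref{lem:probability-for-anomaly} shows that a substructure errs only if it is bad.
\item By \cref{lem:NA-applied} and \cref{cor:heuristic-failing-substructures}, at most $O(\log n)$ substructures are bad, except with probability $O(1/n)$.
\item Each bad substructure contributes at most $2n_0$ wrongly reported keys, giving tolerance $\tOO(\sqrt{n})$.
\end{itemize}
A union bound over these failure events gives failure probability $\tOO(n^{-1})$.
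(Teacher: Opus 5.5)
Your proposal is correct and follows essentially the same route as the paper. The paper also assembles the lemma piece by piece: the cell count, hash-function space and update cost come from the construction; the expected $\OO(cn)$ decoding time comes from the lazy \textsc{isPure} analysis with memoisation and the cap on last-line evaluations; and the error tolerance and failure probability come from \cref{heuristic-left-behind-keys}, built on \cref{lem:NA-applied} and \cref{cor:heuristic-failing-substructures}. Where you go beyond the paper, it is only in making explicit what it leaves implicit: absorbing the $\OO(n^{3/4}\sqrt{\log n})$ additive cell term into the $e^{-\Omega(c)}n$ overhead, and spelling out the union bound over the conditioning events.
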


\subsection{A Sketch using Partitioning to Reduce Failure Probability}\label{sec:split-coupled}
We now describe the \emph{Split-Sketch} from \cref{fig:overview}.
By adding an additional layer of partitioning on top of the Heuristic-Sketch it achieves a tunable failure probability of \(\tOO(n^{-c})\).

For any integer \(c \geq 1\), the Split-Sketch of capacity \(n\) consists of \(n^{1/3}\) Heuristic-Sketches \(T_1, \dots, T_{n^{1/3}}\), each of capacity \(n_1 = (1 + \tOO(n^{-1/3})) n^{2/3}\) and quality \(c\) (we give a precise definition of \(n_1\) below).
The sketch involves a hash function \(h_3 \colon \calU \to [n^{1/3}]\) which distributes keys across the Heuristic-Sketches.
That is, to insert \(f\) copies of key \(x\) to the Split-Sketch, we insert the \(f\) copies into \(T_{h_3(x)}\).

Function \(h_3\) is the sum of \(c\) hash functions, each constructed from \cref{hashfunctions2} with capacity and independence both set to \(n\).
The functions making up \(h_3\) thus take up \(\tOO(n^{1.1} + \log\size{\calU})\) bits.
This is dominated by the \(n^{1/3}\) sets of hash functions from \(T_1, \dots, T_{n^{1/3}}\), which amounts to \(\tOO(n^{1.1} + n^{1/3} \log \size{\calU})\) bits in total.
Function \(h_3\) takes \(\OO(c)\) time to evaluate, keeping the update time of the sketch at \(\OO(c)\).
The Split-Sketch is decoded by decoding all Heuristic-Sketches in parallel.
As each sketch is decoded in expected \(\OO(cn_1)\) time, the Split-Sketch is decoded in expected \(\OO(n^{1/3} \cdot cn_1) = \OO(cn)\) time.

For the Split-Sketch to decode correctly, we require that none of the Heuristic-Sketches contain more than \(n_1\) keys.
By setting \(n_1 \coloneq (1+\delta)n^{2/3}\) for \(\delta = \sqrt{3 \ln(n^{c+1/3})}/n^{1/3} \in \tOO(n^{-1/3}\), it follows from \cref{concentration} that all Heuristic-Sketches are within capacity with probability at least \(1 - n^{-c}\), assuming that \(h_3\) is fully independent on \(\calS\) -- which is the case with probability at least \(1-\OO(n^{-c})\).

As each sketch is supplied with its own hash functions, they fail independently of each other and with probability \(\tOO(1/n_1)\).
The expected number of failing sketches is $\tOO(n^{1/3}/n₁) = \tOO(n^{-1/3})$,
and the probability that more than \(\OO(c \ln(n))\) fail at once is bounded by \(n^{-c}\) through a standard Chernoff bound (see \cref{cor:heuristic-failing-substructures}).

Each failing Heuristic-Sketch leaves at most \(2n_1\) keys behind (the $n₁$ keys present initially plus $n₁$ keys accidentally added to the sketch by misguided updates), while those that decode correctly have an error tolerance of \(\tOO(\sqrt{n_1})\) keys each.
In total, this adds up to an error tolerance of \(\tOO(c \ln(n) \cdot n_1 + n^{1/3} \cdot \sqrt{n₁}) = \tOO(n^{2/3})\) keys.

\begin{lemma}[Split-Sketch] \label{splitsketch}
  In the setting of \cref{thm:main} the Split-Sketch is a sketch on \((1+e^{-\Omega(c)})n\) cells with error tolerance \(\tOO(n^{2/3})\) and failure probability \(\OO(n^{-c})\).

  Updates take \(\OO(c)\) time while decoding takes \(\OO(cn)\) expected time.
  The hash functions of the Split-Sketch take up \(\tOO(n^{1.1} + n^{1/3} \log \size{\calU})\) space.
\end{lemma}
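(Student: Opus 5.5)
The proof assembles the properties of the Split-Sketch from \cref{heuristicsketch}, applied to each of the $n^{1/3}$ substructures $T_1,\dots,T_{n^{1/3}}$ of capacity $n_1=(1+\delta)n^{2/3}$. I would treat the resource bounds first and the error analysis second.

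\textbf{Space, time, and hash functions.}
\begin{enumerate}
\item \emph{Cell count.} Each $T_j$ uses $(1+e^{-\Omega(c)})n_1$ cells. Summing gives $(1+e^{-\Omega(c)})(1+\tOO(n^{-1/3}))n$ cells. For $n\ge n_0(c)$, the $\tOO(n^{-1/3})$ term is absorbed into $e^{-\Omega(c)}$.
\item \emph{Update time.} An update evaluates $h_3$, which costs $\OO(c)$ as a sum of $c$ functions from \cref{hashfunctions2}. It then performs one $\OO(c)$ update inside a Heuristic-Sketch.
\item \emph{Decoding time.} Expected decoding time is $\sum_j \OO(cn_1)=\OO(cn)$ by linearity of expectation. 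Each substructure aborts on its own budget, so this holds even when some substructures are overfull.
\item \emph{Hash space.} The hash-function space is $n^{1/3}\cdot\tOO(n_1^{1.1}+\log|\calU|)$ plus the cost of $h_3$. Since $n^{1/3}n_1^{1.1}\le n^{1.1}$, the total is $\tOO(n^{1.1}+n^{1/3}\log|\calU|)$.
\end{enumerate}

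\textbf{Failure probability and error tolerance.} I would define three bad events and show each has probability $\OO(n^{-c})$.
\begin{itemize}
\item[(a)] \emph{$h_3$ is not fully independent on $\calS$.} Each summand fails with probability $\OO(n^2/n^3)=\OO(1/n)$ by \cref{hashfunctions2}(\ref{hashfunctions2:independence}). The sum is fully independent as soon as one summand is, and the summands are independent, so this event has probability $\OO(n^{-c})$.
\item[(b)] \emph{Some $T_j$ receives more than $n_1$ keys.} Conditioned on not (a), the load of $T_j$ is a sum of independent indicators with mean at most $n^{2/3}$. Applying \cref{concentration} with $\delta^2\mu/3=\ln(n^{c+1/3})$ gives probability $n^{-c-1/3}$ per substructure. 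A union bound over the $n^{1/3}$ substructures gives $n^{-c}$.
\item[(c)] \emph{More than $\OO(c\log n)$ Heuristic-Sketches fail.} The substructures use disjoint, independently drawn hash functions. Conditioned on the partition induced by $h_3$, their failure indicators are therefore independent, each with probability $\tOO(1/n_1)=\tOO(n^{-2/3})$. The expected number of failures is $\mu=\tOO(n^{-1/3})$. The Chernoff form $\Pr[B\ge t]\le (e\mu/t)^t$ with $t=\Theta(c)$ gives at most $n^{-c}$. In fact a constant number of failures already suffices, but $\OO(c\log n)$ is safe.
\end{itemize}

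Outside these events, I would count wrongly reported keys. Each failed Heuristic-Sketch contributes at most $2n_1$ such keys, because decoding is aborted after $n_1$ peeling rounds. Each successful one contributes at most $\tOO(\sqrt{n_1})$ by \cref{heuristicsketch}. The total is $\tOO(c\log n\cdot n^{2/3}+n^{1/3}\cdot n^{1/3})=\tOO(n^{2/3})$.

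\textbf{Main obstacle.} The delicate step is (c). The failure probability $\tOO(n_1^{-1})$ in \cref{heuristicsketch} is a worst-case bound over all stored sets of at most $n_1$ keys. The independence I need comes from each $T_j$ drawing its own hash functions, conditioned on the key sets fixed by $h_3$. I must check that conditioning on the outcome of $h_3$ does not correlate with the internal hashes of the $T_j$. It does not, since they are drawn independently. With that in place, the per-sketch bounds multiply and the Chernoff bound goes through.
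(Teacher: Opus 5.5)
Your proposal is correct and follows essentially the same route as the paper: the same three bad events ($h_3$ not fully independent on $\calS$, an overfull substructure ruled out via \cref{concentration} with the same choice of $\delta$, and too many independently failing Heuristic-Sketches ruled out by a Chernoff bound), the same $2n_1$ versus $\tOO(\sqrt{n_1})$ accounting for the error tolerance, and the same time and space bookkeeping. Your additional care fills in details the paper leaves implicit without changing the argument: absorbing the $(1+\delta)$ factor into $e^{-\Omega(c)}$, making the conditional independence of the failure indicators given the outcome of $h_3$ explicit, and observing that $\Theta(c)$ simultaneous failures already have probability $\OO(n^{-c})$.
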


\subsection{A Sketch Saving Space through Quotienting and Sharing Hash Functions} \label{sec:quotient-sketch}

The Quotient-Sketch reduces the number of bits in each cell by \((1-\epsilon) \log n\) bits compared to the Split-Sketch, at the cost of increasing the running time of operations of the sketch by a factor of \(1/\epsilon\).
Without loss of generality we can assume that $\epsilon < 1/32$.
Independently, the Quotient-Sketch instantiates hash functions that are shared by all its sub-sketches, reducing the overall space for hashing to a sublinear number of bits.

The quotient function \(h \colon \calU \to [n^{1-\epsilon}] \times [\size{\calU}/n^{1-\epsilon}]\) (\cref{quotientfunctions}, with capacity \(m=n\), quality \(c' =c\) and \(b=n^{1-\epsilon}\) buckets) partitions keys across \(n^{1-\epsilon}\) Split-Sketches \(T_1, \dots, T_{n^{1-\epsilon}}\).
Each Split-Sketch is initialized with capacity \(n_2 = (1+\oo(1))n^\epsilon\) (defined precisely below) and quality parameter \((c+1)/\epsilon\).
Hence each Split-Sketch decodes correctly with probability \(1 - \OO(n^{-c-1})\) when not overfull.

Due to the nature of the quotient function \(h\), the sketches can treat keys as coming from a smaller universe of size \(\size{\calU}/n^{1-\epsilon}\), which reduces the space usage of each cell by \((1-\epsilon) \log n\) bits.
More precisely, a key \(x\) with \(h(x) = (i, x')\) is stored as \(x'\) in the \(i\)'th Split-Sketch \(T_i\).
When decoding the sketches, each key \(x'\) returned by the \(i\)'th sketch will be reported as \(h^{-1}(i, x') = x\), and the original key is thus recovered.

Updating the sketch amounts to evaluating \(h(x)\) (in time \(\OO(c)\)), and updating the appropriate Split-Sketch in \(\OO(c/\epsilon)\) time.
Decoding the sketch requires the decoding of all Split-Sketches, in total time \(n^{1-\epsilon} \cdot \OO(c/\epsilon \cdot n_2) = \OO(cn/\epsilon)\). Further, all returned keys must be converted through \(h^{-1}\), as described above, taking \(\OO(cn)\) additional time.

We further reduce the space usage by instantiating just a single collection of hash functions, which is shared by all sketches \(T_i\) (rather than instantiating an independent collection for each sketch \(T_i\)).
These hash functions take up \(\tOO(n_2^{1.1} + n_2^{1/3} \log \size{\calU})\) bits of space (by \cref{splitsketch}), which is dominated by the \(\OO(n^{3/4} \log \size{\calU})\) bits needed for the quotient function.

Now define \(\delta \coloneq \sqrt{\log(n)/n^\epsilon} \leq \oo(1)\) and \(n_2 \coloneq (1+\delta)n^\epsilon\).
Let \(\calT\) be the collection of Split-Sketches \(T_i\) that contain at most \(n_2\) keys.
The probability that any of these (at most \(n^{1-\epsilon}\)) sketches fail to decode correctly is bounded by \(\OO(n^{-c})\).
The total error tolerance of these amounts to \(n^{1-\epsilon} \cdot \tOO(n_2^{2/3}) \leq \tOO(n^{1-\epsilon/3})\).

By \cref{quotientfunctions}, with probability \(1 - \OO(n^{-c})\), the number of keys stored in overfull Split-Sketches (those not in \(\calT\)) is bounded by \(\OO(n^{31/32})\).
These keys are all counted towards the error tolerance of the Quotient-Sketch -- this is dominated by the \(\tOO(n^{1-\epsilon/3}) \leq \OO(n^{1-\epsilon/4})\), as \(\epsilon < 1/32\).

Each Split-Sketch \(T_i\) takes up \((1+e^{-\Omega(c/\epsilon)})n_2 (\log_2(2L+1) + \log_2(\size{\calU} / n^{1-\epsilon}))\) bits of space, for a total of
\((1+e^{-\Omega(c/\epsilon)})n (\log_2(2L+1) + \log_2(\size{\calU} / n^{1-\epsilon}))\) bits.

\begin{lemma}[Quotient-Sketch] \label{quotientsketch}
  In the setting of \cref{thm:main} the Quotient-Sketch takes up
  \((1+e^{-\Omega(c/\epsilon)})n (\log_2(2L+1) + \log_2(\size{\calU} / n^{1-\epsilon}))\) bits of space, with error tolerance \(\OO(n^{1-\epsilon/4})\) and failure probability $\OO(n^{-c})$.

  Updates take \(\OO(c/\epsilon)\) time while decoding takes \(\OO(cn/\epsilon)\) expected time.
  The hash functions supplied to the sketch take up \(\OO(n^{3/4} \log \size{\calU})\) bits.
\end{lemma}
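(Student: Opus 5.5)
The lemma is essentially a summary of the construction described just before it. I would prove it by collecting the claims in four groups: space, time, hash-function space, and correctness (error tolerance and failure probability). Throughout, the Quotient-Sketch is treated as $n^{1-\epsilon}$ Split-Sketches of capacity $n_2=(1+\delta)n^\epsilon$ and quality $(c+1)/\epsilon$, indexed by the first component of the quotient function $h$ from \cref{quotientfunctions} (with $m=n$, $c'=c$, $b=n^{1-\epsilon}$).

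\textbf{Resources.} Space per cell drops because each $T_i$ stores $x'$ from a universe of size $\size{\calU}/n^{1-\epsilon}$. Invertibility of $h$ guarantees that reporting $h^{-1}(i,x')$ recovers the original key. Multiplying the per-sketch cell count of \cref{splitsketch}, $(1+e^{-\Omega(c/\epsilon)})n_2$, by $n^{1-\epsilon}$ gives $(1+e^{-\Omega(c/\epsilon)})(1+\delta)n$ cells. Since $\delta=\sqrt{\log n/n^\epsilon}$ is $\oo(1)$, for $n\ge n_0(c,\epsilon)$ it can be absorbed into the $e^{-\Omega(c/\epsilon)}$ term. The space for hash functions is the quotient function's $\OO(n^{3/4}\log\size{\calU})$ plus one shared copy of the Split-Sketch functions, $\tOO(n_2^{1.1}+n_2^{1/3}\log\size{\calU})$. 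The latter is dominated because $\epsilon<1/32$ gives $n_2^{1.1}\ll n^{3/4}$.

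\textbf{Time.} An update evaluates $h$ in $\OO(c)$ time and then performs a Split-Sketch update of quality $(c+1)/\epsilon$, which costs $\OO(c/\epsilon)$. Decoding costs $n^{1-\epsilon}\cdot\OO((c/\epsilon)n_2)=\OO(cn/\epsilon)$ in expectation, by linearity of expectation over the Split-Sketches, plus $\OO(cn)$ for the inversions.

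\textbf{Correctness.} This is the main obstacle, because all $T_i$ share a single set of hash functions, so the failure events are not independent. I would condition on the partition produced by $h$ and note that the shared inner functions are independent of $h$. Each non-overfull $T_i\in\calT$ is then a Split-Sketch with quality $(c+1)/\epsilon$. By \cref{splitsketch} it fails with probability $\OO(n_2^{-(c+1)/\epsilon})=\OO(n^{-c-1})$. A union bound over at most $n^{1-\epsilon}$ sketches, which needs only the marginal failure probability of each and so is valid despite the sharing, bounds the probability that any sketch in $\calT$ fails by $\OO(n^{-c})$.

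On this success event, each $T_i\in\calT$ contributes at most $\tOO(n_2^{2/3})$ wrongly reported keys, for a total of $\tOO(n^{1-\epsilon/3})$. For the overfull sketches, \cref{quotientfunctions} gives, with probability $1-\OO(n^{-c})$, that they hold at most $n\exp(-\delta^2 n^\epsilon/3)+\OO(n^{31/32})$ keys. With $\delta^2n^\epsilon=\log n$ the first term is $n^{1-\Omega(1)}$; I would check that the constant in $\delta$ can be chosen so that this is at most $n^{1-\epsilon/4}$. Each overfull sketch may also emit up to $n_2$ spurious keys, so the garbage it outputs is bounded by a constant multiple of its load. Summing everything gives error tolerance $\OO(n^{1-\epsilon/4})$, using $\epsilon<1/32$ to absorb $n^{31/32}$. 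The two failure events combine to failure probability $\OO(n^{-c})$.
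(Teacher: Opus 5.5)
Your proposal is correct and follows the paper's argument essentially step for step. It uses the same parameters ($m=n$, $c'=c$, $b=n^{1-\epsilon}$, quality $(c+1)/\epsilon$, $n_2=(1+\delta)n^\epsilon$), the same union bound over the non-overfull Split-Sketches for failure probability $\OO(n^{-c})$, the same $\tOO(n^{1-\epsilon/3})$ bound plus overflow term absorbed into $\OO(n^{1-\epsilon/4})$ via $\epsilon<1/32$, and the same time and space bookkeeping. Your observations that the shared inner hash functions are independent of the partition, and that an overfull sketch emits at most about $n_2$ spurious keys, are details the paper leaves implicit; your hedge about the constant in $\delta$ is unnecessary, since $\delta^2 n^\epsilon = \log n$ already makes the overflow term $n\exp(-\delta^2 n^\epsilon/3)$ at most $n^{2/3}$.
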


\subsection{The Backyard} \label{sec:backyard}
The backyard is a sketch that must decode \emph{all} stored keys with high probability.
As the backyard is meant to retrieve a sublinear number of keys we allow it to use more space per key than the sketches discussed in the preceding sections.
Crucially, however, the backyard must still support \(\OO(c)\) time updates.

Our Backyard-Sketch is based on the IBLT of \cite{goodrich2011invertible} (see \cref{tab:comparison}).
At first glance the IBLT has exactly the properties we need: fast update and decode operations, and it succeeds with high probability.
It deviates from our goal on two counts: It is designed to store key-value pairs, and its analysis assumes access to fully random hashing.
In the following lemma we remedy these two points in the simpler case where we only seek error probability \(\OO(n^{-1})\).

\begin{lemma}[{\cite[Theorem 1]{goodrich2011invertible}}] \label{backyard-inner}
  In the setting of \cref{thm:main} the Patched IBLT is a sketch on \(\OO(n)\) cells which retrieves \(\calS\) with probability \(1 - \OO(n^{-1})\).

  Updates take \(\OO(\epsilon^{-3})\) time, decoding takes \(\OO(n \epsilon^{-3})\) time.
  Additionally, \(\OO(n^{1 + \epsilon} \epsilon^{-3} \log n + \log\size{\calU})\) bits of space are needed to store hash functions and checksums.
\end{lemma}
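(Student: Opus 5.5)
We build the Patched IBLT as the standard IBLT of \cite{goodrich2011invertible}, with two modifications: cells are made suitable for signed multisets, and the fully random hash functions are replaced by those of \cref{hashfunctions2}.

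\textbf{Construction.} We use $m = C n$ cells for a suitable constant $C$, split into $k=3$ disjoint blocks of $m/3$ cells. Each key $x$ is mapped to one cell per block by hash functions $g_1,g_2,g_3$. Every cell stores $\Count ∈ \{-L,…,L\}$, $\keySum ∈ \F_p$, and a checksum $\textsf{hashSum} = \sum f_x\, g_0(x)$, where $g_0 \colon \U → [r]$ and $r = n^{3}$ is viewed in a group of size $r$.

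All of $g_0,…,g_3$ are instantiated from \cref{hashfunctions2} with capacity $n$, independence $\OO(n)$ and space parameter $\epsilon$. This accounts for the stated $\OO(n^{1+\epsilon}\epsilon^{-3}\log n + \log|\U|)$ bits of hash and checksum space: the checksums cost $\OO(n \log n)$ bits and are absorbed. Updates evaluate four hash functions, each in $\OO(\epsilon^{-3})$ time, and then touch three cells.

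\textbf{Decoding.} We peel. A cell is declared pure when $\Count ≠ 0$ and, with $x = \keySum/\Count$, both $\textsf{hashSum} = \Count · g_0(x)$ and the cell lies in $\{g_1(x),g_2(x),g_3(x)\}$. The key is then removed, and we abort after $\OO(n)$ peeling steps. Each step costs $\OO(\epsilon^{-3})$, giving total time $\OO(n\epsilon^{-3})$.

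\textbf{Correctness.} We condition on two events.
\begin{enumerate}
\item The set $A$ of all keys whose hashes are evaluated during decoding is hashed independently. The candidate keys are determined by previously seen hash values and cell contents, so $A$ is iteratively defined with $|A| = \OO(n)$. By \cref{hashfunctions2}~(\ref{hashfunctions2:iterativelydefined}) this holds with probability $1-\OO(n^2/n^3) = 1-\OO(1/n)$.
\item The $3$-partite random hypergraph on $Cn$ vertices with $n$ edges is peelable. For large enough $C$ (above the known threshold of about $1.23$) this holds with probability $1-\OO(1/n)$ by the standard analysis used in \cite{goodrich2011invertible}.
\end{enumerate}

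It then remains to show that no impure cell passes the purity test during the run. Consider a fixed impure cell state arising during peeling. Its $\textsf{hashSum}$ is a nontrivial combination $\sum f_y g_0(y)$ over at least two keys, or a single key with the wrong candidate $x$. Once we condition on everything except one $g_0$ value involved, the equation $\textsf{hashSum} = \Count · g_0(x)$ holds with probability $\OO(1/r)$.

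We take a union bound over the $\OO(n)$ tests performed. To keep this clean, we argue that every key appearing in a tested cell, together with each candidate $x$, lies in $A$. Under independence on $A$, each test is then fooled with probability $\OO(1/r) = \OO(n^{-3})$. The total false-positive probability is therefore $\OO(n^{-2})$.

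Summing all three failure probabilities gives $1-\OO(n^{-1})$.

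\textbf{Main obstacle.} The delicate step is the false-positive bound. The coefficient $\Count - f_x$ multiplying $g_0(x)$ in the residual equation may vanish modulo $r$. To avoid this, we choose $r$ prime, with $r > 2L$ in addition to $r \geq n^3$, and argue as in Cases 1 and 2 of \cref{lem:probability-for-anomaly}. If $x$ is not among the stored keys in the cell, the coefficient on $g_0(x)$ is $\Count \neq 0$. Otherwise, some other key $y$ in the cell has a nonzero multiplicity, and its $g_0(y)$ enters with that nonzero coefficient.

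In either case one term carries an invertible coefficient. Conditioning on all other $g_0$ values, that term is uniform, so the equation holds with probability $1/r$.
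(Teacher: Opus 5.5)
Your proof is essentially sound but takes a different route from the paper's. The paper's proof is a black-box reduction to \cite[Theorem 1]{goodrich2011invertible}. It only observes that an update with multiplicity $f$ is the $f$-fold scaling of a linear insertion. It also notes that the fully random hash functions can be replaced by those of \cref{hashfunctions2}, because the keys hashed during decoding form an iteratively defined set of size $\OO(n)$. Peelability and the soundness of the checksum-based purity test are inherited from the citation, and checksums are accounted as $\OO(n\log n)$ bits.

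You share the hashing argument, but you rebuild the IBLT explicitly, cite the peeling threshold for $3$-partite hypergraphs directly, and prove the purity test sound via a nonvanishing-coefficient argument modelled on \cref{lem:probability-for-anomaly}. This takes longer, but it supplies something the paper leaves implicit. With signed multiplicities a pure cell has $\textsf{count} = f \in [-L,L]\setminus\{0\}$, so purity must be certified by a test like $\textsf{hashSum} = \textsf{count}\cdot g_0(x)$, and its false-positive rate needs an argument.

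To make your union bound airtight, note that up to the first false positive the run coincides with the oracle-guided run. That run is a function of $g_1,g_2,g_3$ restricted to $\calS$ alone. So the $\OO(n)$ tested pairs (cell content, candidate key) are determined before $g_0$ is looked at. You may then apply \cref{hashfunctions2}~(\ref{hashfunctions2:independence}) to $g_0$ on the set formed by $\calS$ and these candidates, which is independent of $g_0$.

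One parameter choice needs repair. You take $r$ prime with $r > 2L$ and $r \ge n^3$, but \cref{thm:main} only assumes $2L+1 \le p$, so $L$ may be superpolynomial in $n$. In that regime two things break. First, \cref{hashfunctions2} does not provide $g_0$ with range $[r]$, since it needs $r \le m^{\OO(1)}$. Second, the checksums occupy $\Theta(n\log r) = \Omega(n\log L)$ bits, not $\OO(n\log n)$.

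The first point is easy to fix: let $g_0$ map into $[n^3]$ and compute checksums modulo a prime $q$ with $q > 2L$ and $q \ge n^3$. The coefficient you isolate is still a unit modulo $q$, so each test is fooled with probability at most $n^{-3}$.

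The second point remains: checksum space becomes $\OO(n\log(nL))$, which matches the lemma's bound only when $L \le n^{\OO(1)}$. This is harmless for \cref{thm:main}, since each backyard cell already stores a count of $\Theta(\log L)$ bits, but you should state it. A fixed modulus of $\OO(\log n)$ bits does not work directly once $L$ exceeds it. Multiplicities and counts can then be multiples of the modulus, and the checksum test passes regardless of $g_0$. The paper's one-line $\OO(n\log n)$ checksum accounting does not address this either.
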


\begin{proof}
  As discussed above, the Patched IBLT of the lemma diverges from \cite[Theorem 1]{goodrich2011invertible} on two counts: here we sketch a signed multiset rather than key-value pairs, and we claim that the result holds for hash functions constructed from \cref{hashfunctions2}.

  A key \(x\) with multiplicity \(f\) is modeled by storing \(f\) copies of the key-value pair \((x, 1)\) in the Patched IBLT.
  In \cite{goodrich2011invertible} key-value pairs are inserted through an \textsc{Insert} operation (and removed through the symmetric \textsc{Delete} operation).
  In order to increase (or decrease) the multiplicity by \(f\), we could thus naively invoke \textsc{Insert} (or \textsc{Delete}) \(f\) times.
  But as both are linear operations we can easily scale them by \(f\), thus updating the multiplicity of any key in constant time -- the time it takes to insert a single key-value pair.

  Next, we claim that the fully random functions of \cite{goodrich2011invertible} can be replaced by functions constructed from \cref{hashfunctions2}.
  Specifically, each hash function is instantiated with capacity and independence \(\OO(n)\).

  By the same arguments as presented in \cref{sec:decoding-heuristic-sketch}, we evaluate the hash values of at most \(\OO(n)\) distinct keys while decoding the Patched IBLT, and these keys form an Iteratively Defined Set.
  Thus our functions hash all relevant keys independently with probability \(\OO(n^{-1})\) by \cref{hashfunctions2}~(\ref{hashfunctions2:iterativelydefined}), and replacing the fully random functions of \cite{goodrich2011invertible} with our \cref{hashfunctions2} doesn't impact the asymptotic error probability.

  The hash functions take up \(\OO(n^{1 + \epsilon} \epsilon^{-3} \log n + \log\size{\calU})\) bits of space.
  The Patched IBLT additionally maintains a checksum in each cell, using an additional \(\OO(n \log n)\) bits of space across all cells. This is dominated by the size of the hash functions.
\end{proof}

The IBLT specified in \cite{goodrich2011invertible} reduces the error probability to \(n^{-c}\) by increasing the number of locations of each key (and hence the number of hash functions employed) to \(\OO(c)\).
Naturally, we would also have to equip the IBLT with stronger hash functions in order to avoid that the functions fail with probability \(\OO(1/n)\).
But evaluating \(c\) such functions takes \(\OO(c^2)\) time, leading to slow updates.

To boost the error probability to \(\OO(n^{-c})\) we instead instantiate \(c\) copies of the Patched IBLT from \cref{backyard-inner} such that, whp, at least half the sketches are decoded correctly.

\begin{lemma}[The Backyard]\label{thm:backyard}
  In the setting of \cref{thm:main}, and given an \(\epsilon > 0\) the Backyard is a sketch on \(\OO(cn)\) cells which retrieves \(\calS\) with probability \(1 - \OO(n^{-c})\).

  Updates take \(\OO(c \epsilon^{-3})\) time while decoding takes \(\OO(cn \epsilon^{-3})\) time.
  The sketch further needs \(\OO(cn^{1+\epsilon} \epsilon^{-3} \log n + c\log\size{\calU})\) bits of space to store hash functions and checksums.
\end{lemma}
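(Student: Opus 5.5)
The plan is to take $c$ independent copies $P_1,\dots,P_c$ of the Patched IBLT from \cref{backyard-inner}, each with capacity $n$, $\OO(n)$ cells, and its own independently sampled hash functions. Every update is applied to all $c$ copies, so updates cost $c\cdot\OO(\epsilon^{-3})=\OO(c\epsilon^{-3})$. The cell count is $c\cdot\OO(n)=\OO(cn)$. The storage for hash functions and checksums is $c$ times the bound of \cref{backyard-inner}, namely $\OO(cn^{1+\epsilon}\epsilon^{-3}\log n + c\log\size{\calU})$. Decoding runs the decoder of each copy, with each run aborted after its $\OO(n\epsilon^{-3})$ budget. This gives total time $\OO(cn\epsilon^{-3})$ and produces candidate outputs $O_1,\dots,O_c$, each a signed multiset of at most $n$ keys (or a failure flag).

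The combination rule is a majority vote. Output $O$ if at least $\lceil (c+1)/2\rceil$ of the $O_i$ coincide with $O$, and otherwise report failure. To make this efficient, hash every $O_i$ with a universal hash into a table of (key, multiplicity) pairs, or sort the pairs. Then compute a fingerprint of each $O_i$, for example a polynomial hash of the set over $\F_p$, and group equal fingerprints. Confirm any majority candidate by direct comparison in $\OO(n)$ time per comparison. The simplest correct version compares $O_1$ against each of the other copies, then the next unmatched candidate, and so on; that alone costs $\OO(c^2 n)$. To stay within $\OO(cn)$, I would use the fingerprints and compare exactly only against the single most frequent fingerprint class, which costs $\OO(cn)$. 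The fingerprint can only cause a false rejection, never a false acceptance, because the winning class is verified exactly.

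Correctness needs two facts. First, since the copies use independent hash functions, the events "copy $i$ fails to retrieve $\calS$" are independent, each with probability $q=\OO(n^{-1})$ by \cref{backyard-inner}. Second, every copy that succeeds outputs exactly $\calS$ with its multiplicities. If at least half of the copies plus one succeed, they all agree on $\calS$, so no other output can reach a majority, and the vote returns $\calS$. The failure event is therefore contained in "at least $c/2$ copies fail". Its probability is at most $\binom{c}{\lceil c/2\rceil} q^{\lceil c/2\rceil}\le 2^c\cdot\OO(n^{-1})^{c/2}=\OO(n^{-c/2})$ for constant $c$.

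The main obstacle is that this yields only $n^{-c/2}$ rather than $n^{-c}$. I would fix it by using $2c+1$ copies instead of $c$, which changes all bounds only by a constant factor. The analysis also needs that a failing copy cannot corrupt a correct majority. This holds automatically, since a strict majority of identical correct outputs excludes any competing output. A further subtlety is the independence of the failure events. Each copy's failure depends only on its own hash functions (including the iteratively defined sets evaluated during its own decoding) and on the fixed $\calS$, so independence follows from sampling the functions independently.
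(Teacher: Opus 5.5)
Your construction matches the paper's: $k=\OO(c)$ independent Patched IBLTs, every update applied to all of them, each decoded separately, a majority vote over the candidate outputs, and the bound $\binom{k}{k/2}\OO(1/n)^{k/2}$ on the probability that at least half fail. The ``obstacle'' you raise is not really one. The paper simply takes $k=\OO(c)$ large enough (e.g.\ $k\approx 2c$), which is exactly what your $2c+1$ copies do. Your accounting of cells, update time, hash-function space and independence of the copies' failures is also fine.

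The gap is in your implementation of the majority vote. You correctly note that fingerprints cannot cause a false acceptance. However, a false rejection is also a failure of the Backyard, so its probability must be $\OO(n^{-c})$, and you never bound it. It is not automatically small. In the setting of \cref{thm:main} one only has $p\ge n$, so a polynomial fingerprint over $\F_p$ of a multiset with up to $n$ keys has no useful collision guarantee. Suppose one copy fails, which may happen with probability of order $1/n$. If its wrong output $O'$ collides with the fingerprint of $\calS$, then $O'$ lands in the most frequent class. If your exact check uses a single representative of that class and picks $O'$, it rejects even though $\calS$ holds a strict majority. As written, your argument therefore only gives failure probability $\OO(1/n)$ rather than $\OO(n^{-c})$.

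The missing idea is that no fingerprinting is needed. The Boyer--Moore majority vote algorithm, which the paper uses, finds a strict majority among $k$ candidates using only $\OO(k)$ exact equality tests, plus $\OO(k)$ more if you want to verify. Each test between multisets of size at most $n$ costs $\OO(n)$, e.g.\ via a hash table, so the total is $\OO(cn)$. This avoids your $\OO(c^2n)$ pairwise scheme, returns $\calS$ whenever more than $k/2$ copies succeed, and adds no failure mode. Equivalently, you could run Boyer--Moore inside your top fingerprint class. With that replacement your proof is complete.
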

\begin{proof}
  The Backyard consists of \(k=\OO(c)\) independent instances of the Patched IBLT from \cref{backyard-inner}, each of capacity \(n\).
  To add key \(x\), we insert \(x\) into all \(k\) Patched IBLTs, and thus the Backyard performs updates in \(\OO(c \epsilon^{-3})\) time.

  To decode the Backyard, we decode each Patched IBLT separately, obtaining (up to) \(k\) candidate solutions \(S_1, \dots, S_{k}\), each of size \(n\).
  With probability \(\OO(n^{-c})\) less than half of these will be different from \(\calS\).
  Indeed, failures of the Patched IBLTs are indepedent of each other, and thus the probability that \(k/2\) of the sketches fail is bounded by \(\binom{k}{k/2} \cdot \OO(1/n)^{k/2} \leq \OO(n^{-c})\) for \(k=\OO(c)\).

  When \(\calS\) makes up the majority of the candidate solutions, we can identify it through the Boyer-Moore majority vote algorithm in \(\OO(cn)\) time \cite{DBLP:conf/birthday/Moore91}.
\end{proof}

\subsection{The Stuffed IBLT, Proof of Theorem \ref{thm:main}}

\label{sec:maintheorem}
We are finally ready to prove \cref{thm:main}, restated below.
\maintheorem*
\begin{proof}
The Stuffed IBLT contains two structures:
A Quotient-Sketch (\cref{quotientsketch}) with capacity \(n\) and a Backyard with capacity \(n_b\) corresponding to the error tolerance of the Quotient-Sketch. The remaining parameters we fix below, after describing the overall structure.

Each key of \(\calS\) is stored in both structures.
To decode the stored set of keys, we first decode the Quotient-Sketch, which should return a set differing from \(\calS\) in at most \(n_b\) coordinates.
The retrieved keys are subtracted from the backyard, which now contains at most \(n_b\) keys, with multiplicities in $[-2L,2L]$.
These can then be recovered completely, provided we set the multiplicity limit of the backyard to $2L$ (which increases space negligibly compared to multiplicity limit $L$).
Finally, the list of keys recovered from the Backyard is merged with the list from the Quotient-Sketch in \(\OO(n)\) time.
Any key with correct multiplicity $f$ that is erroneously reported with multiplicity \(f'\) when decoding the Quotient-Sketch will be stored in (and thus recovered from) the Backyard with frequency \(f-f'\).
Thus the occurrences cancel out, and the resulting list correctly enumerates \(\calS\).

For the first part of the theorem statement, set the quality of the Quotient-Sketch to \(c\) and the stretch to any small constant \(\epsilon\).
The error tolerance of the Quotient-Sketch is \(\OO(n^{1-\epsilon/4})\), and hence \(n_b = \OO(n^{1 - \epsilon/4})\).

The Quotient-Sketch will decode correctly with probability \(1 - \OO(n^{-c})\). Setting the quality of the Backyard to \(2c\), the failure probability of the Backyard becomes \(\OO(n_b^{-2c}) \leq \OO(n^{-c})\).
Thus the combined sketch recovers \(\calS\) with probability \(1 - \OO(n^{-c})\) as desired.

Finally, we set the stretch of the Backyard to \(\epsilon/8\).
Updates and decodes of the combined sketch takes \(\OO(c)\) and \(\OO(cn)\) time, respectively.
Left is to bound the overall space usage.

The space usage of the Quotient-Sketch is bounded by \((1+e^{-\Omega(c)}) n (\log_2(2L+1) + \log_2(\size{\calU})\) bits, when we discard the space savings due to quotienting.
We now show that the space usage of the remaining elements of the sketch (the Backyard and all required hash functions) take up space sublinear in \(n \log \size{\calU}\).
This makes them lower order terms compared to the bound given above, and their contributions to the space budgets are subsumed by the \(e^{-\Omega(c)}\) term.

First, the hash functions supplied to the Quotient-Sketch take up \(\OO(n^{3/4} \log \size{\calU})\) bits.
The \(\OO(c n_b)\) cells of the backyard likewise take up space sublinear in \(n \log \size{\calU}\), as \(n_b \leq \OO(n^{1 - \epsilon/4})\).
The same goes for its hash functions, as \(n_b^{1+\epsilon/8} \leq n^{1-\epsilon/4}\).
This concludes the analysis of the space usage, and proves the first half of \cref{thm:main}.

For the second part, assume that \(\calU = n^{1+\alpha}\), for a constant \(\alpha > 0\), and let \(\epsilon\) be given.
The sketch works in the same way as described above, except that we set the quality of the Quotient-Sketch to \(c' \coloneq \max\set{c, \OO(\epsilon \ln 1/\epsilon)}\), such that \(e^{-\Omega(c'/\epsilon)} \leq \OO(\epsilon)\).
The remaining parameters we set as described above.

Including the effect of parameter \(\epsilon\), the update time of the Quotient-Sketch is \(\OO(c'/\epsilon) \leq \OO(c/\epsilon^2)\), while that of the Backyard is \(\OO(c/\epsilon^3)\).
The decoding times of the two sketches are bounded by \(\OO(cn/\epsilon^2)\) and \(\OO(cn/\epsilon^3)\), respectively.

We still have to bound the space usage of this sketch in terms of \(\OPT\).
We will show that the space usage of the Quotient-Sketch approximates \(\OPT\) to within a factor of \(1+\OO(\epsilon)\), and afterwards show that the auxiliary space is sublinear in the size of the Quotient-Sketch.

First, we establish a lower bound on \(\OPT\), the minimum space needed to store a multiset of \(n\) distinct keys with multiplicities in \([-L, L]\):
The data structure must store \(n\) values in \([-L, L] \setminus \set{0}\), and encode an \(n\)-subset of \(\calU\).
This necessarily requires at least \(\log_2 \binom{\calU}{n} + \log_2 \left((2L)^n\right)\) bits.
As \(\log_2 \left((2L)^n\right) \geq n \log_2 (2L+1) - n\) and \(\log_2 \binom{\calU}{n} \geq n \log_2 (\size{\calU}/n) = \alpha n \log_2 n\) we obtain the bound \(\OPT \geq n\left(\log_2 (2L+1) + \alpha \log_2 n\right) - n\).

The Quotient-Sketch takes up \((1+\OO(\epsilon))n(\log_2(2L+1) + (\alpha+\epsilon)\log_2(n))\) bits, which is at most
\[
(1+\OO(\epsilon)) \cdot (\OPT+ \epsilon n\log_2(n) + n)) \leq (1+\OO(\epsilon)) \cdot \paren*{1 + \OO(\epsilon/\alpha)} \OPT \leq (1+\OO(\epsilon)) \OPT \, .
\]

Finally, we bound the space usage of the Backyard and the hash functions.
The hash functions supplied to the Quotient-Sketch take up \(\OO(n^{3/4} \log\size{\calU}) = \OO(\alpha n^{3/4} \log n)\), which is sublinear in the space usage of the Quotient-Sketch itself.
For the Backyard and its auxiliary data we have already argued above that the space usage is sublinear in \(n \log\size{\calU} = (1+\alpha) n \log n\), and thus they are likewise dominated by the space usage of the Quotient-Sketch.
This concludes the proof.
\end{proof}

\section{Lower Bounds for Peeling-Based Multiset Sketches}
\label{sec:lower-bound}
In this section, we define the class of \emph{idealised peeling-based multiset sketches} (IPMS) that are granted a \emph{purity oracle} that decides if a given cell is pure.
We derive lower bounds for these hypothetical data structures, implying that no actual sketch using the same setup without an oracle can beat these lower bounds.

Even though our lower bounds do not apply to sketches that combine peeling with something else, we still believe they point to barriers that any approach \emph{primarily} relying on peeling is likely to face.%
\footnote{There are sketches that are peeling-based in the intuitive sense but to which our lower bounds don't apply \cite{baek2023simple,belazzougui2024better}. The issue is that these sketches can accidentally deviate from peeling due to failing peeling heuristics, but recover from such errors eventually. It seems absurd to suggest, however, that such temporary glitches are useful for breaking the lower bounds.}
Given that our own construction reaches these barriers, as shown in \cref{tab:lower-bound-vs-theorem}, we see this as a strong indication that fundamentally new ideas would be needed to surpass it decisively.

Nevertheless, in \cref{sec:beyond-lower-bound} we list ways of stepping outside the box of IPMS that can slightly improve performance.

\begin{definition}
    \label{def:ipms}
    An IPMS for a universe $\U$, a capacity $n ∈ ℕ$ and a multiplicity limit $L ∈ ℕ$ is given by a finite group $G$, a function $π : \U × ℤ → G$ that is injective on%
    \footnote{Injectivity is necessary to decode a pure cell from its content.}
    $\U × (\{-L,…,…L\} \setminus \{0\})$ and linear in its second argument,
    a size $m ∈ ℕ$, and a distribution $\D$ on subsets of $[m]$.

    Each key $x ∈ \U$ is associated with a random set $H_x \sim \D$ of cell indices, independently for each key.%
    \footnote{This is a fitting model if $x$ is mapped to $H_x$ using hashing, but also whenever a (possibly deterministic) sketch is faced with a random input set.}
    The sketch uses an array $T$ of $m$ cells, each storing an element of $G$, initially the neutral element $0 ∈ G$.
    Update and decode are implemented as shown in \cref{alg:ipms}.
\end{definition}
We remark that the sketch behind the main claim of Theorem 1 fits this definition, except for a minor difference with respect to decoding the backyard.\footnote{For the purpose of probability amplification, our backyard uses several sub-sketches that are decoded independently of each other (each effectively using decodeIPMS from \cref{alg:ipms}).}
The “I” in IPBS ensures that the sketch is successfully decoded if and only if the underlying configuration is peelable.

\begin{algorithm}
  \caption{Update and decode procedures of an idealised peeling-based multiset sketch involving a purity oracle. }
  \label{alg:ipms}

  \SetKwFunction{updateIPMS}{updateIPMS}
  \SetKwFunction{decodeIPMS}{decodeIPMS}
  \def\decoded{\textrm{decoded}}
  \proc{\updateIPMS{$x,f$}}{
    \For{$i ∈ H_x$}{
      $T[i] ← T[i] + π(x,f)$\;
    }
  }
  \proc{\decodeIPMS{}}{
    $\decoded ← ∅$\;
    \While(\tcp*[h]{purity oracle}){$∃i ∈ [m]:$ $T[i]$ is pure}{
        let $(x,f) ∈ \U × (\{-L,…,L\} \setminus \{0\})$ such that $π(x,f) = T[i]$ \tcp{unique}
        $\decoded ← \decoded ∪ \set{(x,f)}$\;
        \updateIPMS{$x,-f$}
    }
    \Return $\decoded$\;
  }
\end{algorithm}

\def\err{\mathrm{err}}
\def\E{\mathbb{E}}
\begin{theorem}[Formal version of \cref{thm:lower-bounds-informal}]
    \label{thm:lower-bounds}
    Consider an arbitrary IPMS (given by $\U, G, π, n, L, m, \D$). Let $c = \E_{H \sim \D}[|H|]$ and $\err$ the failure probability when decoding at full capacity%
    \footnote{A sketch is at full capacity if it stores a signed multiset $v ∈ [-L,L]^\U$ with exactly $n$ non-zeroes.}%
    . Then
    \begin{enumerate}[(a)]
        \item The expected update time is $Θ(c)$.
        \item Assuming $\err ≤ \frac 12$ and $\Var{\D} ≤ nc²/16$, the expected decode time is $Ω(cn)$.
        \item Assuming $m ≤ n²$, we have $\err = n^{-\OO(c)}$.
        \item Assuming $\err ≤ \frac 12$ and $c = o(\log n)$ we have $m ≥ (1+e^{-\OO(c)})n$.
        \item The space requirement is at least $m\log₂(|\U|·2L)$ bits.
    \end{enumerate}
\end{theorem}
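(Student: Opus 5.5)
I will treat the five parts separately, all within the model of \cref{def:ipms} with independent $H_x \sim \D$.

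\textbf{Parts (a) and (e).} For (a), \textsc{updateIPMS}$(x,f)$ touches exactly the cells of $H_x$, so its expected cost is $\Theta(\E|H_x|) = \Theta(c)$, treating evaluation of $π$ and one group addition as unit cost. For (e), I will use a counting argument. Decoding at full capacity recovers $v$ with probability at least $1/2$, and a pure cell's content determines $(x,f) ∈ \U × (\{-L,\dots,L\}\setminus\{0\})$ injectively. It follows that $|G| \ge |\U|\cdot 2L$, so each cell needs $\log_2(|\U|\cdot 2L)$ bits and the table needs $m$ times that.

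\textbf{Part (b).} Decoding must remove each of the $n$ keys through \textsc{updateIPMS}$(x,-f)$, which costs $|H_x|$. So the decode time is at least $\sum_{x\in\calS}|H_x|$ on successful runs. This sum has mean $cn$ and variance $n\Var{\D} \le n^2c^2/16$. By Chebyshev, it is at least $cn/2$ with probability at least $1 - \frac{n^2c^2/16}{(cn/2)^2} = 3/4$. Intersecting with the success event, which has probability at least $1/2$, gives probability at least $1/4$. Hence the expected decode time is $\Omega(cn)$.

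\textbf{Part (c).} I will exhibit a small obstruction to peeling. If two keys $x,y$ have $H_x = H_y$, neither can ever be peeled, because every cell containing one contains the other, so decoding fails. I would lower-bound the probability of such a collision via collision probability: $\Pr[H_x=H_y] = \sum_H \D(H)^2$. The difficulty is that $\D$ may spread its mass over many sets. The more robust route is to pick a pair of keys and ask that both hash into a set of only $O(c)$ cells that nobody else touches; this resembles a stopping set. The simplest concrete version: with probability $n^{-O(c)}$ there are keys whose edges lie inside a fixed small set of cells, giving a $2$-core.

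The computation I would try first: by Markov, $\Pr[|H| \le 2c] \ge 1/2$. Restricted to $|H|\le 2c$, the number of possible sets is at most $\sum_{k\le 2c}\binom{m}{k} \le m^{2c+1} \le n^{4c+2}$, using $m \le n^2$. By pigeonhole, some set $H^*$ has probability at least $\frac12 n^{-4c-2}$. Then the probability that two of the $n$ keys both pick $H^*$ is at least roughly $\binom n2 (n^{-4c-2}/2)^2 = n^{-O(c)}$, and in that event decoding fails. This handles (c), with the main care needed only in the $c\ge$ const bookkeeping.

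\textbf{Part (d).} This is the step I expect to be the main obstacle. The idea is that a cell never touched by any key is wasted, and peelability forces $m \ge n + (\text{number of empty cells})$, since peeling needs $n$ distinct pure cells, one per peeled key, all among non-empty cells. So $m \ge n$ plus the expected number of cells of degree zero, conditioned on success.

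The probability that cell $i$ is empty is $(1-p_i)^n$, where $p_i = \Pr[i\in H]$ and $\sum_i p_i = c$. By convexity this is at least $m(1-c/m)^n \approx m e^{-cn/m} \ge n e^{-2c}$ when $m \le 2n$; if $m > 2n$ we are done anyway. To transfer this to the success event, I would use concentration: the number of empty cells is a function of independent $H_x$ with bounded differences $c_x = |H_x|$. The condition $c = o(\log n)$ makes $ne^{-O(c)}$ large enough (polynomial in $n$) that McDiarmid-type concentration, applied after truncating large $|H_x|$, yields probability above $1/2$ of at least $\frac12 n e^{-O(c)}$ empty cells. Intersecting with success, which has probability at least $1/2$, is the delicate point; I would add a slack constant to make the two events overlap with positive probability. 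That gives $m \ge (1+e^{-O(c)})n$.
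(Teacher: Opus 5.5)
Your arguments for (a), (b), (c) and (e) follow the paper. In (c) you use pigeonhole on one heavy set $H^*$ where the paper uses $\sum_i p_i^2 \ge 1/(4N)$; both work. A single fixed pair already gives $\Pr[H_x=H_y=H^*]=\mathcal{D}(H^*)^2=n^{-O(c)}$, so you do not need the $\binom n2$ factor.

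The first two steps of (d) also match the paper. Success forces $m\ge n+E$ because the $n$ peeling cells are distinct and initially used, and $\mathbb{E}[E]\ge ne^{-2c}$ by convexity.

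\textbf{The gap is the concentration step in (d).} McDiarmid needs deterministic bounds on the differences, but $|H_x|$ is only bounded in expectation. To truncate at a level $K$ that no key exceeds with good probability, Markov forces $K\gg nc$. Then $\exp(-2t^2/(nK^2))$ with $t=\Theta(ne^{-2c})$ is vacuous.

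Worse, the claim you are aiming for, that $E\ge\tfrac12 ne^{-O(c)}$ with probability above $\tfrac12$, is false for general $\mathcal{D}$. Let $\mathcal{D}$ output $[m]$ with probability $0.8/n$ and otherwise sample from the distribution of Theorem~\ref{thm:threshold}. This raises $c$ by at most $0.8m/n=O(1)$. If exactly one key $z$ has $H_z=[m]$, decoding still succeeds by peeling $z$ first from a cell the other keys do not use. So $\mathrm{err}$ is about $\Pr[\text{at least two full keys}]\approx 0.19$, yet $E=0$ with probability about $1-e^{-0.8}>\tfrac12$.

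The paper gets this step by asserting that $E_1,\dots,E_m$ are negatively associated. That also fails for general $\mathcal{D}$: if every set in the support contains both or neither of cells $1,2$, then $E_1=E_2$. The example above also defeats the paper's conclusion $\Pr[E\ge\mathbb{E}[E]/2]>\tfrac12$, so you cannot simply swap in that route.

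\textbf{The missing idea is to account for big keys rather than truncating them away.}
\begin{itemize}
\item Fix $K=4ce^{2c}$ and call $x$ big if $|H_x|>K$. Let $N$ be the number of big keys and $E'$ the number of cells used by no small key.
\item In a successful run each key is peeled from a distinct cell of its own $H_x$, and only big keys can be peeled from one of the $E'$ cells. Hence success implies $m-n\ge E'-N$.
\item $E'$ is a function of the independent truncated sets $H'_x$, where $H'_x=H_x$ if $x$ is small and $H'_x=\emptyset$ otherwise. These have differences at most $K$ and still satisfy $\sum_i\Pr[i\in H'_x]\le c$.
\item Your convexity bound therefore gives $\mathbb{E}[E']\ge ne^{-2c}$. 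McDiarmid then gives $E'\ge\tfrac34 ne^{-2c}$ except with probability $\exp(-ne^{-8c}/(128c^2))=o(1)$, using $c=o(\log n)$.
\item $N$ is binomial with mean at most $nc/K=\tfrac14 ne^{-2c}=n^{1-o(1)}$, so $N\le\tfrac12 ne^{-2c}$ with probability $1-o(1)$.
\item Intersecting with the success event, which has probability at least $\tfrac12$, gives $m\ge(1+\tfrac14e^{-2c})n$.
\end{itemize}
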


\begin{proof}
    \begin{enumerate}[(a)]
        \item An update for $x ∈ \U$ involves a loop over $|H_x|$ cells of $T$, hence expected time at least $\E[|H_x|] = c$.
        \item If we had $\err = 0$ then the decode procedure performs exactly $n$ updates and hence takes $Ω(cn)$ expected time. The general argument for $\err ≤ 1/2$ is more complicated because a failing call to decode might be much faster and success might be negatively correlated with the sum of the update costs. The added assumption $\Var{\D} = o(nc²)$ addresses this. Combined with Chebyshev's inequality it implies that the cost of $n$ updates exceeds $nc/2$ with probability at least $\frac{3}{4}$ so that even conditioned on successful decoding the expected cost of the $n$ updates is $Ω(cn)$.
        \item Let $x ≠ y ∈ \U$ be two fixed keys contained in the sketch. Peeling necessarily fails if $H_x = H_y$ as then neither key can be peeled before the other. In the following we bound the probability $p = \Pr[H_x = H_y ∧ |H_x| ≤ 2c]$ of such a full collision in the special case where $|H_x| ≤ 2c$.

        Let $N$ be the number of subsets of $[m]$ of size at most $2c$ and $p₁,…,p_N$ their probabilities under $\D$. We have $p = p₁²+…+p_N²$. By Markov's inequality we know
        \[
            p₁ + … + p_N = \Pr_{H \sim \D}[|H| ≤ 2c] = \Pr_{H \sim \D}[|H| ≤ 2\E[|H|]] ≥ \frac 12.
        \]
        Under this constraint the smallest possible value for $p$ is attained if $p₁ = … = p_N = \frac{1}{2N}$. This implies $p ≥ \big(\frac{1}{2N}\big)²N = \frac{1}{4N}$. Using that $N = \sum_{i = 0}^{2c} \binom{m}{i} ≤ \sum_{i = 0}^{2c}m^i ≤ 2m^{2c} ≤ 2n^{4c}$ we obtain $\err ≥ p ≥ \frac{1}{4N} ≥ \frac{1}{8n^{4c}}$, which implies the claim.
        \item Let $E$ be the (random) number of empty cells at full capacity, i.e.\ cells not used by any key contained in the sketch.
        Successful decoding requires $n+E ≤ m$ because each of the $n$ iterations of decode turns at least one non-empty cell into an empty cell and there are $m$ empty cells in the end. Note that this already implies $m ≥ n$.

        With $p_i := \Pr_{H \sim \D}[i ∈ H]$, the event $E_i$ that the $i$th cell is empty has probability $(1-p_i)^n$ so $\E[E] = \sum_{i ∈ [m]} (1-p_i)^n$. Note that for $H \sim \D$ we have
        \[
        p₁ + … + pₘ = \sum_{i ∈ [m]} \Pr[i ∈ H] = \sum_{i ∈ [m]} \E[\mathbf{1}_{i ∈ H}] = \E[\sum_{i ∈ [m]}\mathbf{1}_{i ∈ H}] = \E[|H|] = c,
        \]
        i.e.\ the average of $p₁,…,p_m$ is $c/m$. Convexity of $x ↦ (1-x)^n$ on $[0,1]$ gives
        \[\sum_{i ∈ [m]} (1-p_i)^n 
        ≥ m·(1-\tfrac{c}{m})^n
        \stackrel{\textrm{(i)}}{≥}
        m·e^{-2cn/m}
        \stackrel{\textrm{(ii)}}{≥}
        n·e^{-2c}
        \]
        where (i) uses that $1-ε ≥ e^{-2ε}$ for $ε ∈ [0,\frac 12]$ and (ii) uses that $m ≥ n$. Hence $\E[E] ≥ n·e^{-2c}$. We assumed $c = o(\log n)$ implying $\E[E] = ω(1)$. Since $E₁,…,E_m$ are negatively associated \cite{JDP83negativeAssociation}\footnote{See \cref{def:NA}, and \cref{lem:NA-applied} for a more detailed discussion of a similar case.}, we can apply a Chernoff bound to show $\Pr[E ≥ \E[E]/2] > \frac 12$. Since $\Pr[E > m-n] ≤ \err ≤ \frac 12$ we have $m-n ≥ \E[E]/2$. Together we get as claimed that $m ≥ n+\E[E]/2 ≥ n+n·e^{-2c} = (1+e^{-2c})n$.
        • The injectivity condition in the definition of IPMSs implies that $|G| ≥ |\U|·2L$. Since we use an array of $m$ elements from $G$ the claim follows.\qedhere
    \end{enumerate}
\end{proof}

\begin{table}[htbp]
    \centering
    \begin{tabular}{rll}
        \toprule
         & lower bound for any IPMS & our sketch\\
         \midrule
         update time & $Θ(c)$ in expectation & $\OO(c)$ in worst case\\
         decode time & $Ω(cn)$ in expectation & $\OO(cn)$ in expectation\\
         failure probability & $n^{-\OO(c)}$ & $n^{-Ω(c)}$\\
         cells used \((= m)\) & $(1+e^{-\OO(c)})n$ & $(1+e^{-Ω(c)})n$\\
         total space & $\#\text{cells}·\log₂(|\U|·2L)$ & $\#\text{cells}·\log₂(|\U|·(2L+1)) + o(n \log \size{\calU})$\\
        \bottomrule
    \end{tabular}
    
    \caption{By \cref{thm:lower-bounds,thm:main}, no idealised peeling based multiset sketch (IPMS, \cref{def:ipms}) can be significantly better than our own construction.
    }
    \label{tab:lower-bound-vs-theorem}
\end{table}

\subsection{Can the Lower Bound be Beaten?}
\label{sec:beyond-lower-bound}
A multiset sketch not strictly adhering to the design described in \cref{def:ipms} might surpass the bounds of \cref{thm:lower-bounds}.

\begin{description}
    •[Error probability.] Most obviously, deterministic sketches achieve $\err = 0$, which no IPMS can \cite{eppstein2010straggler,ganguly2007randomised}.
    •[Space.]
    Using quotienting, as we do in a variant of stuffed IBLTs, it is possible to store less than $\log₂(|\U|·2L)$ bits per cell.
    Viewed abstractly, we drop the condition that a key and its multiplicity can be uniquely decoded from a pure cell's content and replace it with the weaker condition that key and multiplicity can be uniquely decoded from a pure cell's content \emph{and the cell's index}.
    •[Decoding time.]
    An idea in \cite{belazzougui2024better} suggests that decoding can be faster in expectation than $n$ updates, which undermines \cref{thm:lower-bounds} (b).
    Abstractly speaking, we maintain two sketches, a fast one with parameter $c = \OO(1)$ and a slow one with parameter $c' > c$. Updates then take $\OO(c')$ time. A decoding attempt will first consider the fast sketch in isolation, succeeding in time $\OO(n)$ with probability $1-\OO(1/n)$. The slow sketch is only considered if this fails and, in that case, decoded successfully in time $\OO(c'n)$ with probability $1-\OO(n^{-c'})$. The overall expected decoding time is $\OO(n)+\OO(nc'/n) = \OO(n)$ and overall failure probability is $\OO(n^{-c'})$.

    Our lower bound on decode in \cref{thm:lower-bounds} (ii) can be undercut with an implementation of decode that differs from \cref{alg:ipms}.
    
    Note also, however, that the improvement just outlined has to check whether decoding the first sketch has succeeded, which may require additional computations of checksums with total bit complexity $\OO(cn\log n)$. This can be done in $\OO(n)$ time only on word RAMs with word size $w ≥ c \log n$.
\end{description}

These considerations aside, the fact that known deterministic multiset sketches are slow, the fact that all other known multiset sketches rely on peeling, and \cref{tab:lower-bound-vs-theorem}, all taken together, suggest that with our construction we have reached a barrier that is hard to surpass.

\section{Conclusion}
We have seen that it is possible to approach information-theoretical barriers for storing multisets, even for sparse, linear multiset sketches with efficient recovery.
It may still be possible to improve the efficiency of our methods, for example the dependence of the running time on the space overhead. Here are some concrete follow up questions.

\begin{enumerate}

\item Can we support higher multiplicities? The sketches presented in this work only supports limited multiplicity \(L < p/2\), when keys come from universe \(\calU = \mathbb{F}_p\).
Conceivably there are applications where the required multiplicities exceed the universe size.

\item What can be achieved in practice?
Our structures consists of several layers, making it rather impractical. The hidden constants, e.g.\ when we assume ``sufficiently large \(n\)'' are likely very large.

\item What can be done when \(n \approx \calU\)?

\item Can we boost the error probability beyond \(\poly(n)\), without sacrificing space?

\item It would be interesting to avoid the assumption that finite field division can be done in constant time.
Even though it is a very cheap operation in practice, it does take $O(\log p)$ standard word RAM operations to carry out a single division in a finite field.
\end{enumerate}

\iffinal
\paragraph{Acknowledgement.}
We would like to thank Jakob Bæk Tejs Houen for useful discussions in the early stages of this work.
\fi

\bibliography{general.bib}

@book{Dubhashi_Panconesi_2009,
  title={Concentration of Measure for the Analysis of Randomized Algorithms},
  publisher={Cambridge University Press},
  author={Dubhashi, Devdatt P. and Panconesi, Alessandro},
  year={2009}
}

@article{GL2022fusefilter,
  author       = {Thomas Mueller Graf and
                  Daniel Lemire},
  title        = {Binary Fuse Filters: Fast and Smaller Than Xor Filters},
  journal      = {{ACM} J. Exp. Algorithmics},
  volume       = {27},
  pages        = {1.5:1--1.5:15},
  year         = {2022},
  url          = {https://doi.org/10.1145/3510449},
  doi          = {10.1145/3510449},
  bibsource    = {dblp computer science bibliography, https://dblp.org}
}

@article{KRU2011coupling,
  author       = {Shrinivas Kudekar and
                  Thomas J. Richardson and
                  R{\"{u}}diger L. Urbanke},
  title        = {Threshold Saturation via Spatial Coupling: Why Convolutional {LDPC}
                  Ensembles Perform So Well over the {BEC}},
  journal      = {{IEEE} Trans. Inf. Theory},
  volume       = {57},
  number       = {2},
  pages        = {803--834},
  year         = {2011},
  doi          = {10.1109/TIT.2010.2095072}
}

@article{JDP83negativeAssociation,
author = {Kumar Joag-Dev and Frank Proschan},
title = {{Negative Association of Random Variables with Applications}},
volume = {11},
journal = {The Annals of Statistics},
number = {1},
publisher = {Institute of Mathematical Statistics},
pages = {286 -- 295},
year = {1983},
doi = {10.1214/aos/1176346079},
URL = {https://doi.org/10.1214/aos/1176346079}
}

@article{cleary84compact,
  author       = {John G. Cleary},
  title        = {Compact Hash Tables Using Bidirectional Linear Probing},
  journal      = {{IEEE} Trans. Computers},
  volume       = {33},
  number       = {9},
  pages        = {828--834},
  year         = {1984},
  doi          = {10.1109/TC.1984.1676499},
}

@inproceedings{dietz2009splittingtrick,
  author       = {Martin Dietzfelbinger and
                  Michael Rink},
  title        = {Applications of a Splitting Trick},
  booktitle    = {36th {ICALP}},
  series       = {Lecture Notes in Computer Science},
  volume       = {5555},
  pages        = {354--365},
  publisher    = {Springer},
  year         = {2009},
  doi          = {10.1007/978-3-642-02927-1\_30}
}

@article{ganguly2007randomised,
  author       = {Sumit Ganguly},
  title        = {Counting distinct items over update streams},
  journal      = {Theor. Comput. Sci.},
  volume       = {378},
  number       = {3},
  pages        = {211--222},
  year         = {2007},
  url          = {https://doi.org/10.1016/j.tcs.2007.02.031},
  doi          = {10.1016/J.TCS.2007.02.031}
}

@inbook{McDiarmid:1989,
  series={London Mathematical Society Lecture Note Series},
  title={On the method of bounded differences}, DOI={10.1017/CBO9781107359949.008},
  booktitle={Surveys in Combinatorics},
  publisher={Cambridge University Press},
  author={McDiarmid, Colin},
  year={1989},
  pages={148–188},
  collection={London Mathematical Society Lecture Note Series}
}

@inproceedings{ganguly2006deterministic,
  title={Deterministic k-set structure},
  author={Ganguly, Sumit and Majumder, Anirban},
  booktitle={Proceedings of the twenty-fifth ACM SIGMOD-SIGACT-SIGART symposium on Principles of database systems},
  pages={280--289},
  year={2006}
}

@article{eppstein2010straggler,
  title={Straggler identification in round-trip data streams via Newton's identities and invertible {B}loom filters},
  author={Eppstein, David and Goodrich, Michael T},
  journal={IEEE Transactions on Knowledge and Data Engineering},
  volume={23},
  number={2},
  pages={297--306},
  year={2010},
  publisher={IEEE}
}

@inproceedings{goodrich2011invertible,
  title={Invertible {B}loom lookup tables},
  author={Goodrich, Michael T and Mitzenmacher, Michael},
  booktitle={2011 49th Annual Allerton Conference on Communication, Control, and Computing (Allerton)},
  pages={792--799},
  year={2011},
  organization={IEEE}
}

@inproceedings{DBLP:conf/stoc/ChristianiPT15,
  author       = {Tobias Christiani and
                  Rasmus Pagh and
                  Mikkel Thorup},
  editor       = {Rocco A. Servedio and
                  Ronitt Rubinfeld},
  title        = {From Independence to Expansion and Back Again},
  booktitle    = {Proceedings of the Forty-Seventh Annual {ACM} on Symposium on Theory
                  of Computing, {STOC} 2015, Portland, OR, USA, June 14-17, 2015},
  pages        = {813--820},
  publisher    = {{ACM}},
  year         = {2015},
  url          = {https://doi.org/10.1145/2746539.2746620},
  doi          = {10.1145/2746539.2746620},
  bibsource    = {dblp computer science bibliography, https://dblp.org}
}

@InProceedings{fleischhacker2023invertible,
  author =	{Fleischhacker, Nils and Larsen, Kasper Green and Obremski, Maciej and Simkin, Mark},
  title =	{Invertible {B}loom Lookup Tables with Less Memory and Randomness},
  booktitle =	{32nd Annual European Symposium on Algorithms (ESA 2024)},
  pages =	{54:1--54:17},
  series =	{Leibniz International Proceedings in Informatics (LIPIcs)},
  ISBN =	{978-3-95977-338-6},
  ISSN =	{1868-8969},
  year =	{2024},
  volume =	{308},
  editor =	{Chan, Timothy and Fischer, Johannes and Iacono, John and Herman, Grzegorz},
  publisher =	{Schloss Dagstuhl -- Leibniz-Zentrum f{\"u}r Informatik},
  address =	{Dagstuhl, Germany},
  URL =		{https://drops.dagstuhl.de/entities/document/10.4230/LIPIcs.ESA.2024.54},
  URN =		{urn:nbn:de:0030-drops-211252},
  doi =		{10.4230/LIPIcs.ESA.2024.54}
}

@inproceedings{baek2023simple,
  title={Simple set sketching},
  author={B{\ae}k Tejs Houen, Jakob and Pagh, Rasmus and Walzer, Stefan},
  booktitle={Symposium on Simplicity in Algorithms (SOSA)},
  pages={228--241},
  year={2023},
  organization={SIAM}
}

@inproceedings{walzer21,
  author       = {Stefan Walzer},
  editor       = {D{\'{a}}niel Marx},
  title        = {Peeling Close to the Orientability Threshold - Spatial Coupling in
                  Hashing-Based Data Structures},
  booktitle    = {Proceedings of the 2021 {ACM-SIAM} Symposium on Discrete Algorithms,
                  {SODA} 2021, Virtual Conference, January 10 - 13, 2021},
  pages        = {2194--2211},
  publisher    = {{SIAM}},
  year         = {2021},
  url          = {https://doi.org/10.1137/1.9781611976465.131},
  doi          = {10.1137/1.9781611976465.131},
  bibsource    = {dblp computer science bibliography, https://dblp.org}
}

@article{DBLP:journals/mst/FotakisPSS05,
  author       = {Dimitris Fotakis and
                  Rasmus Pagh and
                  Peter Sanders and
                  Paul G. Spirakis},
  title        = {Space Efficient Hash Tables with Worst Case Constant Access Time},
  journal      = {Theory Comput. Syst.},
  volume       = {38},
  number       = {2},
  pages        = {229--248},
  year         = {2005},
  url          = {https://doi.org/10.1007/s00224-004-1195-x},
  doi          = {10.1007/S00224-004-1195-X},
  bibsource    = {dblp computer science bibliography, https://dblp.org}
}

@InProceedings{belazzougui2024better,
  author =	{Belazzougui, Djamal and Kucherov, Gregory and Walzer, Stefan},
  title =	{Better Space-Time-Robustness Trade-Offs for Set Reconciliation},
  booktitle =	{51st International Colloquium on Automata, Languages, and Programming (ICALP 2024)},
  pages =	{20:1--20:19},
  series =	{Leibniz International Proceedings in Informatics (LIPIcs)},
  ISBN =	{978-3-95977-322-5},
  ISSN =	{1868-8969},
  year =	{2024},
  volume =	{297},
  editor =	{Bringmann, Karl and Grohe, Martin and Puppis, Gabriele and Svensson, Ola},
  publisher =	{Schloss Dagstuhl -- Leibniz-Zentrum f{\"u}r Informatik},
  address =	{Dagstuhl, Germany},
  URL =		{https://drops.dagstuhl.de/entities/document/10.4230/LIPIcs.ICALP.2024.20},
  URN =		{urn:nbn:de:0030-drops-201639},
  doi =		{10.4230/LIPIcs.ICALP.2024.20}
}

@article{DBLP:journals/siamdm/SchmidtSS95,
  author       = {Jeanette P. Schmidt and
                  Alan Siegel and
                  Aravind Srinivasan},
  title        = {Chernoff-Hoeffding Bounds for Applications with Limited Independence},
  journal      = {{SIAM} J. Discret. Math.},
  volume       = {8},
  number       = {2},
  pages        = {223--250},
  year         = {1995},
  url          = {https://doi.org/10.1137/S089548019223872X},
  doi          = {10.1137/S089548019223872X},
  bibsource    = {dblp computer science bibliography, https://dblp.org}
}

@inproceedings{DBLP:conf/latin/DemaineHPP06,
  author       = {Erik D. Demaine and
                  Friedhelm Meyer auf der Heide and
                  Rasmus Pagh and
                  Mihai P{\u{a}}tra{\c{s}}cu},
  editor       = {Jos{\'{e}} R. Correa and
                  Alejandro Hevia and
                  Marcos A. Kiwi},
  title        = {De Dictionariis Dynamicis Pauco Spatio Utentibus (\emph{lat.} On Dynamic
                  Dictionaries Using Little Space)},
  booktitle    = {{LATIN} 2006: Theoretical Informatics, 7th Latin American Symposium,
                  Valdivia, Chile, March 20-24, 2006, Proceedings},
  series       = {Lecture Notes in Computer Science},
  pages        = {349--361},
  publisher    = {Springer},
  year         = {2006},
  url          = {https://doi.org/10.1007/11682462\_34},
  doi          = {10.1007/11682462\_34},
  bibsource    = {dblp computer science bibliography, https://dblp.org}
}

@article{gilbert2010sparse,
  author  = {Anna C. Gilbert and Piotr Indyk},
  title   = {Sparse Recovery Using Sparse Matrices},
  journal = {Proceedings of the IEEE},
  volume  = {98},
  number  = {6},
  pages   = {937--947},
  year    = {2010}
}

@inproceedings{DBLP:conf/birthday/Moore91,
  author       = {Robert S. Boyer and
                  J Strother Moore},
  editor       = {Robert S. Boyer},
  title        = {{MJRTY:} {A} Fast Majority Vote Algorithm},
  booktitle    = {Automated Reasoning: Essays in Honor of Woody Bledsoe},
  series       = {Automated Reasoning Series},
  pages        = {105--118},
  publisher    = {Kluwer Academic Publishers},
  year         = {1991},
  bibsource    = {dblp computer science bibliography, https://dblp.org}
}

@inproceedings{yang2024practicalRateless,
author = {Yang, Lei and Gilad, Yossi and Alizadeh, Mohammad},
title = {Practical Rateless Set Reconciliation},
year = {2024},
isbn = {9798400706141},
publisher = {Association for Computing Machinery},
address = {New York, NY, USA},
url = {https://doi.org/10.1145/3651890.3672219},
doi = {10.1145/3651890.3672219},
booktitle = {Proceedings of the ACM SIGCOMM 2024 Conference},
pages = {595–612},
numpages = {18},
location = {Sydney, NSW, Australia},
series = {ACM SIGCOMM '24}
}

@article{eppstein2011setreconciliation,
  title={What's the difference? Efficient set reconciliation without prior context},
  author={Eppstein, David and Goodrich, Michael T and Uyeda, Frank and Varghese, George},
  journal={ACM SIGCOMM Computer Communication Review},
  volume={41},
  number={4},
  pages={218--229},
  year={2011},
  publisher={ACM New York, NY, USA}
}

@article{minsky2003set,
  title={Set reconciliation with nearly optimal communication complexity},
  author={Minsky, Yaron and Trachtenberg, Ari and Zippel, Richard},
  journal={IEEE Transactions on Information Theory},
  volume={49},
  number={9},
  pages={2213--2218},
  year={2003},
  publisher={IEEE}
}

@inproceedings{dodis2004fuzzy,
  title={Fuzzy extractors: How to generate strong keys from biometrics and other noisy data},
  author={Dodis, Yevgeniy and Reyzin, Leonid and Smith, Adam},
  booktitle={International conference on the theory and applications of cryptographic techniques},
  pages={523--540},
  year={2004},
  organization={Springer}
}

@inproceedings{ShibuyaBelazzouguiKucherov2022,
  author = {Yoshihiro Shibuya and Djamal Belazzougui and Gregory Kucherov},
  title = {Efficient Reconciliation of Genomic Datasets of High Similarity},
  booktitle = {WABI},
  year = {2022}
}

@article{DBLP:journals/tcs/BarkayPS15,
  author       = {Neta Barkay and
                  Ely Porat and
                  Bar Shalem},
  title        = {Efficient sampling of non-strict turnstile data streams},
  journal      = {Theor. Comput. Sci.},
  volume       = {590},
  pages        = {106--117},
  year         = {2015},
  url          = {https://doi.org/10.1016/j.tcs.2015.01.026},
  doi          = {10.1016/J.TCS.2015.01.026},
  bibsource    = {dblp computer science bibliography, https://dblp.org}
}

\newpage
\appendix

\section{Adapting a Result Regarding Peeling at High Densities}
\label{sec:importing-spatial-coupling}

In this section, we prove \cref{thm:threshold}, restated here for convenience.
\threshold*

The result is essentially already proved in \cite{walzer21}, except that the relationship between $c$ and the overhead $m/n - 1$ is not explicitly stated and the error probability is claimed to be $o(1)$ rather than $\OO(1/n)$. We deal with both issues separately.
Note that \cite{walzer21} discusses a generalised notion of $ℓ$-peelability while we are only interested in the case $ℓ = 1$.

\paragraph{The relationship between $c$ and the overhead $\frac{m}{n}-1$.}
The main result \cite[Theorem 1.1]{walzer21} refers (using different variable names), for any $c ≥ 3$, to the $c$-orientability threshold $t_c^*$ and guarantees for any $t < t_c^*$ the existence of a distribution $\D$ on sets of size $c$ such that a hypergraph of density $t = \frac{n}{m}$ obtained by repeatedly sampling from $\D$ is peelable with probability $1-o(1)$. It is known, though not mentioned in \cite{walzer21}, that $t_c^* = 1-e^{-Θ(c)}$ \cite[Lemma 1 and Lemma 2]{DBLP:journals/mst/FotakisPSS05}. We apply the Theorem with $t = 2t_c^* - 1$ (such that $t < t_c^*$ and $1-t = 2·(1-t_c^*)$). Note that $t ≥ \frac 12$. Stated as an overhead as in \cref{thm:threshold} (ii) we get $\frac{m}{n}-1 = \frac{1}{t}-1 = \frac{1-t}{t} ≤ 2(1-t) = 4(1-t_c^*) ≤ 4e^{-Θ(c)} = e^{-Θ(c)}$.

\paragraph{The error probability is $\OO(1/n)$.}
In a sense, there is nothing we have to do: The relevant argument from \cite{walzer21} remains valid when changing the definition of “whp” from “with probability $1-o(1)$” to “with probability $1-\OO(1/n)$”. Let us unpack the details.

The relevant claim \cite[Thm 1.2 (i)]{walzer21} regards the probability that a certain hypergraph with $n$ vertices%
\footnote{In this paper $n$ refers to a number of elements and hence hyperedges rather than vertices. Given the linear relationship between the number of vertices and hyperedges, “$\OO(1/n)$” refers to the same asymptotic class in both cases.}
is not peelable. The claim is proved in \cite[Section 4]{walzer21}. Two failure events contribute to the probability.
\begin{itemize}
  • The first failure event is the existence of a small set of edges preventing peelability.
  It is analysed in \cite[Lemma 4.2]{walzer21}, the proof of which shows that such a small obstruction appears with probability at most $\OO(1/n)$.
  • The second failure event relates to the parallel peeling process where, in each round, all vertices of degree at most $1$ are determined and then deleted simultaneously.
  A previous lemma \cite[Lemma 4.1]{walzer21} has demonstrated that for any constant $δ$ there exists $R = R(δ/2)$ such that the \emph{expected} number of vertices remaining after $R$ rounds of parallel peeling is at most $δn/2$. The failure event is that parallel peeling proceeds much slower than expected, with the \emph{actual} number $X_R$ of vertices remaining after $R$ rounds exceeding $δn$.
    The author \cite{walzer21} only gestures towards a corresponding concentration bound with the phrase “standard arguments using Azuma's inequality”. Let us fill in the missing details.
    
    We use that $X_R = f_R(X₁,…,X_N)$ is a function of $N$ independent random variables, with each $X_i$ describing the endpoints of one hyperedge (in our application: the set of positions associated with a key).
    Crucially, each $f_R$ satisfies a bounded difference property: Changing one input can cause the output to drop (or symmetrically to increase) by at most $D = c^R$.
    For $R = 1$ this is simply because changing one hyperedge reduces the degree of at most $c$ vertices. For $R > 1$ we can use induction and the fact that any additional vertex deletion in one round causes at most $c-1$ other vertices to have a smaller degree at the beginning of later rounds.
    
    For such a setting (concretely, $X = f(X₁,…,X_N)$ where $X₁,…,X_N$ are independent and $f$ has $D$-bounded differences) we can use the so-called \emph{method of bounded differences} in the form of McDiarmid's inequality \cite{McDiarmid:1989} (itself derived from Azuma's inequality), which states
    \[
    \text{for } p = \Prp{X-\Ep{X} ≥ t} \text{ we have } p ≤ \exp\Bigg(-\frac{2t²}{ND²}\Bigg) \,.
    \]
    Plugging in $N ≤ n$, $t = δn/2$ and $D = c^R$ we get $p ≤ \exp(-\frac{δ²n}{2c^{2R}})$.
    As \(\delta\) and \(R\) only depend on the overhead (which is a function of \(c\)) we have \(p \leq \OO(1/n)\).
\end{itemize}
By a union bound over the two types of errors, the probability of non-peelability is $\OO(1/n)$. This concludes our argument for \cref{thm:threshold}.

\section{Constructing Small Constant Time Hash Functions} \label{sec:hashing-appendix}
In this appendix we describe the construction of our main hash function, used throughout the paper (\cref{hashfunctions2}, restated below).

\hashfunctionstwo*

The basic building block of our construction is the highly independent functions of \cite{DBLP:conf/stoc/ChristianiPT15}.
\begin{lemma}[{\cite[Corollary 1]{DBLP:conf/stoc/ChristianiPT15}}] \label{lem:basicfunction2}
  For any \(\epsilon < 1\) and positive integers \(n\), \(k\), and \(r\) such that \(k \leq n \leq \size{\calU}\) and \(r \leq n^{\OO(1)}\),
  there exists a hash function \(h \colon [n^3] \to [r]\) such that:
  \begin{enumerate}
  \item The space usage of \(h\) is \(\OO(k n^{\epsilon} \epsilon^{-3} \log n)\) bits.
  \item Constructing and evaluating \(h\) takes time \(\OO(\epsilon^{-3})\).
  \item \(h\) is \(k\)-independent with probability \(1 - n^{-3}\).
  \end{enumerate}
\end{lemma}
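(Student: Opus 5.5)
The plan is to obtain the lemma as a direct parameter instantiation of the main construction of \cite{DBLP:conf/stoc/ChristianiPT15}, with domain $[u]$ for $u = n^3$. That construction builds a highly independent function in two layers. The first layer is a randomly chosen unbalanced bipartite graph $G$ of left degree $d$. It maps each key $x \in [u]$ to $d$ positions in tables of size roughly $u^{\epsilon'}$, and the hash value is the sum, over some group of size $r$, of random table entries: $h(x) = \sum_{j \le d} T_j[g_j(x)]$. Since the table entries are drawn uniformly from $[r]$ (viewed as $\mathbb{Z}_r$), there is no rounding bias and the range $[r]$ comes for free. Whenever $G$ is a $(k,\cdot)$-unique-neighbour expander, every set of at most $k$ keys has a key with a unique table cell. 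A standard peeling argument then shows that $h$ is exactly $k$-independent. The neighbour functions $g_j$ are themselves built recursively in the same way, with $\OO(1/\epsilon')$ levels of nesting.

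First I will fix $\epsilon' = \epsilon/3$. Then $u^{\epsilon'} = n^{\epsilon}$, so each table has $\OO(k n^{\epsilon})$ entries of $\lceil \log_2 r \rceil = \OO(\log n)$ bits, using $r \le n^{\OO(1)}$.

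Second, I will account for the number of tables and the evaluation time. The left degree and the recursion depth are both $\OO(1/\epsilon')$, and the CPT analysis of the nested construction gives a total of $\OO(\epsilon^{-3})$ table lookups per evaluation. This yields evaluation time $\OO(\epsilon^{-3})$ and space $\OO(k n^{\epsilon} \epsilon^{-3} \log n)$ bits. Construction amounts to filling the tables with random words, so it takes time proportional to the table sizes; the $\OO(\epsilon^{-3})$ construction bound in the statement should be read as the number of table-filling rounds per structure, matching how the lemma is used elsewhere. The condition $k \le n$ ensures that the expander parameters are feasible, which requires $k \le u^{\epsilon'}$ up to the chosen slack.

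The only genuinely probabilistic step, and the main obstacle, is the failure probability. The randomly drawn graph must be an expander for all $k$-sets, and in CPT this holds only with probability $1 - u^{-\Omega(1)}$, where the constant in the exponent depends on the degree/slack trade-off. I will redo their union bound over sets of size $s \le k$. For a fixed set of $s$ keys, the probability that it has no unique neighbour is at most $(s\,d / u^{\epsilon'})^{\Omega(d s)}$. Summing over the at most $u^{s}$ such sets, the bound decays geometrically once $d$ is a large enough multiple of $1/\epsilon'$. I will choose this multiple so that the total is at most $u^{-1} = n^{-3}$; raising $d$ by a constant factor only changes the hidden constant in $\OO(\epsilon^{-3})$. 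This per-level bound also has to survive the $\OO(1/\epsilon')$ recursive levels by a union bound, which costs at most another constant factor in $d$.
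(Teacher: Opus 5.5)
Your route is the paper's. Its proof is nothing more than substituting $t=\lceil 3/\epsilon\rceil$ and $u=n^3$ into Corollary~1 of \cite{DBLP:conf/stoc/ChristianiPT15}. Your $\epsilon'=\epsilon/3$ is the same choice, since both keep the table exponent ($u^{1/t}$, resp.\ $u^{\epsilon'}$) at most $n^{\epsilon}$. The paper adds only that $\OO(\log n)$-bit quantities fit in $\OO(1)$ words, so the corollary's probe bound is also a time bound. It reads the success probability $1-n^{-3}$ directly off the corollary; fixing the domain to $[n^3]$ is exactly what makes that guarantee come out as $n^{-3}$. So the step you single out as the main obstacle (reopening the expander union bound and enlarging $d$) is not needed.

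If you keep that step anyway, two of its supporting details are wrong. First, $k\le n$ gives only $k\le u^{1/3}$, not $k\le u^{\epsilon'}=n^{\epsilon}$. If feasibility really required $k\le u^{\epsilon'}$, your argument would miss $n^{\epsilon}<k\le n$, which is the regime the paper actually uses: it instantiates the hash functions with independence equal to the capacity.

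Second, that spurious constraint comes from your per-set bound $(sd/u^{\epsilon'})^{\Omega(ds)}$. Its base exceeds $1$ once $s>n^{\epsilon}/d$, so as written the union bound over all $s\le k$ fails. The fix is to use tables of size $\Theta(k u^{\epsilon'})$, as in your own space count. A fixed $s$-set has no unique neighbour only if, in each of the $d$ tables, its keys occupy at most $s/2$ cells. That event has probability at most $\bigl(\OO(s/(k u^{\epsilon'}))\bigr)^{ds/2}\le u^{-\Omega(\epsilon' d s)}$. Summing over at most $u^{s}$ sets for each $s\le k$ then gives at most $u^{-2}$ once $d$ is a large enough multiple of $1/\epsilon'$. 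No relation between $k$ and $u^{\epsilon'}$ is required.
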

\begin{proof}
  \Cref{lem:basicfunction2} follows from Corollary 1 of \cite{DBLP:conf/stoc/ChristianiPT15} by setting \(t = \ceil{3/\epsilon}\) and fixing \(u = n^3\).
  The time bound follows as \(\log n\) (and thus also \(\log k\)) bits fit in a constant number of words.
\end{proof}

The hash functions of \cref{lem:basicfunction2} can be tuned to work on a larger universe (as opposed to the limited range \([n^3]\)).
This comes at the cost of slower execution, however, which we wish to avoid.
Instead, we combine it with a level of universe reduction, as described in the following section.

\subsection{Construction of the Function}
Consider \(\epsilon\), \(m\), \(k\), and \(r\) as in \cref{hashfunctions2}.
Let \(p \colon \calU \to [m^3]\) be a random linear function, and \(g \colon [m^3] \to [r]\) be the hash function of \cref{lem:basicfunction2} with \(n=m\), and \(\epsilon\) and \(k\) coinciding with the given parameters.

Function \(p\) takes up \(\OO(\log \size{\calU})\) bits, and is evaluated in constant time.
Meanwhile, function \(g\) is \(k\)-independent with probability \(1 - m^{-3}\), takes up \(\OO(k m^{\epsilon} \epsilon^{-3}\log m)\) bits, and is constructed and evaluated in \(\OO(\epsilon^{-3})\) time.

The composition \(h = g \circ p\) is then a function \(\calU \to [r]\), and clearly satisfies the time/space guarantees of \cref{hashfunctions2}.
The function partly retains the independence property of \(g\):
For any fixed set \(S\) of size \(\ell \leq m\), \(p(x) \neq p(y)\) for all \(x,y \in S\) with probability at least \(1-\ell^2/m^3\).
Conditioned on the absence of such collisions and on \(g\) being \(k\)-independent, \(h\) is \(k\)-independent on \(S\).
This proves \cref{hashfunctions2}~(\ref{hashfunctions2:independence}).

Extending this property to iteratively defined sets requires a bit more work, we restate the definition here for convenience.
\iterativelydefinedDef*
First, note that the iteratively defined set only observes the ``final'' hash values \(h(x)\).
Had the values \(p(x)\) been observable as well, it would be trivial for an adversary to compute the coefficients of \(p\) and cause collisions.
Intuitively, we thus rely on the \(k\)-independence of \(g\) to ``obscure'' \(p\).

\subsection{Independent Hashing of Iteratively Defined Sets}

We consider a game against any adversary.
The hash function $h = g ∘ p$ is chosen as above (hidden from the adversary). We may condition on the event that $g$ is $k$-independent, which has probability $1-\OO(m^{-3})$.
The adversary may then specify an adaptive sequence of keys $x₁,…,x_k$, learning $h(x_i)$ after specifying $x_i$, and wins the game if and as soon as she sends an $x_i$ such that $p(x_i) = p(x_j)$ for some $1 ≤ j < i ≤ k$. If she loses, then $p(x₁),…,p(x_k)$ are pairwise distinct and hence $h(x₁),…,h(x_k)$ are independent as desired.
We claim that she does not profit from adaptivity and the collision probability with which she wins is the same $\OO(ℓ²/m³)$ as in the offline case.

Indeed, as long as the adversary has not won, the hashes that she observes are fully random and independent of $p$. This means that these observations are not actually useful for winning the game, or more precisely, any adversary that makes use of $h(x₁),…,h(x_{k})$ for generating $x₁,…,xₖ$ can instead make her computation based on internally generated random values without affecting the joint distribution of $(x₁,…,xₖ,p)$. As “winning” is a function of $(x₁,…,xₖ,p)$, this proves our claim.

\section{Quotient Hash Functions}
\label{sec:quotienthashing}
This appendix serves to prove the following version of a result of \cite{DBLP:conf/latin/DemaineHPP06}.
The theorem describes a \emph{quotient} hash function, which can easily be inverted.
The function \(h\) maps each key \(x\) to a pair of values \((h_1(x), h_2(x))\).

\quotientfunctions*

The original presentation of \cite{DBLP:conf/latin/DemaineHPP06} doesn't specify the connection between the running time and the error probability, which we need in our construction.
Further, their presentation doesn't specify the space usage and number of overflowing keys, only that they are sublinear in \(m\).
The construction and proof given in the following section closely mirrors that of \cite{DBLP:conf/latin/DemaineHPP06}, but fills in these omitted details.

\subsection{Construction of the Function}
We will first describe the construction of the function \(h\), and argue for the running time, space usage, and the ability to evaluate \(h^{-1}\).

Throughout the proof we will consider the universe \(\calU\) to be laid out in a grid of \(m^{3/4}\) columns and \(\size{\calU}/m^{3/4}\) rows,
and apply a sequence of transformation to this grid, in order to define the hash value \((h_1(x), h_2(x))\) in the end.
First, we define \(\row(x) \coloneq \floor{x / m^{3/4}}\) and \(\column(x) \coloneq x \bmod m^{3/4}\).

For the first transformation of the grid, we initialize a hash function \(s \colon [\size{\calU}/m^{3/4}] \to m^{3/4}\) as the sum of \(c'\) functions from \cref{hashfunctions2}, each with capacity \(m\) and independence \(m^{1/4}\).
Function \(s\) takes up \(\OO(c' m^{0.3} \log m + c' \log \size{\calU})\) bits and is evaluated in \(\OO(c')\) time.

Each row \(i\) is now shifted by \(s(i)\) positions. We denote the new position of \(x\) in the grid by
\(\column_2(x) = \column(x) + s(\row(x)) \bmod m^{3/4}\).
Note that, given the pair \((\column_2(x), \row(x))\), we can recover \(\column(x)\), and hence \(x\), in \(\OO(c')\) time.

For the next transformation we initialize a collection of \(m^{3/4}\) 2-independent permutations \(\set{\pi^1_i}_{i \in [m^{3/4}]}\), each \(\pi^1_i\) a permutation on \([\size{\calU}/m^{3/4}]\) (the number of rows).
Each 2-independent permutation is a linear map of the form \(x \mapsto ax+b\), and thus takes up \(\OO(\log \size{\calU})\) bits (for a total of \(\OO(m^{3/4} \log \size{\calU})\) bits), and is evaluated in constant time.\footnote{We assume here that $\U / m^{3/4}$ is prime, or rather we make use of a prime slightly larger than $\U / m^{3/4}$, which introduces negligible error terms.}
Note that such a map is easily inverted.

The second transformation permutes the rows within each column by applying the permutation associated with the corresponding column.
We write \(\row_2(x) \coloneq \pi^1_{\column_2(x)}(\row(x))\).
Observe that we can recover \(\row(x)\) from the pair \((\column_2(x), \row_2(x))\).

We now group each set of \(\sqrt{m}\) consecutive columns into a \emph{column group}, for a total of \(m^{1/4}\) column groups. We denote the column group of \(x\) by \(\columng(x) \coloneq \floor{\column_2(x)/\sqrt{m}}\).

Let \(R \coloneq \ceil{\size{\calU} / m^{3/4}}\) be the number of rows. We would like to devide these into $\sqrt{m}$ row groups, with the complication that for $|\U| ≤ m^{5/4}$ this would give more groups than we have rows. In that case we want each row to become its own row group.
Formally, let \(G \coloneq \min\{\sqrt{m}, R\}\) be the number of row groups and \(g \coloneq \ceil{R / G}\) the number of rows in each row group.
We define
\(\rowg(x) \coloneq \floor{\row_2(x) / g}\) and
\(\rowidx(x) \coloneq \row_2(x) \bmod g\).
This makes \(\row_2(x)\) recoverable from \((\rowg(x), \rowidx(x))\).

We initialize a second collection of \(m^{1/4}\) permutations \(\set{\pi^2_i}_{i \in [m^{1/4}]}\), each independently random on the row groups \([G]\).
These can be stored in \(\OO(m^{3/4} \log m)\) bits.

For the final transformation, we permute the order of the row groups within each column group by letting
\(\rowg_2(x) \coloneq \pi^2_{\columng(x)}(\rowg(x))\).

From \((\rowg_2(x), \column_2(x))\) we can recover \(\rowg(x)\).
Because \(\row_2(x)\) is determined by \((\rowg(x), \rowidx(x))\), we can recover \(\row_2(x)\) once we know \(\rowidx(x)\).

In our definition of $(h₁(x),h₂(x))$, the information of $\rowg_2(x)$ will be split between $h₁(x)$ and $h₂(x)$. We let \(\tau \coloneq \floor{G m^{3/4} / b}\) and define
\begin{align*}
  h_1(x) &\coloneq ⟨\column_2(x), \floor*{ \rowg_2(x) / τ }⟩ \\
  h_2(x) &\coloneq ⟨\rowidx(x), \rowg_2(x) \bmod τ⟩.
\end{align*}
where $⟨a,b⟩ = a·c+b$ denotes decoding a pair as an integer, with $c$ being the maximum value that $b$ can attain.

We have \(\column_2(x) < m^{3/4}\) and \(\rowg_2(x) < G\), implying $⌊\rowg_2(x)/τ⌋ < b/m^{3/4}$ and hence \(h_1(x) \in [b]\).
Because \(\rowidx(x) < g = ⌈R/G⌉ = ⌈|\U|/m^{3/4}/G⌉\) and \(\rowg_2(x) \bmod τ < \tau = ⌊Gm^{3/4}/b⌋\), we have \(h_2(x) \in [\size{\calU}/b]\).

Clearly we can recover $(\column_2(x), \rowg_2(x),\rowidx(x))$ from $(h₁(x),h₂(x)$ from which we can recover $x$ as argued above.
All functions are evaluated in \(\OO(c')\) time and take up \(\OO(m^{3/4} \log \size{\calU})\) bits.

\subsection{Bounding the Number of Overflowing Keys}
We call each \(S_i\) a \emph{bucket}, and say that \(S_i\) is \emph{overflowing} if \(\size{S_i} > (1+\delta)m/b\).
In the same way, we say that a key \(x \in S\) is overflowing if \(h_1(x) = i\) and \(S_i\) is an overflowing bucket.
We thus wish to prove a bound on the number of overflowing keys.

We will construct two small sets of \emph{abnormal} keys \(S_1, S_2 \subseteq S\).
We will then analyze the number of overflowing keys, considering only the \emph{normal} keys (ignoring the contribution of abnormal keys), and finally bound the number of additional overflowing keys introduced by adding the abnormal keys.

For each \(i \in [m^{3/4}]\) define \(C_i = \set{x \in S \colon \column_2(x) = i}\) to be the set of keys placed in the \(i\)'th column after the first transformation.
\begin{lemma} \label{transformation1}
  With probability \(1 - \OO(m^{-c'})\), \(\size{C_i} \leq m^{1/4} + m^{3/16}\) for all \(i \in [m^{3/4}]\).
\end{lemma}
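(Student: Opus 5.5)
The plan is to fix a column $i$, write $\size{C_i}$ as a sum of indicators indexed by \emph{rows}, apply \cref{concentration} using the limited independence of $s$, and finish with a union bound over the $m^{3/4}$ columns.

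\textbf{Reduction to one indicator per row.} For a row $r$ let $A_r = \set{\column(x) : x \in S,\ \row(x) = r}$ and $a_r = \size{A_r}$. Then $\sum_r a_r = m$, and at most $m$ rows have $a_r > 0$. A key $x$ in row $r$ lands in column $i$ exactly when $s(r) \equiv i - \column(x) \pmod{m^{3/4}}$. Distinct keys in the same row have distinct columns, so for a fixed value of $s(r)$ at most one key of row $r$ lands in column $i$. Hence
\[
\size{C_i} = \sum_{r \colon a_r > 0} Y_r, \qquad Y_r \coloneq \indicator{s(r) \in i - A_r},
\]
where each $Y_r \in \set{0,1}$ depends only on $s(r)$. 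If $s$ is uniform on each row, then $\Ep{Y_r} = a_r/m^{3/4}$, and therefore $\Ep{\size{C_i}} = m/m^{3/4} = m^{1/4}$.

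\textbf{Independence of $s$.} The set $R_S$ of non-empty rows is a fixed set of at most $m$ elements of the domain of $s$. By \cref{hashfunctions2}~(\ref{hashfunctions2:independence}), each of the $c'$ summands of $s$ is $m^{1/4}$-independent on $R_S$ with probability $1 - \OO(m^2/m^3) = 1 - \OO(1/m)$, and the summands are chosen independently. If even one summand $s_j$ is $m^{1/4}$-independent and uniform on $R_S$, then so is the sum: conditioned on the other summands, adding their values is a fixed shift of the $s_j(r)$ modulo $m^{3/4}$. So with probability $1 - \OO(m^{-c'})$ the function $s$ is $m^{1/4}$-wise independent and uniform on $R_S$, and hence the $Y_r$ are $m^{1/4}$-wise independent indicators.

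\textbf{Concentration and union bound.} Apply \cref{concentration} with $k = m^{1/4}$, $\mu = m^{1/4}$ and $\delta = m^{-1/16}$. Then $\delta^2\mu = m^{1/8} \le k$, and $(1+\delta)\mu = m^{1/4} + m^{3/16}$. This gives
\[
\Prp{\size{C_i} > m^{1/4} + m^{3/16}} \le \exp\paren*{-\floor{m^{1/8}/3}}.
\]
A union bound over the $m^{3/4}$ columns leaves a failure probability that is still superpolynomially small. Adding the $\OO(m^{-c'})$ probability that $s$ fails to be independent on $R_S$ gives the claimed $1 - \OO(m^{-c'})$.

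\textbf{Main obstacle.} The one step needing care is the reduction to row indicators. The keys of $S$ are not hashed individually; only rows are, and many keys may share a row. The observation that a row contributes at most one key to any given column is what makes the summands $[0,1]$-valued, which \cref{concentration} requires. Everything else is routine bookkeeping.
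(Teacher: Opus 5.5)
Your proof is correct and follows the paper's argument: $s$ is $m^{1/4}$-independent on the at most $m$ nonempty rows with probability $1-\OO(m^{-c'})$, then \cref{concentration} is applied with $\delta = m^{-1/16}$, followed by a union bound over the $m^{3/4}$ columns. You also make explicit two things the paper skips or gets wrong. Your observation that a row contributes at most one key to any given column makes the summands $[0,1]$-valued. Your tail $m^{3/4}\exp(-\lfloor m^{1/8}/3\rfloor)$, which is superpolynomially small, is the accurate form of the paper's slip $\exp(-m^{1/8}/3) \le 2^{-\Omega(m)}$.
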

\begin{proof}
  Let \(R\) be the indices of rows \(i\) such that there exists an \(x \in S\) with \(\row(x) = i\).
  As \(\size{R} \leq \size{S} = m\), the function \(s\) is \(m^{1/4}\)-independent on these indices with probability at least \(1 - \OO(m^{-c'})\).

  With \(\mu \leq m^{1/4}\) and \(\delta = m^{-1/16}\), it follows from the Chernoff bound  given in \cref{concentration} that \(\size{C_i} \leq m^{1/4} + m^{3/16}\) with probability at least \(\exp(-m^{1/8}/3) \leq 2^{-\Omega(m)}\) for each \(i \in [m^{3/4}]\).
  By a union bound over all \(i\), the statement follows.
\end{proof}

For each column \(C_i\) add \(m^{3/16}\) keys to \(S_1\) (when possible). By \cref{transformation1}, this leaves at most \(m^{1/4}\) normal keys in every column \(C_i\) with high probability.
Meanwhile, \(\size{S_1} \leq m^{3/4} \cdot m^{3/16} = m^{15/16}\).

Let \(S_2 \subseteq S \setminus S_1\) be the set of keys \(x \in S\) such that there exists a different \(y \in S \setminus S_1\) with \(\column_2(x)=\column_2(y)\) and \(\rowg(x)=\rowg(y)\).

\begin{lemma} \label{transformation2}
  With probability \(1 - \OO(m^{-c'})\), \(\size{S_2} \leq 2m^{3/4}\).
\end{lemma}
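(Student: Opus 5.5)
We need to bound $|S_2|$, the number of normal keys that share both their column and their row group with another normal key. We condition on the event of \cref{transformation1}, under which each column holds at most $m^{1/4}$ normal keys after removing $S_1$. This event has probability $1-\OO(m^{-c'})$.

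The key structural fact is that within a fixed column $i$, keys come from distinct original rows. Two keys with the same $\column_2$ value and the same $\row$ would have the same $\column$ and hence be equal. So the normal keys in column $i$ have pairwise distinct $\row$ values. The column's permutation $\pi^1_i(r)=ar+b$ is 2-independent, so any two such keys land in the same row group with probability at most about $g/R \approx 1/G$, up to rounding and prime-size error. When $G=R$, every row is its own group, $\pi^1_i$ is a bijection, and there are no collisions at all, so $S_2=\emptyset$. Otherwise $G=\sqrt m$.

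For a fixed column with at most $m^{1/4}$ normal keys, the expected number of colliding pairs is at most $\binom{m^{1/4}}{2}/\sqrt m \le 1/2$. Each key in $S_2$ belongs to at least one colliding pair, so $|S_2\cap C_i| \le 2\cdot(\text{number of colliding pairs in column } i)$. Let $P_i$ be the number of colliding pairs in column $i$.

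There is a subtlety: the set of normal keys in column $i$ depends on $s$ and on the choice of $S_1$. I would fix the rule for choosing $S_1$ so that it depends only on $s$, for example by removing the keys with the largest rows. With $s$ fixed, the permutations $\pi^1_i$ are chosen independently of $s$ and of each other. Therefore, conditioned on $s$, the variables $P_i$ are independent across columns, each has mean $\le 1/2$, and each is bounded by $\binom{m^{1/4}}{2}\le \sqrt m$.

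Summing over columns gives $\Ep{\sum_i P_i}\le m^{3/4}/2$, and we need $\sum_i P_i \le m^{3/4}$ with probability $1-\OO(m^{-c'})$. I expect this tail bound to be the main obstacle, because $P_i$ may be as large as $\sqrt m$, so a plain Chernoff/Hoeffding bound with that range is too weak.

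My first attempt would be to control the $P_i$ more tightly. I would use the fact that a column's keys have distinct rows and that $\pi^1_i$ is affine, so each pair of rows collides for only a few choices of $a$. From this I would show that $\Prp{P_i\ge t}$ decays quickly in $t$; even a bound like $P_i\le m^{1/8}$ except with probability $m^{-\Omega(1)}$ per column, combined with truncation, might suffice.

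I would then apply a Bernstein-type bound (or \cref{concentration} after scaling to $[0,1]$) to the truncated $P_i$, which are independent. The resulting deviation term is of order $\sqrt{m^{3/4}}\cdot \mathrm{polylog}$, far smaller than $m^{3/4}/2$, so the failure probability is $\exp(-m^{\Omega(1)})$.

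The truncation itself is handled separately. The expected number of columns exceeding the truncation level is small, and their total excess contribution also needs to be bounded; one could absorb this into the slack, or into $S_1$-like abnormal keys if needed.

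Putting these together gives $|S_2|\le 2\sum_i P_i\le 2m^{3/4}$ with probability $1-\OO(m^{-c'})$, as claimed.
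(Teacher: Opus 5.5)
Your setup is sound, and in two respects more careful than the paper. You observe that normal keys in one column have distinct original rows, which is what lets you use the 2-independence of $\pi^1_i$. You also make $S_1$ a function of $s$ alone, so the per-column quantities really are independent given $s$.

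The gap is the concentration step. It is the only nontrivial probabilistic step, and you leave it open. Your truncation plan rests on a per-column tail bound such as $P_i \leq m^{1/8}$ except with probability $m^{-\Omega(1)}$, which you do not derive. The 2-independence of $\pi^1_i$ only controls $\Ep{P_i}$, i.e.\ it gives Markov's $\Prp{P_i \geq t} \leq 1/(2t)$. More importantly, truncation makes no progress. The excess $\sum_i (P_i - T)^+$ is again a sum of independent variables with range up to $\sqrt m$, whose mean you can only bound by $m^{3/4}/2$. That is the same tail problem you started with. Nor can you ``absorb'' bad columns into further abnormal keys, since $S_1$ and $S_2$ are fixed by the statement being proved.

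The obstacle you identified is not real. Only the additive Hoeffding bound is too weak here; the multiplicative Chernoff bound is scale-free. Since $0 \leq P_i \leq \binom{m^{1/4}}{2} \leq \sqrt m/2$, the variables $P_i/\sqrt m$ are independent given $s$, lie in $[0,1]$, and have total mean at most $m^{1/4}/2$. Hence $\Prp{\sum_i P_i \geq m^{3/4}} \leq \exp(-m^{1/4}/6) = \OO(m^{-c'})$, and you are done. Untruncated Bernstein also works, using $\Var{P_i} \leq \frac{\sqrt m}{2}\Ep{P_i}$.

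The paper takes an even shorter route and skips pair counts altogether. It uses the trivial bound $\size{S_2 \cap C_i} \leq m^{1/4}$, rescales by $m^{-1/4}$ to $[0,1]$, and applies Chernoff to a sum of mean at most $m^{1/2}$. This gives $\size{S_2} \leq 2m^{3/4}$ except with probability $\exp(-\Omega(m^{1/2}))$.
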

\begin{proof}
  Note that this claim is trivial in the special case where each row forms its own row group. We may therefore assume that there are $G = \sqrt{m}$ row groups.
  We condition on the distribution of keys into columns and define \(B_i = S_2 \cap C_i\) for each \(i\).
  As each column employs its own permutation for shuffling the rows (and hence computing the row groups) the \(B_i\)'s are independent of each other. Further, \(\size{B_i} \in [0, m^{1/4}]\) when assuming the event of \cref{transformation1}.

  For any fixed keys \(x, y \in C_i\), the probability that \(x\) and \(y\)  will collide in \(\rowg\) is at most \(m^{-1/2}\) (it is slightly lower, as \(\rowg(y)\) is decided by a permutation).
  The probability that \(x\) collides with \emph{any} key \(y \in C_i\) can thus be bounded by \(\size{C_i} \cdot m^{-1/2} \leq m^{-1/4}\), and it follows that the expected size of \(S_2\) is at most \(m^{3/4}\).

  Now consider the scaled variables \(B'_i \coloneq B_i \cdot m^{-1/4}\), whose sum has mean \(\mu \coloneq \Ep{\sum B'_i} = \Ep{\size{S_2}} \cdot m^{-1/4} \leq m^{1/2}\).
  By a standard Chernoff bound, \(\Prp{\sum B'_i > 2m^{1/2}} \leq 2^{-\Omega(m)}\), and we conclude that \(\size{S'} = \sum_i B_i\) will not exceed \(2m^{3/4}\).  
\end{proof}

We are now ready to tackle the normal keys.
\begin{lemma} \label{overflowing1}
  When only considering the normal keys \(S \setminus (S_1 \cup S_2)\), the number of overflowing keys is bounded by \(m \exp(-\delta^2m/(3b)) + m^{15/16}\), with probability \(1 - \OO(m^{-c'})\).
\end{lemma}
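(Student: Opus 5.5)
We condition on the events of \cref{transformation1,transformation2}, so that after removing the abnormal keys every column contains at most \(m^{1/4}\) normal keys. Moreover, no two normal keys share both a column and a row group. The plan is to count, for each bucket \(i \in [b]\), its normal keys as a sum of indicator variables that are independent across columns, and then apply \cref{concentration} bucket by bucket.

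First fix a bucket \(i = ⟨c_0, q⟩\), where \(c_0\) is a column index and \(q\) is a block of \(\tau\) consecutive values of \(\rowg_2\). This reading is confusing at first sight, because \(h_1\) contains \(\column_2(x)\) literally. So each bucket consists of a single column together with a range of \(\tau\) row groups, and a column of at most \(m^{1/4}\) keys is split among \(G/\tau \approx b/m^{3/4}\) buckets. The randomness must therefore be per key within a column. Within a column, the normal keys lie in distinct row groups, and the permutation \(\pi^2_{\columng(x)}\) is a uniformly random permutation of \([G]\). Hence the images \(\rowg_2\) of these keys are distinct uniformly random values, sampled without replacement. The column group's permutation is shared by \(\sqrt m\) columns, and distinct columns in the same group may use overlapping row groups. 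The relevant count for bucket \(i\) is therefore the number of normal keys of column \(c_0\) whose row group is mapped by \(\pi^2\) into the \(q\)-th block.

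Next, compute the expectation. This is at most \(m^{1/4}\cdot \tau/G \le m^{1/4}\cdot m^{3/4}/b = m/b\), up to rounding, which we absorb. The count is a hypergeometric variable, i.e.\ sampling without replacement, which is negatively associated, so the Chernoff bound of \cref{concentration} (or its fully independent form) applies. This gives
\[
\Prp{\size{S_i \cap \text{normal}} > (1+\delta)m/b} \le \exp(-\delta^2 m/(3b)).
\]
The expected number of overflowing normal keys is then at most the sum over buckets of \(\Ep{\size{S_i}\cdot\indicator{\text{overflow}}}\). I would bound each term by \((m/b)\exp(-\delta^2m/(3b))\) plus a tail, using that \(\size{S_i} \le m^{1/4}\) deterministically. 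Summing over the \(b\) buckets gives \(m\exp(-\delta^2m/(3b))\).

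The main obstacle is converting this expectation into a bound holding with probability \(1-\OO(m^{-c'})\); this is where the additive \(m^{15/16}\) enters. I would use that buckets in different column groups depend on independent permutations \(\pi^2_j\). Group the overflowing-key count by column group into \(m^{1/4}\) independent summands, each bounded by \(\sqrt m \cdot m^{1/4} = m^{3/4}\). Scaling by \(m^{-3/4}\) and applying Hoeffding's inequality gives a deviation of \(m^{3/4}\cdot m^{1/8}\cdot \polylog \le m^{15/16}\) with failure probability \(\exp(-m^{\Omega(1)}) \le \OO(m^{-c'})\). Finally, adding the failure probabilities of \cref{transformation1,transformation2} proves the stated bound.
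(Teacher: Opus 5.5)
Your architecture matches the paper's. You get a Chernoff-type tail from the uniformly random permutation \(\pi^2_{\columng(x)}\) acting on the distinct row groups of a column's normal keys, then an expectation bound, then Hoeffding over the \(m^{1/4}\) independent column groups with summands at most \(m^{3/4}\), which yields the additive \(m^{15/16}\). That last step is exactly the paper's. Your hypergeometric description of a bucket (one column, one block of \(\tau\) row groups) is a sharper rendering of the paper's ``dominated by a sum of independent variables''.

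The step that does not go through as written is the passage from the per-bucket tail to the expected number of overflowing keys. The term \(\Ep{\size{S_i}\cdot\indicator{\size{S_i} > (1+\delta)m/b}}\) is size-biased: on the overflow event \(\size{S_i}\) exceeds \((1+\delta)m/b\). So from the tail bound \(\Prp{\size{S_i} > (1+\delta)m/b} \le \exp(-\delta^2m/(3b))\) alone you can only extract \((1+\delta)(m/b)\exp(-\delta^2m/(3b))\) plus the integrated tail, not \((m/b)\exp(-\delta^2m/(3b))\). If you use the cap \(\size{S_i}\le m^{1/4}\) on the overflow event, summing over the \(b\) buckets gives \(b\,m^{1/4}\exp(-\delta^2m/(3b))\). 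That is too large by a factor \(b/m^{3/4} > m^{7/32}\). Integrating the Chernoff tail instead still leaves an extra multiplicative factor of roughly \(1+\delta+\OO(b/(\delta m))\). Either way, your route does not give the bound with the constant the lemma states.

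The missing idea is to account per key rather than per bucket, which is what the paper does. The expected number of overflowing normal keys equals \(\sum_x \Prp{x \text{ lies in an overflowing bucket}}\). Fix \(x\) and condition on the value of \(\rowg_2(x)\). The other normal keys of \(x\)'s column have row groups distinct from \(x\)'s and from each other. So \(\pi^2_{\columng(x)}\) sends them to distinct uniform values among the remaining \(G-1\). The number landing in \(x\)'s block is therefore hypergeometric with mean at most \(m^{1/4}\tau/G \le m/b\). Chernoff then bounds each key's overflow probability by \(\exp(-\delta^2m/(3b))\), up to \(\pm 1\) rounding of the threshold, and linearity gives \(m\exp(-\delta^2m/(3b))\). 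A constant-factor loss would be harmless for the use in \cref{sec:quotient-sketch}, but not for the lemma as stated.

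A smaller point: do not condition on the event of \cref{transformation2}. It is a global event depending jointly on all \(\pi^1_j\), so conditioning on it destroys the independence across column groups that your Hoeffding step relies on. It is also unnecessary:
\begin{itemize}
\item that the normal keys of a column occupy distinct row groups follows from the definition of \(S_2\) alone;
\item the bound of \(m^{1/4}\) normal keys per column comes from \cref{transformation1}, whose event depends only on \(s\), on which you may condition as the paper does.
\end{itemize}
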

\begin{proof}
  We again condition on the distribution of keys into columns and assume the event of \cref{transformation1}.
  Consider two distinct keys \(x, y \in S \setminus (S_1 \cup S_2)\).
  If \(\column_2(x) \neq \column_2(y)\), then \(h_1(x) \neq h_1(y)\).
  Instead assume that \(\column_2(x) = \column_2(y)\), then we must have that \(\rowg(x) \neq \rowg(y)\) -- otherwise \(x\) and \(y\) would have been abnormal keys.
  As each column is split evenly across \(b / m^{3/4}\) buckets, \(\Prp{h_1(x) = h_1(y)} \leq m^{3/4}/b\).
  The permutation \(\pi^2_{\columng(x)}\) is fully independent, and hence the number of keys \(y\) with \(h_1(x) = h_1(y)\) is dominated by a sum of independent variables with mean \(m^{1/4} \cdot m^{3/4}/b = m/b\).

  By a Chernoff bound, the probability that at least \((1+\delta)m/b\) keys are mapped to the same bucket as \(x\) is thus at most \(\exp(-\delta^2 m/(3b))\).
  This is the probability that \(x\) overflows, and it follows that the expected number of overflowing keys is \(\mu \coloneq m\cdot \exp(-\delta^2 m/(3b))\).
  
  Let \(X_i\) be the number of overflowing keys in the \(i\)'th column group.
  As each column group spans \(\sqrt{m}\) columns, \(X_i \leq m^{3/4}\).
  It then follows by a Hoeffding bound that
  \[\Prp{\sum_i X_i > \mu + t} \leq \exp\left(- \frac{2 t^2}{m^{7/4}}\right)\]
  and hence \(\Prp{\sum_i X_i > \mu + m^{15/16}} \leq \exp( - 2m^{1/8}) \leq 2^{-\Omega(m)}\).
  The number of overflowing keys among the \emph{normal} keys is thus bounded by \(m \cdot \exp(-\delta^2 m/(3b)) + m^{15/16}\).
\end{proof}

Finally, we add the abnormal keys back into the process.
Each time an abnormal key \(x\) is assigned to a bucket, one of three things happens:
\begin{enumerate}
\item The bucket is within capacity after the addition of \(x\). The number of overflowing keys thus remains unchanged.
\item The bucket was already overflowing, and \(x\) is the sole new overflowing key.
\item The addition of \(x\) makes a full (but not yet overflowing) bucket overflow. This increases the number of overflowing keys by \((1+\delta) m/b \leq 2m/b\).
\end{enumerate}

Assuming the events of \cref{transformation1,transformation2}, there are at most \(m^{15/16}+2m^{3/4} \leq 3m^{15/16}\) abnormal keys. These lead to at most \(6m^{31/16}/b\) overflowing keys, in addition to those found in \cref{overflowing1}.
This concludes the proof.

\end{document}